\newtheorem{theo}{Theorem}
\newtheorem{prop}{Proposition}[section]	
\newtheorem{conj}{Conjecture}
\newtheorem{lemm}[prop]{Lemma}
\theoremstyle{definition}
\newtheorem{rem}{Remark}
\newcommand{\field}[1]{\mathbb{#1}} 
\numberwithin{equation}{section}
\newcommand{\eps}{\varepsilon}
\newcommand{\II}{\mathcal{I}}
\newcommand{\DD}{\mathcal{D}}
\newcommand{\R}{\mathbb{R}}
\newcommand{\C}{\mathbb{C}}
\newcommand{\epsi}{\varepsilon}
\newcommand{\N}{\mathbb{N}}
\newcommand{\p}{\partial}
\newcommand{\te}{\theta}
\newcommand{\RR}{\mathcal{R}}
\newcommand{\Or}{\mathcal{O}}
\newcommand{\matrice}[1]{\left[  \begin{matrix} #1 \end{matrix}\right]}
\newcommand{\systeme}[1]{\left\{  \begin{matrix} #1 \end{matrix}\right.}
\newcommand{\lr}[1]{\left\langle #1 \right\rangle}
\newcommand{\vp}{{\varphi}}
\newcommand{\hh}{{\mathfrak{h}}}
\newcommand{\aaa}{{\mathfrak{a}}}
\newcommand{\SSS}{{\mathcal{S}}}
\newcommand{\VV}{{\mathcal{V}}}
\newcommand{\KK}{{\mathcal{K}}}
\newcommand{\tL}{{\tilde{L}}}
\newcommand{\tSSS}{{\tilde{\SSS}}}
\newcommand{\sgn}{\operatorname{sgn}}
\newcommand{\Hm}{{\boldmath{$(\operatorname{H}_m)$}} }
\newcommand{\Hmm}{{\boldmath{$(\operatorname{H}_{m-1})$}} }
\newcommand{\Hone}{{\boldmath{$(\operatorname{H}_1)$}} }
\newcommand{\UU}{\mathcal U}
\newcommand{\BB}{\mathcal{B}}
\newcommand{\WF}{WF}
\newcommand{\CC}{\mathcal{C}}
\newcommand{\tkappa}{{\tilde{\kappa}}}
\newcommand{\trho}{{\tilde{\rho}}}
\definecolor{dblue}{RGB}{57, 5, 179}
\author{G. Bal}
\address[Guillaume Bal]{University of Chicago, USA.}
\email{guillaumebal@uchicago.edu}
\author{S. Becker}
\address[Simon Becker]{University of Cambridge, United Kingdom.}
\email{simon.becker@damtp.cam.ac.uk}
\author{A. Drouot}
\address[Alexis Drouot]{University of Washington, USA.} 
\email{adrouot@uw.edu}
\author{C. Fermanian Kammerer}
\address[Clotilde Fermanian Kammerer]{Universit\'e Paris Est - Cr\'eteil Val de Marne, France.}
\email{clotilde.fermanian@u-pec.fr}
\author{J. Lu}
\address[Jianfeng Lu]{Duke University, USA.}
\email{jianfeng@math.duke.edu}
\author{A. Watson}
\address[Alexander Watson]{University of Minnesota, USA.}
\email{watso860@umn.edu}
\title{Edge state dynamics along curved interfaces}
\begin{document}

\begin{abstract}
We study the propagation  of wavepackets along weakly curved interfaces between topologically distinct media. Our Hamiltonian is an adiabatic modulation of Dirac operators omnipresent in the topological insulators literature.
Using explicit formula{s} for straight edges, we construct a family of solutions that propagates, for long times, unidirectionally and dispersion-free along the curved edge. We illustrate our results through various numerical simulations. 
\vspace{-.8cm}
\end{abstract}

\maketitle

\section{Introduction}

\begin{figure}[b]
\vspace{-5mm}
\floatbox[{\capbeside\thisfloatsetup{capbesideposition={right,center},capbesidewidth=3.7in}}]{figure}[\FBwidth]
{\hspace{-1cm}\caption{\label{fig:5}  Snapshots of the numerically computed dynamical analogue of an edge state -- the solution to \eqref{eq:0r} below. The  interface is $y_2=\tanh(y_1)$ and $\epsi = 10^{-1}$. The state propagates leftwards and dispersion-free along the interface.  
}}
{\begin{tikzpicture}
   \node at (0,0) {\includegraphics[width=7cm, height = 5cm]{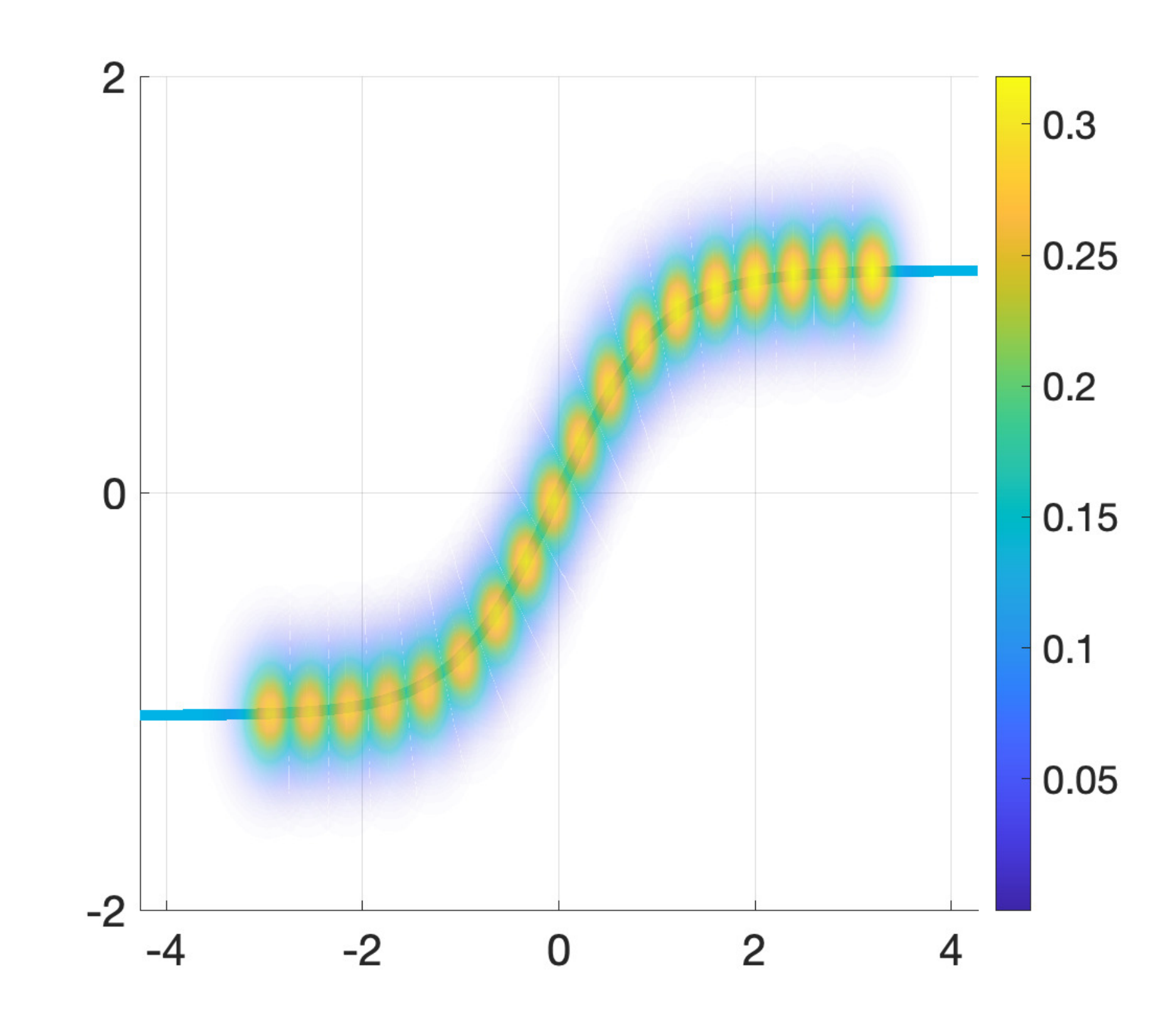}};
 \draw[domain=1.2:-1.2 , smooth, variable=\x, ultra thick,red,->] plot (\x-.2, {1.1*tanh(1.7*\x)+.7 });
\node[red] at (-1.4,1.2) {direction of};
\node[red] at (-1.4,.8) {propagation};
\node[red] at (1.7,.7) {$\uparrow$};
\node[red] at (1.7,.2) {initial};
\node[red] at (1.7,-.2) {state};
 \end{tikzpicture}} 
\end{figure}

Topological insulators are fascinating materials that are insulating in their bulk but support robust currents along their boundary. From a mathematical point of view, these properties are consequences of the bulk-edge correspondence, an index-like theorem that relates the net conductivity (an analytic index) to the bulk topology (a topological index). For straight interfaces, the currents are explicitly described in terms of edge states: steady waves with ballistic dynamics, confined between regions of distinct topology. 

In this work, we construct dynamical analogues of edge states for curved interfaces. Our model is a Dirac operator
\begin{equation}\label{eq:1b}
H = \matrice{ \kappa(x) & \epsi D_{x_1} - i \epsi D_{x_2} 
\\
\epsi D_{x_1} + i \epsi D_{x_2} & -  \kappa(x) }
\end{equation} 
where $D_{x_j}=-i \partial_{x_j}$, $\epsi > 0$ is a small semiclassical parameter and $\kappa$ is a varying mass term. Such Hamiltonians emerge in the effective theory of honeycomb structures  \cite{FLW16,LWZ19,Drouot:19}; more generally they model the generic dynamics of modes propagating along interfaces between topologically distinct insulators \cite{Drouot:21}.


Under a transversality condition -- $\nabla \kappa(x) \neq 0$ when $\kappa(x) = 0$ -- the set 
\begin{equation}\label{eq:5a}
  \Gamma =  \{ x \in \R^2 : \kappa(x) = 0\}
\end{equation}  
partitions $\R^2$ in regions of distinct local topology -- see \S\ref{sec:1.4} for details. A local interpretation of the bulk-edge correspondence suggests that non-trivial currents emerge along $\Gamma$. This paper develops the underlying quantitative theory: it provides detailed information on the associated quantum states, such as  their speed and profile. 

Specifically, we exploit the explicit structure of edge states available when $\kappa(x) = a_1 x_1 + a_2 x_2$ to construct an infinite-dimensional family of nearly steady solutions to $(\epsi D_t + H) \psi = 0$, in the limit $\epsi \rightarrow 0$. These emerge as the natural channels of conductivity: for long times, they propagate unidirectionally and coherently along $\Gamma$. We show that the curvature of $\Gamma$ plays a key role in limiting the lifetime of these solutions. We illustrate our results via various numerical simulations.

\subsection{Simplified main result}   
Throughout the paper, we assume that $\kappa$ and all its derivatives are bounded: $\kappa \in C^\infty_b(\R^2)$. In this introduction, we require moreover that 
\begin{equation}\label{eq:3r}
    y \in \Gamma \ \ \ \Rightarrow \ \ \ \big| \nabla \kappa(y) \big| = 1.
\end{equation}
This allows us to state a simplified version (Theorem \ref{thm:1}) of our main result (Theorem \ref{thm:2}). In \S\ref{sec:3}, we replace \eqref{eq:3r} by the more general transversality condition  \eqref{eq:7z}.

Fix $y_0 \in \Gamma = \kappa^{-1}(0)$ and define $y_t$ by the ODE
\begin{equation}\label{def:yt}
    \dot{y_t} = \nabla \kappa(y_t)^\perp,
\end{equation}
where $\nabla \kappa(y)^\perp$ denotes the $\pi/2$-counterclockwise rotation of $\nabla \kappa(y)$. Under \eqref{eq:3r}, $y_t$ is a unit speed parametrization of $\Gamma$. We let $\te_t$ be the angle between the tangent to $\Gamma$ at $y_t$ and the $x$-axis -- see Figure \ref{fig:4}. We use the notation $\lr{t} = (1+|t|^2)^{1/2}$.

\begin{figure}[b]
\floatbox[{\capbeside\thisfloatsetup{capbesideposition={right,center},capbesidewidth=2.8in}}]{figure}[\FBwidth]
{\hspace{-1cm}\caption{\label{fig:4} 
Schematic plot of an interface $\Gamma = \kappa^{-1}(0)$ between topologically distinct regions, together with $y_t$ and $\te_t$.
}}
{\begin{tikzpicture}
   \node at (0,0) {\includegraphics[]{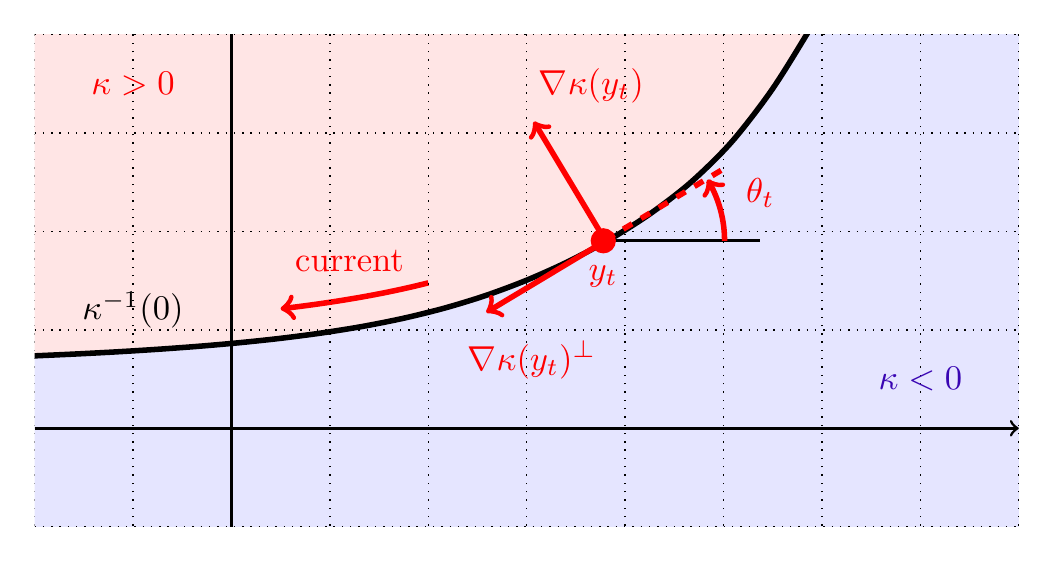}};
  \end{tikzpicture}}
\end{figure}

\begin{theo}\label{thm:1} Let $\kappa \in C^\infty_b(\R^2)$ satisfy \eqref{eq:3r} and $y_t$, $\te_t$ as above. The solution to  
\begin{equation}\label{eq:0r}
 (\eps D_t+H)\Psi_t=0, \ \ \ \    \Psi_0(x) = \dfrac{1}{\sqrt{\epsi}} \cdot \exp\left( - \dfrac{(x-y_0)^2}{2\epsi}\right) \matrice{ e^{-i\te_0/2} \\ -e^{i\te_0/2} }
\end{equation}
satisfies, uniformly for $\epsi \in (0,1]$ and $t > 0$:
\begin{equation}\label{eq:3s}
\Psi_t(x) = \dfrac{1}{\sqrt{\epsi}} \cdot \exp\left( - \dfrac{(x-y_t)^2}{2\epsi}\right) \matrice{ e^{-i\te_t/2} \\ -e^{i\te_t/2} } + \Or_{L^2}\left(\epsi^{1/2} \lr{t}\right).
\end{equation}  
\end{theo}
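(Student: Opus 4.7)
The plan is to verify by direct computation that the ansatz
\[
\Phi_t(x) := \dfrac{1}{\sqrt\epsi}\exp\!\Bigl(-\dfrac{(x-y_t)^2}{2\epsi}\Bigr)\matrice{e^{-i\te_t/2}\\-e^{i\te_t/2}}
\]
is an approximate solution of $(\epsi D_t+H)\Psi=0$, and then to promote the residual estimate to a bound on $\|\Psi_t-\Phi_t\|_{L^2}$ via a Duhamel-type stability argument. The ansatz is motivated by the explicit edge-state formulas mentioned in the introduction: when $\kappa$ is affine, $H$ admits a one-parameter family of zero-energy Gaussian edge modes transversely localised on the straight line $\kappa^{-1}(0)$, and their $k=0$ superposition yields an exact two-dimensional Gaussian solution of the Dirac equation translating rigidly at unit speed. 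My $\Phi_t$ is the local analogue, with the line replaced by the tangent to $\Gamma$ at $y_t$ and the trajectory $y_t$ determined by \eqref{def:yt}.

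First I would Taylor-expand $\kappa(x)=\nabla\kappa(y_t)\cdot(x-y_t)+r(x,y_t)$, with $r(x,y_t)=\Or(|x-y_t|^2)$, and apply $\epsi D_t+H$ to $\Phi_t$ directly. The point of the ansatz is that $v(\te_t)$ is precisely the zero-mode spinor of the frozen linearised Hamiltonian at $y_t$, so the linear-in-$(x-y_t)$ piece of the Dirac action produces a contribution that exactly cancels the transport term coming from $-i\epsi\partial_t\Phi_t$ once $\dot y_t=\nabla\kappa(y_t)^\perp$ is imposed. The rotation of the spinor contributes only at order $\epsi$, using the identity $v'(\te)=-\tfrac{i}{2}\sigma_3\,v(\te)$. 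What remains is
\[
(\epsi D_t+H)\Phi_t \;=\; \Bigl(r(x,y_t)-\dfrac{\epsi\dot\te_t}{2}\Bigr)\sigma_3\Phi_t,
\]
which is pointwise in the spinor direction $\sigma_3 v(\te_t)$, orthogonal to $v(\te_t)$. By Gaussian concentration at scale $\sqrt\epsi$ this residual has $L^2$ norm $\Or(\epsi)$.

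Feeding this into Duhamel's formula via the unitarity of $e^{-itH/\epsi}$ would give only $\|\Psi_t-\Phi_t\|_{L^2}=\Or(t)$, which is off by a factor $\epsi^{1/2}$. To reach the sharp rate $\Or(\epsi^{1/2}\langle t\rangle)$, I would refine the ansatz by constructing an $\Or(\epsi^{1/2})$ corrector $\epsi^{1/2}W_t$. The crucial geometric input is that the residual lies pointwise in the spinor direction orthogonal to every zero-energy edge mode. Combining the identity $H^2=\kappa^2+\epsi^2|D|^2-\epsi\,\nabla\kappa^\perp\cdot\vec\sigma$ with the fact that $\sigma_3 v(\te_t)$ is the $+1$-eigenvector of $\nabla\kappa(y_t)^\perp\cdot\vec\sigma$ shows that, on states of the form $\chi(x)\,\sigma_3 v(\te_t)$ with $\chi$ localised at scale $\sqrt\epsi$ around $y_t$, $H$ is invertible with $\|H^{-1}\|=\Or(\epsi^{-1/2})$. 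Partial inversion then yields $W_t$ with $\|W_t\|_{L^2}=\Or(1)$, and a Duhamel estimate on $\Psi_t-\Phi_t-\epsi^{1/2}W_t$ together with the triangle inequality delivers the theorem.

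The main obstacle is this last step. Because $H$ has no spectral gap at zero (its continuous spectrum covers all of $\R$), the partial inversion cannot be done by a simple spectral projection and must instead be carried out semiclassically, using both the spinor identity at $y_t$ and the $\sqrt\epsi$-scale transverse localisation. A further subtlety is to control $\partial_t W_t$, which \emph{a priori} picks up an unfavourable factor $\epsi^{-1/2}$ from the motion of the Gaussian centre. An alternative, potentially cleaner route would be to integrate by parts in time inside the Duhamel formula, gaining the desired $\sqrt\epsi$ directly from the effective gap of $H$ on the bulk spinor subspace without ever constructing the corrector explicitly.
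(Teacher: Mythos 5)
Your opening computation is correct and captures the central cancellation: using $\dot y_t=\nabla\kappa(y_t)^\perp$ and $v'(\te)=-\tfrac{i}{2}\sigma_3 v(\te)$, the linear-in-$(x-y_t)$ part of the symbol kills the transport term, leaving $(\epsi D_t+H)\Phi_t=\bigl(r(x,y_t)-\tfrac{\epsi\dot\te_t}{2}\bigr)\sigma_3\Phi_t$, of size $O_{L^2}(\epsi)$ and pointwise proportional to $\sigma_3 v(\te_t)$. You also correctly diagnose that plain Duhamel then gives only $O(t)$, so a corrector of size $\epsi^{1/2}$ is needed, and that the pointwise orthogonality to $v(\te_t)$ is what makes a solvable corrector equation possible. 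This is the same overall strategy as the paper.

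Where your proposal diverges — and where it leaves a genuine gap — is in \emph{how} the corrector is produced. You try to invert $H$ itself on the ``bulk spinor'' subspace, using $H^2=\kappa^2+\epsi^2|D|^2-\epsi\,\nabla\kappa^\perp\cdot\vec\sigma$ to extract an effective gap of size $\sqrt\epsi$. You yourself list the two obstructions to this: $H$ has no global spectral gap, and $\partial_t W_t$ is \emph{a priori} of size $\epsi^{-1/2}$ because the Gaussian centre moves. Neither is resolved by your sketch, and the ``integrate by parts in time'' alternative does not clearly escape them. The paper's route circumvents both issues by rescaling $x\mapsto y_t+\sqrt\epsi\,x$ \emph{before} inverting anything. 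After this rescaling the leading transport operator becomes the $\epsi$-independent operator $T_0=L_{\te_t,1}$ of \eqref{eq:Lte}, whose kernel and bounded inverse on the Schwartz orthogonal complement are worked out explicitly (Lemmas~\ref{prop:1} and~\ref{lem:1b}, via conjugation to creation/annihilation operators). The dangerous $\epsi^{-1/2}$ piece of $\partial_t W_t$ that worries you is precisely the term $-\dot y_t\cdot D_x$ absorbed into $T_0$; it does not spoil the estimate, it \emph{is} part of the operator one inverts. The solvability condition you observe pointwise (residual along $\sigma_3 v(\te_t)$) becomes, in the rescaled frame, the Fredholm condition $T_1a_0\in\ker(T_0)^\perp$, which Lemma~\ref{lem:1d} reduces to $D_t f_0=0$ — automatic for the time-independent Gaussian profile of Theorem~\ref{thm:1}. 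In short: your diagnosis of the required cancellations and of the size of the needed corrector is right, but the corrector construction as written will not close; the missing ingredient is the $\sqrt\epsi$-rescaling that converts the inversion of $H$ (hard: $\epsi$-dependent, no gap) into the inversion of the fixed transport operator $L_{\te_t,1}$ on the orthogonal complement of its explicit kernel (elementary via Hermite functions).
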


The initial data \eqref{eq:0r} is a  
Gaussian concentrated at $y_0$.  Theorem \ref{thm:1} shows that the generated solution remains (at leading order, for times $t \ll \epsi^{-1/2}$) a Gaussian, concentrated now at $y_t$. This identifies $t \mapsto y_t$ as an exotic quantum trajectory: it is not predicted by the standard results on propagation of semiclassical singularities. See \S\ref{sec:1.4} for a semiclassical discussion.

If $\Gamma$ is not asymptotically straight -- for instance if it is a loop -- numerical computations confirm that the Gaussian state approximation becomes less and less accurate, see  Figure \ref{fig:7}. In contrast, if $\Gamma$ is asymptotically straight -- as in e.g. the $\tanh$-like interface of Figure \ref{fig:5} -- the Gaussian state approximation can work for longer times, see Theorem \ref{thm:3}.

\begin{figure}[b]
{\begin{tikzpicture}
   \node at (-4.5,0) {\includegraphics[width = 9cm, height = 7.5cm]{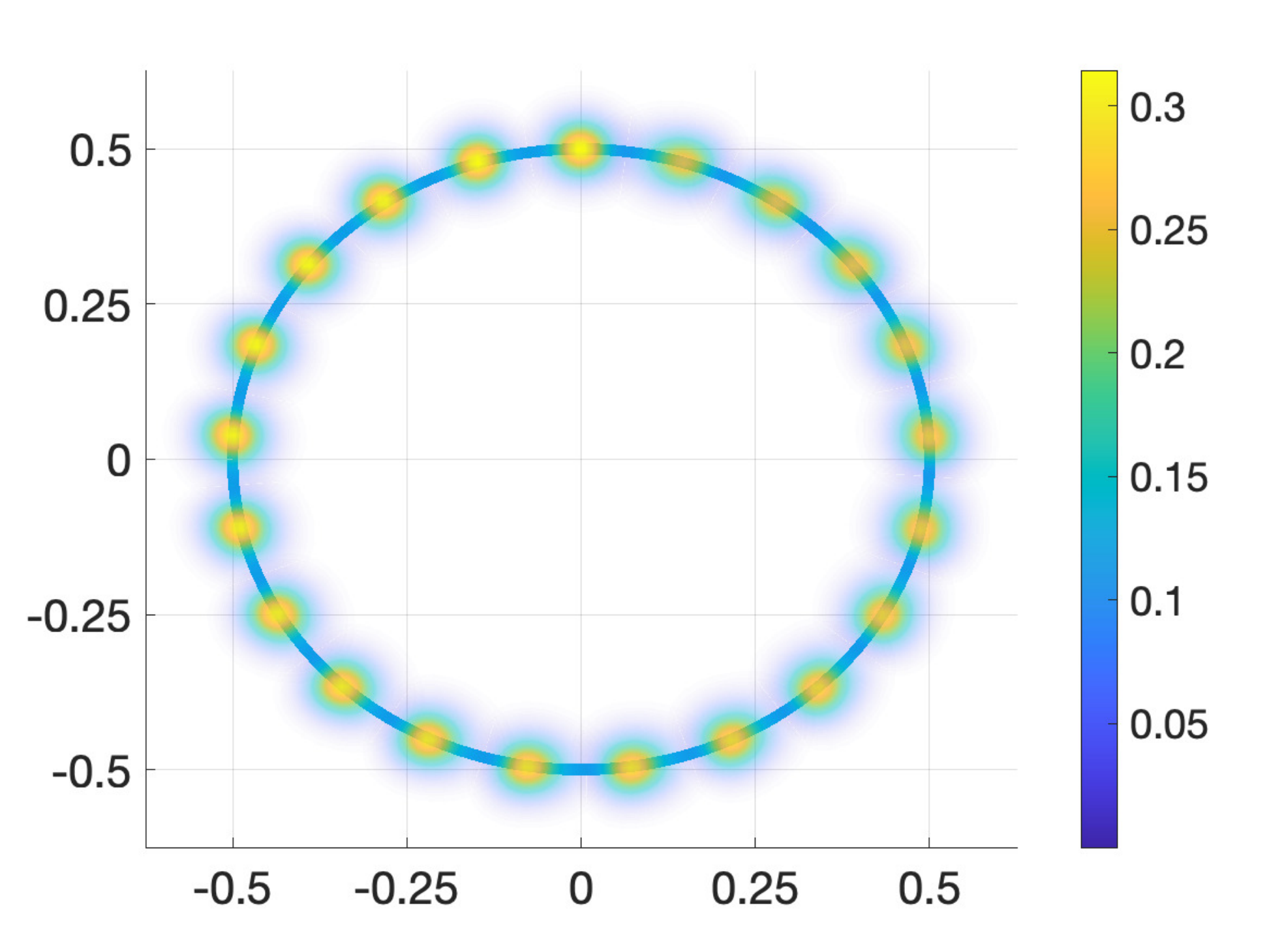}};
  \node at (4,0) {\includegraphics[width = 6.75cm]{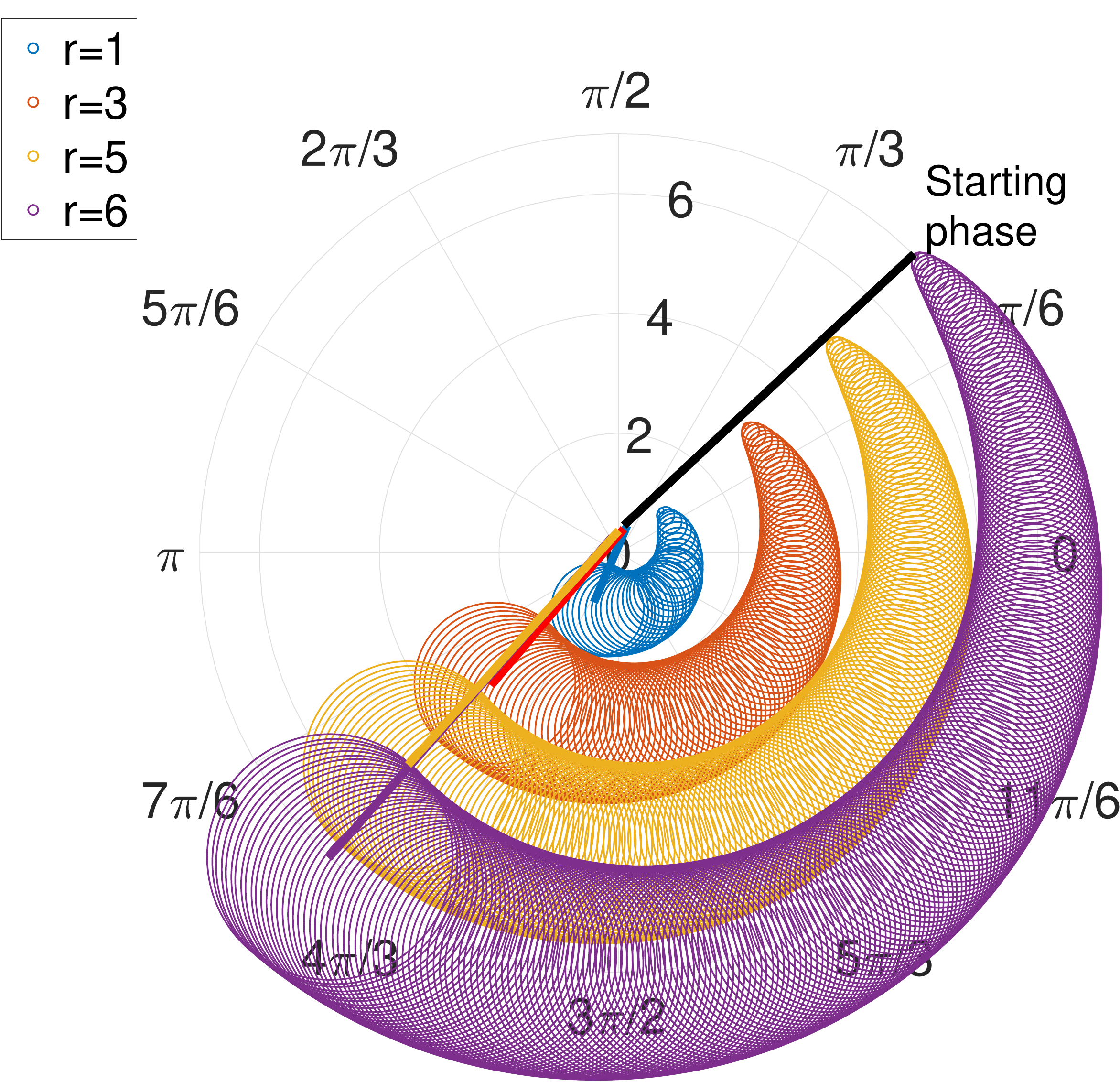}};

 \draw[domain=120:240 , smooth, variable=\x, ultra thick,red,->] plot ({2.1*cos(\x)-4.9}, {2.1*sin(\x)+.1});

\begin{scope}[shift={(-4.9,1.4)}]
\node[red] at (0,.7) {$\uparrow$};
\node[red] at (0,.2) {initial};
\node[red] at (0,-.2) {state};
\end{scope}

  \end{tikzpicture}}
  \caption{\label{fig:7}
  Left: numerical solution to $(\epsi D_t + H) \Psi_t = 0$ with Gaussian initial state for a circular interface with $\epsi = 10^{-2}$ and radius one. The 
    trajectory $y_t$ undergoes curvature effects for all times. This explains a dispersion stronger than for a $\tanh$-type interface. See also Figure \ref{fig:8} and Theorem \ref{thm:3}. 
Right: evolution of the phase of the first coordinate of the numerical solution for each snapshot -- corresponding to $-\te_t/2$ -- for different radii of the circle-interface. After a full revolution, the numerical phase difference is about $-\pi$, matching the theoretical prediction $-2\pi / 2 = -\pi$. This phase shift interprets as a Berry phase arising from adiabatically varying the parameter $\theta$ in the effective leading order operator $H_{\theta,r}$ \eqref{eq:Hte} from $0$ to $2 \pi$. 
} 
\end{figure}

We refer to Theorem \ref{thm:2} for a more general version of Theorem \ref{thm:1}. It constructs an infinite dimensional family of solutions to $(\epsi D_t + H) \Psi_t = 0$ with the same qualitative features as \eqref{eq:3s}: coherent states propagating unidirectionally, at unit speed and without dispersion, along $\Gamma$. Our motivation, explained in \S\ref{sec:1.3} and \S\ref{sec:1.4} below, is two-fold:
\begin{itemize}
    \item Identify dynamical analogues of topological edge states along bent interfaces;
    \item Study a semiclassical system whose matrix-valued symbol has repeated eigenvalues. \end{itemize}

\subsection{Numerical simulations} We illustrate our results with numerical simulations of the Dirac equation with a Gaussian initial data, for various types of interfaces. The corresponding pictures are snapshots of the dynamics, with the interface marked as a light blue curve.
\begin{itemize}
    \item Figure \ref{fig:5} and \ref{fig:7} are numerical confirmations of Theorem \ref{thm:1} for tanh-type and circle interfaces, respectively. Figure \ref{fig:7} also verifies that the phase shift after one revolution equals $2\pi/2 = \pi$.
    \item Figure \ref{fig:6} shows the evolution of other Gaussian states for $\tanh$-type interfaces. The initial data are concentrated like \eqref{eq:0r} but carried by a different vector. If this vector is orthogonal to that in \eqref{eq:0r}, the coherence is immediately lost. See Conjecture \ref{conj:1}.
    \item When the more general transversality condition \eqref{eq:7z} holds instead of \eqref{eq:3r}, the propagation is coherent in a relaxed sense.  Figure \ref{fig:1} -- a straight interface but a non-linear domain wall -- numerically validates Theorem \ref{thm:2}. 
    \item Figure \ref{fig:8} illustrates the limits of the dynamical analogues of edge states: for instance, they do not propagate around sharp corners. 
\end{itemize}

We use a Crank-Nicholson scheme to approximate the unitary group $e^{-itH}$, with Fourier spectral spatial discretization. 
The Matlab code containing the parameters used to obtain our figures can be found on GitHub.\footnote{\href{https://github.com/slb2604/Semiclassical-edge-states}{https://github.com/slb2604/Semiclassical-edge-states}}

\subsection{Physical motivations.} 

The Dirac equation appears in a wide variety of physical applications. Beyond its original role in the description of relativistic particles, 
it has emerged as a dominant model in the analysis of topological phases of matter \cite{VO,WI}.
The relativistic Dirac operator ($\kappa=0$ in our model) displays a generic band crossing; in contrast, adding a mass term opens an energy gap. In our model, the interface is the transition between the two insulating phases $\kappa<0$ and $\kappa>0$. These two phases happen to have different topological signatures; this generates unidirectional propagation along the interface.

This asymmetric transport is at the core of most physical applications in the fields of topological insulators  and topological superconductors \cite{BE,VO}. It is the physical manifestation of the quantum Hall effect \cite{BES94,ASS90} and its non-magnetic analogues \cite{C13,H88,JS20,HIA19,LD20,Sp18}. It also finds numerous applications in fields such as photonics, acoustics, and fluid mechanics \cite{LJS,PBMM,RH,RPZ,GJT}.  Broadly speaking, Dirac-type equations often offer the simplest continuum (macroscopic) description of transport in a narrow energy band near the band crossing \cite{BE,FC,VO}.

\subsection{Local topological indices and asymmetric transport.}\label{sec:1.3}  Strikingly, transport at interfaces between distinct topological environments is both asymmetric (a net overall flux propagates in a prescribed direction) and quantized.  We discuss here a theory of topological phases that interprets locally the state \eqref{eq:3s} in a topological way. We stress that this interpretation:
\begin{itemize}
    \item is valid only in the semiclassical regime $\epsi \ll 1$;
    \item is local: our construction works for all $\kappa$, even though in some scenarios $H$ is topologically trivial (for instance when $\Gamma$ is a closed curve).
\end{itemize}

These considerations use the leading-order approximation $H_y$ of $H$ at a point $y \in \R^2$:
\begin{align}
    H_y = \matrice{\kappa(y) & \epsi D_{x_1} - i \epsi D_{x_2} \\ \epsi D_{x_1} + i \epsi D_{x_2} & -\kappa(y) }, \ \ \ \ y \notin \Gamma;
    \\
    H_y = \matrice{-v_y^\perp \cdot (x-y) & \epsi D_{x_1} - i \epsi D_{x_2} \\ \epsi D_{x_1} + i \epsi D_{x_2} & v_y^\perp \cdot (x-y) }, \ \ \ \ y \in \Gamma,
\end{align}
where $v_y= \nabla\kappa(y)^\perp$ is tangent  to $\Gamma$ at $y$.
These emerge by replacing $\kappa(x)$ in \eqref{eq:1b} by its leading-order development at $y$: $\kappa(x) \simeq \kappa(y)$ if $y \notin \Gamma$ and $\kappa(x) \simeq \nabla\kappa(y) \cdot (x-y)$ if $y \in \Gamma$. These approximations are reasonable for $|x-y| = O(\epsi^{1/2})$: the scale of localization of \eqref{eq:3s}.

We observe that $H_y$ has a spectral gap near energy $0$ (i.e. it is an insulator) if and only if $y \notin \Gamma$. This identifies $\Gamma$ as the natural channel for conduction of energy. Following \cite{EG02,Kellendonk}, we measure the local conductivity at $y \in \Gamma$ via:
\begin{equation}\label{eq:0l}
   \II(H,y) =  \operatorname{Tr}_{L^2}\Big( i\big[H_y,f(v_y \cdot x)\big] g'\big( H_y \big) \Big), \ \ \ \ \ y \in \Gamma, 
\end{equation}
where $f$ and $g$ are smooth real functions  
 increasing from $0$ to $1$ with $f'$ and $g'$ compactly supported. 
Formally,
\begin{equation}\label{eq:9g}
    \II(H,y) = \dfrac{d}{dt} \operatorname{Tr}_{L^2} \left( e^{itH_y} f(v_y \cdot x) g'\big( H_y \big) e^{-itH_y} \right).
\end{equation}
Looking at $g'$ as a density of probability, $f(v_y \cdot x) g'\big( H_y \big)$ measures the probability of a quantum particle to lie in the half-plane $\{v_y \cdot x > 0\}$, per unit energy. Taking the trace in \eqref{eq:9g} corresponds to summing over all states. Hence $\II(H,y)$ describes the overall flux moving in the direction of $v_y$, per unit time and energy, at equilibrium.

It turns out that $2\pi \cdot \II(H,y) = 1$, see \cite{B19b} and Remark \ref{rem:1} below. This means that the evolution according to $H_y$ comes with a current propagating in the direction of $v_y$. Since $v_y$ is tangent to $\Gamma$ at $y$, $\Gamma$ emerges intuitively as a natural charge-carrier for $H$. Theorem \ref{thm:1} confirms these heuristics: in the regime $\epsi \rightarrow 0$, we construct a current propagating along $\Gamma$, with explicit  speed and profile.

The quantity \eqref{eq:0l} relates to bulk topological invariants via a universal principle: the bulk-edge correspondence \cite{Hatsugai,GP,prodan2016bulk,B20,drouot2020microlocal}. Following the physics literature \cite{H88,HIA19}, we  define a bulk index for $H_y$:
\begin{equation}\label{eq:0k}
 \BB(H,y) = \dfrac{\sgn \bigl( \kappa(y) \bigr)}{2}, \ \ \ \  y \notin \Gamma.
\end{equation}
When $H$ emerges as an effective Hamiltonian (for instance in graphene), $\BB(H,y)$ corresponds to the integrated Berry curvature near one of the Dirac point momentum, hence as part of the overall Chern integer \cite{Drouot:19b}. Direct interpretations of \eqref{eq:0k} as a Chern number include regularization of Dirac operator \cite{B19b} and more general bulk-difference invariant \cite{B20}. We refer to \eqref{eq:0k} as the local bulk index. It can also be defined by spatially truncating physical space formulas for the global Chern number \cite{2006Kitaev,2011BiancoResta,prodan2016bulk}; or via the spectral localizer \cite{2015Loring,LoringSchulz-Baldes2019}.

Since $\nabla\kappa$ points from negative to positive-index regions, we have for $y \in \Gamma$ and $\delta > 0$  sufficiently small:
\begin{equation}
  1 = 2\pi \cdot \II(H,y) =   \BB\big(H,y + \delta \nabla \kappa(y)\big) - \BB\big(H,y - \delta \nabla \kappa(y)\big).
\end{equation}
This is a local version of the bulk-edge correspondence: the local conductivity at $y$ is the difference between the local bulk indices across the interface.

The quantity $2\pi \cdot \II(H,y)$ counts currents algebraically according to their direction of propagation. It is independent of $y$ and stable against large perturbations of $H$; see, e.g. \cite{B19b,B20} and \cite{prodan2016bulk} for similar models. This explains its practical significance: even in the presence of strong perturbations or Anderson localization, there is always $2\pi \cdot \II(H,y) = 1$ more current propagating in the direction of $v_y$ rather than $-v_y$ \cite{B19a,prodan2016bulk}.
This clarifies the \textit{local} topological nature of the quantum state \eqref{eq:3s}. Let us stress again that our results hold locally in time: \eqref{eq:0l} is spectral in nature, describing an equilibrium, while  \eqref{eq:3s} is relevant for (long, but only transient) times $t \ll \epsi^{-1/2}$.

\subsection{Connection with semiclassical analysis}\label{sec:1.4} What makes the solution \eqref{eq:3s} special? The answer lies in semiclassical territory. In summary (with details provided below): if $\CC = \Gamma\times \{0\} \subset \R^2 \times \R^2$, then for times $t \ll \epsi^{-1/2}$:
\begin{itemize}
     \item[(i)] States initially microlocalized at $(y_0, \xi_0) \notin \CC$ come in pairs propagating in opposite directions;
     \item[(ii)] States initially microlocalized at $(y_0,\xi_0) \in \CC$ (i.e. like \eqref{eq:0r}, with a potentially different 2-vector) seem to either propagate non-dispersively in the direction of $\nabla \kappa^\perp$, or to disperse; see Figure \ref{fig:6}  and Conjecture \ref{conj:1}. 
\end{itemize}
This suggests that $\Gamma$ -- more precisely, its phase-space lift $\CC$ -- is the relevant channel for asymmetric propagation. 

We now provide a detailed account. We start by writing $H = \hh(x, \epsi D_x)$, where  
\begin{equation}\label{eq:5b}
    \hh(x,\xi) = \matrice{ \kappa(x) & \xi_1 - i\xi_2 \\ \xi_1 + i \xi_2 & -\kappa(x) }.
\end{equation}

\noindent Theorem \ref{thm:1} constructs solutions to 
$\big(\epsi D_t + \hh(x,\epsi D_x) \big) \phi_t = 0$   
for the data 
\[\phi_0(x) = \dfrac{1}{\sqrt{\epsi}} \cdot e^{\frac{i}{\epsi} x \xi_0} \  a \left( \dfrac{x-x_0}{\sqrt{\epsi}} \right), \ \ \ \ a \in \SSS\big(\R^2,\C^2\big)
\]
where $(x_0,\xi_0)$ belongs to the set $\CC$ defined by
\begin{equation}
    \CC = \big\{ (x,\xi) :  \ \kappa(x) = 0, \ \xi=0 \big\} \subset \R^4.
\end{equation} 
The function $\phi_0$ is known in the literature as a semiclassical wavepacket \cite{Combescure_Robert} with wavefront set $\WF_\epsi(\phi_0) = \{(x_0,\xi_0)\}$ -- see \cite[\S8.4]{Zw} for definitions and properties of wavefronts. The set $\CC$ corresponds to semiclassical eigenvalue crossings of $\hh(x,\xi)$: when $(x,\xi) \in \CC$, $\hh(x,\xi)$ has two degenerate eigenvalues. The systematic study of such semiclassical systems is a delicate problem. In the context of the Landau--Zener effect, which corresponds to a varying crossing energy, we refer to \cite{Col04} for a derivation of local normal forms, and to \cite{Hag94} for an explicit description of the transition.

This paper focuses on the dynamics of wavepackets localized along $\CC$ (note that the crossing energy is constant, equal to $0$).
One could have likewise studied the dynamics of wavepackets semiclassically concentrated at points $(x_0, \xi_0) \notin  \CC$. This is actually a much more standard problem because the eigenvalues of $\hh(x_0,\xi_0)$ are distinct: they are $\pm \lambda(x_0,\xi_0)$, where
\begin{equation}
\lambda(x,\xi) = \sqrt{\kappa(x)^2 + \xi_1^2 +\xi_2^2};
\end{equation}
we note that $\lambda$ does not vanish away from $\CC$. We diagonalize $\hh(x,\xi)$ for $(x,\xi)$ near $(x_0,\xi_0)$:
\begin{equation}
    \hh(x,\xi) = \mathfrak{U}(x,\xi) \matrice{-\lambda(x,\xi) & 0 \\ 0 & \lambda(x,\xi) } \mathfrak{U}(x,\xi)^{-1},
\end{equation}
where $\mathfrak{U}$ is a unitary $2 \times 2$ matrix that depends smoothly on $(x,\xi)$. Thus, after quantization, the system $\big(\epsi D_t + \hh(x,\epsi D_x) \big) \psi = 0$ splits semiclassically near $(x_0,\xi_0)$  in two nearly decoupled equations \cite{Teufel,Martinez_Sordoni}:
\begin{equation}
   \left(\epsi D_t + \matrice{-\lambda(x,\epsi D_x) & 0 \\ 0 & \lambda(x,\epsi D_x)  } + \Or(\epsi)\right) \matrice{\phi_+ \\ \phi_-} = 0. 
   \end{equation}
According to the classical-to-quantum correspondence, the wavefront set of $\phi_t$ follows the semiclassical trajectories of $\pm \lambda(x,\xi)$ -- see e.g. \cite[Theorem 12.5]{Zw}. These form two branches $\big(x_t^+, \xi_t^+\big)$ and $\big(x_t^-, \xi_t^-\big)$, that solve respectively
\begin{equation}\label{eq:3q}
    \dfrac{dx_t^\pm}{dt} = \pm \dfrac{\p\lambda}{\p \xi}\left(x_t^\pm,\xi_t^\pm\right), \ \ \ \ \dfrac{d\xi_t^\pm}{dt} = \mp \dfrac{\p\lambda}{\p x}\left(x_t^\pm,\xi_t^\pm\right).
\end{equation}

The Hamiltonian trajectories~\eqref{eq:3q} never reach $\CC$ because (a) the energy $\pm \lambda(x_0,\xi_0) \neq 0$ is conserved along them; and (b) $\CC$ is the zero set of the function $\lambda$. 
Hence, if $(x_0, \xi_0) \notin \CC$ then the semiclassical singularities of $\phi_t$ globally evolve according to the classical-to-quantum correspondence: they follow the Hamiltonian trajectories \eqref{eq:3q} and never reach $\CC$.

Moreover, the two branches in~\eqref{eq:3q} point (at $t=0$) in opposite directions: wavepackets concentrated away from $\CC$ have no preferred direction of propagation. Their contribution to an overall quantum flux cancel out. Hence, $\CC$ is the only phase-space channel that can support unidirectional waves.

This discussion connects various characterizations of the set $\CC$:
\begin{itemize}
    \item[\textit{(i)}] \textit{Semiclassical:} $\CC$ is the set of eigenvalue crossings of $\hh(x,\xi)$;
    \item[\textit{(ii)}] \textit{Energetic:} $\CC$ is the characteristic set of $\hh(x,\xi)$, i.e. the set of points $(x,\xi)$ such that $\det \hh(x,\xi) = 0$. 
    \item[\textit{(iii)}] \textit{Topological:} the local Chern number is not defined on $\Gamma = \kappa^{-1}(0) = \pi(\CC)$ (with $\pi(x,\xi) = x$) because the eigenvalues of $\mathfrak{h}(x,\xi)$ are degenerate on $\CC$. 
    \item[\textit{(iv)}] \textit{Dynamical:} Among phase-space subsets, $\CC$ is the only (maximal) candidate that may support unidirectional wavepackets. 
\end{itemize}
Because of $(i)$, the classical-to-quantum correspondence fails. Because of conservation of energy, $(ii)$ suggests that a state semiclassically concentrated along $\CC$ should remain this way: $\CC$ acts as a semiclassical waveguide. Theorem \ref{thm:1} provides the corresponding profile and speed. Under global assumptions on $\kappa$, the bulk-edge correspondence predicts a non-vanishing quantum flux between regions of different topology. From $(iii)$, $\CC$ acts as the natural topological interface in phase-space. According to $(iv)$, it is also the only channel that can support waves contributing to a non-trivial conductivity. 

A legitimate criticism to Theorem \ref{thm:1} is that it does not study the dynamics of all initial data localized along $\CC$: it focuses on those parallel to the two-vector $[e^{-i\te_0}, - e^{i\te_0}]^\top$. As demonstrated numerically in Figure \ref{fig:6} the data prepared along the orthogonal two-vector $[-e^{i\te_0}, e^{-i\te_0}]^\top$ appear to purely disperse along the interface. An investigation of the linear case suggests that the rate of dispersion is $\epsi^{-1/4} t^{-1/2}$. 

Thus, we conjecture that general initial data semiclassically localized along $\CC$ transit to the state \eqref{eq:3s}. To write a precise statement, we split vectors $[\alpha_1,\alpha_2]^\top \in \C^2$ according to:
\begin{equation}\label{eq:0s}
    \matrice{\alpha_1 \\ \alpha_2} = \lambda_1 \matrice{e^{-i\te_0/2} \\ - e^{i\te_0/2}} + \lambda_2 \matrice{e^{-i\te_0/2} \\ e^{i\te_0/2}}.  \ \ \ 
\end{equation}
We interpret the two terms in \eqref{eq:0s} as projections on the vector from \eqref{eq:0r} and its orthogonal.

\begin{figure}[t]
{\begin{tikzpicture}
   \node at (0,0) {\includegraphics[width=8cm]{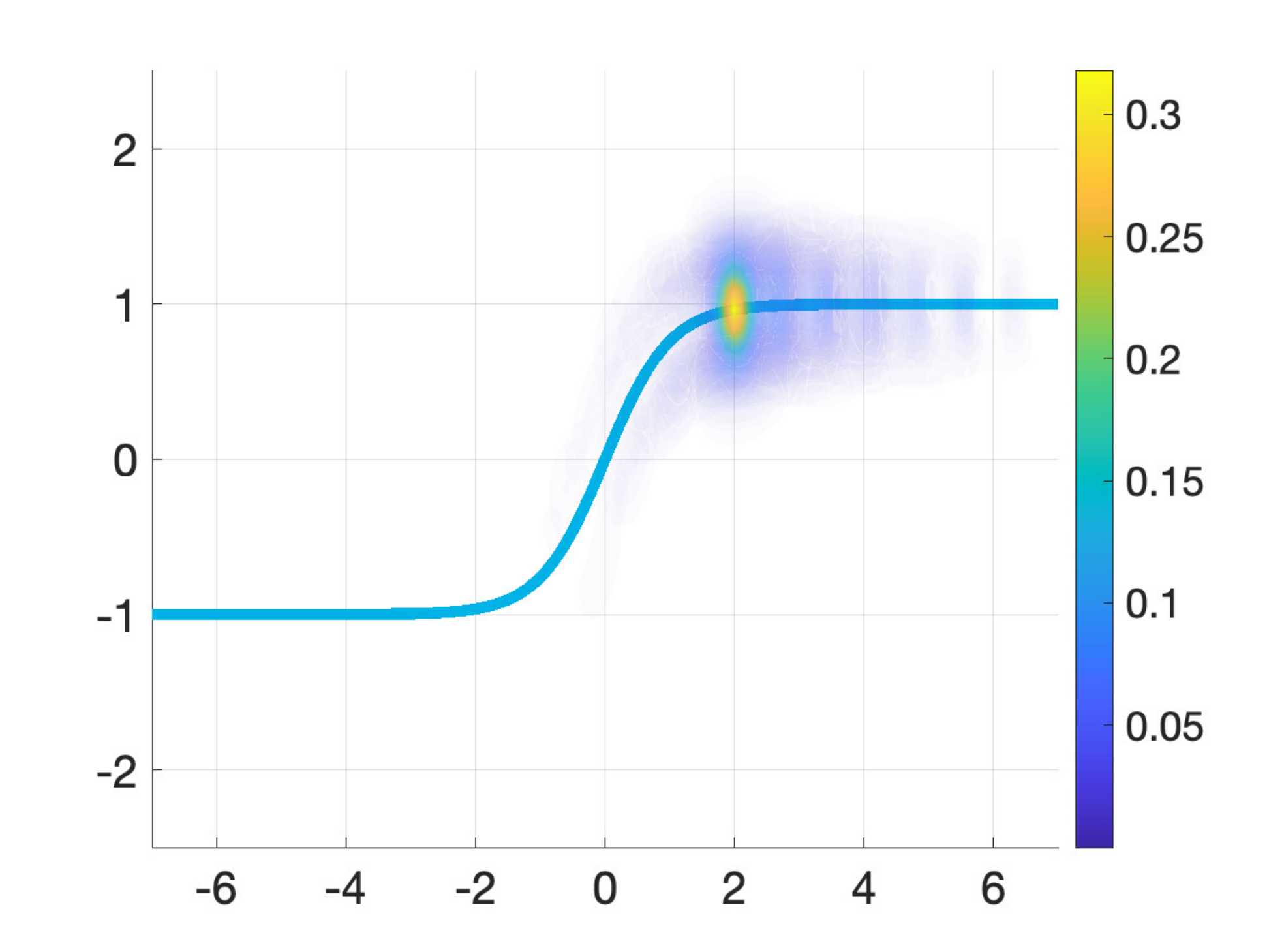}
   \includegraphics[width=8cm]{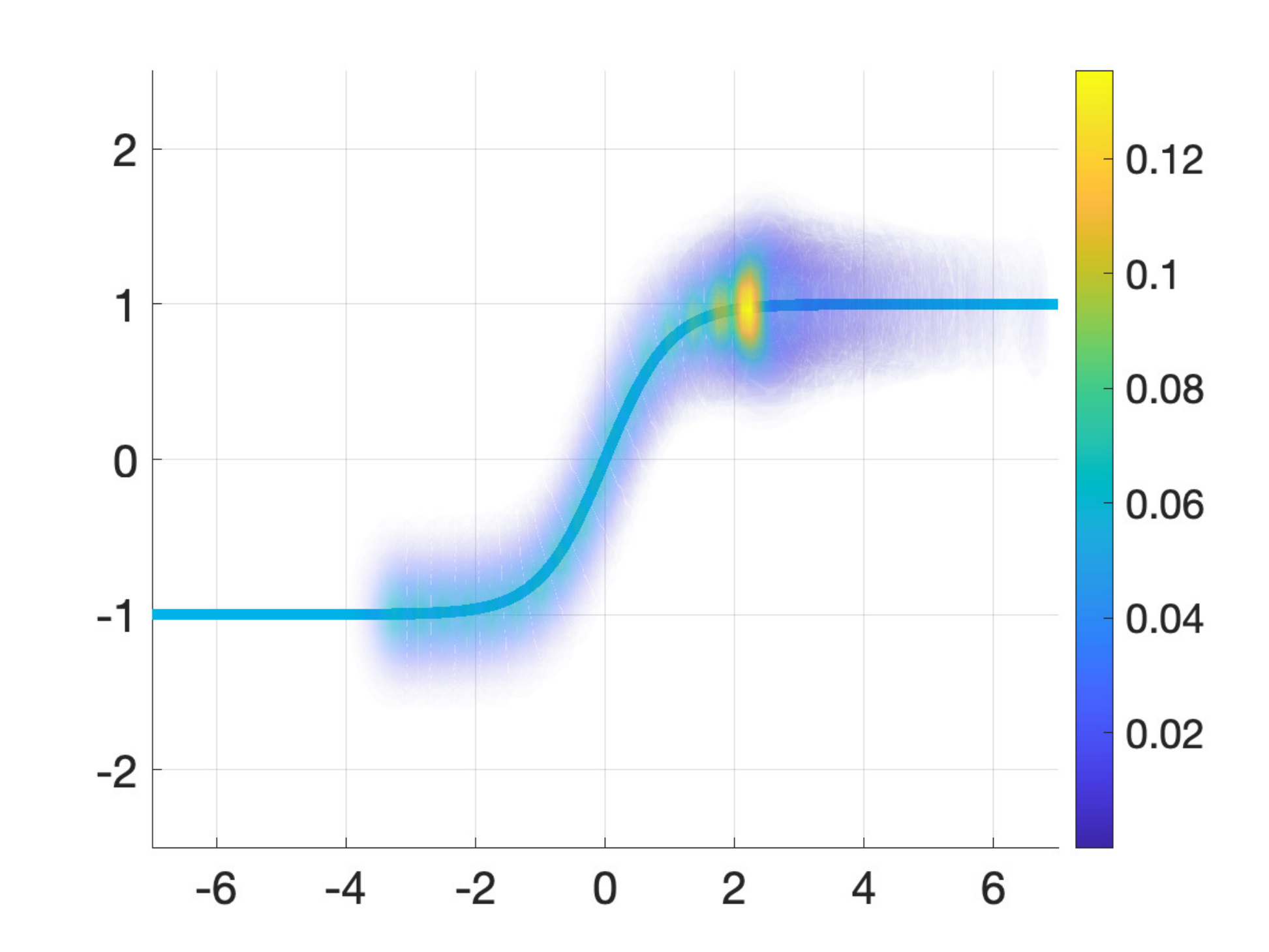}};

  \begin{scope}[shift={(4,0)}]
 \draw[domain=1:-1 , smooth, variable=\x, ultra thick,red,->] plot (\x-.2, {1.1*tanh(2.5*\x)+.7 });
 \end{scope}

 \begin{scope}[shift={(-3.45,-.2)}]
\node[red] at (0,.7) {$\uparrow$};
\node[red] at (0,.2) {initial};
\node[red] at (0,-.2) {state};
\end{scope}

\begin{scope}[shift={(4.75,-.2)}]
\node[red] at (0,.7) {$\uparrow$};
\node[red] at (0,.2) {initial};
\node[red] at (0,-.2) {state};
\end{scope}

  \node at (-6.5,2) {(a)};
  \node at (1.7,2) {(b)};
    \end{tikzpicture}}
  \caption{\label{fig:6} Solution to \eqref{eq:7n} for a $\tanh$-like interface  with (a) $[\alpha_1,\alpha_2] = [e^{-i\te_0/2},e^{i\te_0/2}]$ and (b) $[\alpha_1,\alpha_2] = [0,e^{-i\te_0/2}]$. Case (a) corresponds to $[\alpha_1,\alpha_2]$ orthogonal to the vector $[e^{-i\te_0/2},-e^{i\te_0/2}]$ from the initial data of Theorem \ref{thm:1}. This generates a purely dispersive wave along the interface. Case (b) corresponds to a linear combination of \eqref{eq:0r} and of Case (a): the solution splits into leftwards-propagating and dispersive components. 
} 
\end{figure}

\begin{conj}\label{conj:1} Fix $y_0 \in \Gamma$, $\alpha_1, \alpha_2 \in \C$, and $\lambda_1, \lambda_2$ defined according to \eqref{eq:0s}. There exists $\beta < 3/4$ such that under \eqref{eq:3r}, the solution $\Psi_t$ to 
\begin{equation}\label{eq:7n}
    (\epsi D_t + H) \Psi_t = 0, \ \ \ \ \Psi_0(x) = \dfrac{1}{{\sqrt \epsi}} \cdot
    \exp\left(-\frac{|x-y_0|^2}{2\epsi}\right) \matrice{\alpha_1 \\ \alpha_2}
\end{equation}
satisfies, uniformly in $\epsi \in (0,1]$ and $t > 0$:
\begin{equation}\label{eq:7rbis}
    \Psi_t(x) =  \dfrac{\lambda_1}{\sqrt{\epsi}} \cdot  \exp\left(-\frac{|x-y_t|^2}{2\epsi}\right) \matrice{e^{-i\te_t/2} \\ - e^{i\te_t/2}} + \Or_{L^2}\big(\epsi^{1/2}\lr{t}\big) + \Or_{L^\infty}\big(\epsi^{-\beta} \lr{t}^{-1/2} \big).
\end{equation} 
\end{conj}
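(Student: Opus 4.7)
The plan is to split the initial data by linearity using \eqref{eq:0s}: write $\Psi_t = \lambda_1 \Psi^{(1)}_t + \lambda_2 \Psi^{(2)}_t$, where $\Psi^{(1)}_t$ evolves from data carried by $[e^{-i\te_0/2},-e^{i\te_0/2}]^\top$ (which is exactly the setting of Theorem \ref{thm:1}) and $\Psi^{(2)}_t$ from data carried by the orthogonal two-vector $[e^{-i\te_0/2}, e^{i\te_0/2}]^\top$. Theorem \ref{thm:1} supplies the coherent Gaussian term in \eqref{eq:7rbis} together with its $\Or_{L^2}(\epsi^{1/2}\lr{t})$ error, so the entire remaining task is to establish the dispersive bound $\Psi^{(2)}_t = \Or_{L^\infty}(\epsi^{-\beta}\lr{t}^{-1/2})$ for some $\beta<3/4$.

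I would first treat the purely linear case $\kappa(x) = \nabla\kappa(y_0)\cdot (x-y_0)$, targeting the sharper rate $\epsi^{-1/4}\lr{t}^{-1/2}$ anticipated in the discussion preceding the conjecture. After conjugating by the rotation aligning $\nabla\kappa(y_0)$ with the $x_2$-axis and rescaling $x = y_0 + \sqrt{\epsi}\,\tilde x$, $t = \sqrt{\epsi}\,\tilde t$, the equation reduces to $(D_{\tilde t}+\widetilde H)\widetilde \Psi = 0$ with the fixed Dirac operator $\widetilde H = \matrice{\tilde x_2 & D_{\tilde x_1}-iD_{\tilde x_2} \\ D_{\tilde x_1}+iD_{\tilde x_2} & -\tilde x_2}$. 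Fourier transform in $\tilde x_1$ (dual variable $\eta$) diagonalizes $\widetilde H$ into a family of 1D Dirac operators with an edge-mode eigenvalue $\eta$ (whose eigenvector is precisely $[e^{-i\te_0/2},-e^{i\te_0/2}]^\top$ times a shifted Gaussian) and relativistic Landau levels $\pm\sqrt{\eta^2 + 2n}$, $n\geq 1$. By orthogonality, $\widetilde\Psi^{(2)}_0$ projects trivially onto the edge branch in every fiber: the evolution is spectrally supported entirely on the bulk Landau modes.

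Each Landau sector is a one-dimensional Klein-Gordon-type propagator with dispersion $\pm\sqrt{\eta^2+2n}$, to which standard stationary-phase or van der Corput arguments apply to yield $|\tilde t|^{-1/2}$ decay in $L^\infty_{\tilde x_1}$ for Schwartz tangential data. The fact that the Fourier transform of $e^{-|\tilde x|^2/2}$ is a Gaussian in $(\eta,\tilde x_2)$ ensures both the convergence of the Hermite expansion in $n$ and integrability in $\eta$. Undoing the rescaling introduces a factor $\epsi^{1/4}$ from $\tilde t^{-1/2} = \epsi^{1/4}|t|^{-1/2}$ and a factor $\epsi^{-1/2}$ from the initial $L^\infty$ normalization, producing the flat rate $\epsi^{-1/4}\lr{t}^{-1/2}$, comfortably below the threshold $\epsi^{-3/4}\lr{t}^{-1/2}$.

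The main obstacle is transferring this dispersive estimate to the genuinely curved Hamiltonian $H$ across the long time window $t\ll \epsi^{-1/2}$. I would attempt a moving-frame adiabatic reduction centered at the classical trajectory $y_t$: freeze $H$ at $H_{y_t}$ along the flow, control the Taylor remainder through the second derivatives of $\kappa$ and the curvature of $\Gamma$, and close the estimate via Duhamel's principle against the flat dispersive bound. The delicate points will be (i) proving a dispersive estimate uniform in the Landau index $n$ and in the base-point $y_t$; (ii) showing that the spectral projector onto bulk modes is almost invariant along the trajectory, so that leakage into the edge branch remains small enough that the margin $3/4 - 1/4 = 1/2$ absorbs all Duhamel losses; (iii) handling the potential interference between $\Psi^{(1)}_t$ and $\Psi^{(2)}_t$ in the $L^\infty$ topology, which is considerably less forgiving than the $L^2$ control provided by Theorem \ref{thm:1}.
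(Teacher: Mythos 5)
This statement is posed in the paper as \emph{Conjecture}~\ref{conj:1}; the authors do not prove it, so there is no paper argument to compare your proposal against. What the paper does supply is the heuristic you are following: the decomposition \eqref{eq:0s}, Theorem~\ref{thm:1} for the $\lambda_1$-component, the numerical evidence of Figure~\ref{fig:6} that the orthogonal component disperses, and the one-sentence remark preceding the conjecture that ``an investigation of the linear case suggests that the rate of dispersion is $\epsi^{-1/4}t^{-1/2}$.'' Your proposal fleshes out that linear-case computation correctly: after the $\sqrt\epsi$ parabolic rescaling, the fiberwise operator $\widetilde H(\eta)=\eta\sigma_1+\tilde x_2\sigma_3+D_{\tilde x_2}\sigma_2$ has the linear edge branch with eigenvector $\propto e^{-\tilde x_2^2/2}[1,-1]^\top$ (independent of $\eta$, so ``shifted Gaussian'' is a slip, but harmless) and the relativistic Landau branches $\pm\sqrt{\eta^2+2n}$, $n\ge1$; the Gaussian data carried by $[1,1]^\top$ is orthogonal to the edge branch in every fiber; and stationary phase on each Landau branch plus the super-exponential Hermite decay of the Gaussian gives $\lr{\tilde t}^{-1/2}$, which undoes to $\epsi^{-1/2}\lr{t/\sqrt\epsi}^{-1/2}\lesssim\epsi^{-1/2}\lr{t}^{-1/2}$, i.e.\ $\beta=1/2<3/4$. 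This part is essentially a correct (if compressed) computation, consistent with the authors' own remark.

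The genuine gap is the one you yourself flag, and it is not small: the entire content of the conjecture is the transfer of this dispersive estimate from the frozen linear Hamiltonian $H_{y_0}$ to the curved Hamiltonian $H$ over the full window $t\lesssim\epsi^{-1/2}$, and your proposal only gestures at how this might go. Three concrete obstructions: (i) a Duhamel iteration against the flat dispersive bound loses a factor proportional to $\int_0^t \|$perturbation$\|\,ds$, and the leading perturbation (the quadratic Taylor remainder $\nabla^2\kappa(y_s)$ together with the rotation of the frame) is $O(\sqrt\epsi)$ in the rescaled variables, so over times $t\sim\epsi^{-1/2}$ the naive loss is $O(1)$ relative to the main term — the margin $3/4-1/2=1/4$ you invoke does not obviously absorb this without some cancellation or improved bootstrap; (ii) the curvature term couples the edge and bulk sectors, and controlling the leakage of the dispersive part into the coherent channel (and vice versa) over long times requires an adiabatic-invariance statement for the fiberwise projectors that is not established and is not obviously uniform in the Landau index $n$; (iii) the only a~priori estimate the paper's machinery provides (Lemma~\ref{lem:1a}) is an $L^2$-unitarity bound, which is topology-mismatched with the $L^\infty$ estimate you need, so the Duhamel closure would have to be run in a mixed or weighted norm for which no dispersive machinery is set up here. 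These are exactly the difficulties that make the statement a conjecture rather than a theorem, and your sketch names them but does not resolve them.
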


The $L^\infty$-remainder in \eqref{eq:7rbis} is smaller than the leading order term as long as $\epsi^{-\beta} t^{-1/2} \ll \epsi^{-1/2}$, that is $\epsi^{1-2\beta} \ll t$. Hence, according to this conjecture, $\Psi_t$ is well approximated by the Gaussian term in \eqref{eq:7rbis} for times $\epsi^{1-2\beta} \ll t \ll \epsi^{-1/2}$ (with $\beta < 3/4$ ensuring that such times exist). This indicates that dynamical edge states generically emerge from the evolution of initial data localized along $\CC$. See \S\ref{sec:3.1} for a more general version of Conjecture \ref{conj:1}.

\subsection{Organization of the paper}

We organize the paper as follows:
\begin{itemize}
    \item In \S\ref{sec:2} we review edge state theory for Dirac operators with straight domain walls, i.e. $\kappa(x) = a \cdot x$ in \eqref{eq:1b}.
     \item  In \S\ref{sec:3} we derive the analogues of edge states for weakly curved interface. Specifically, we construct a infinite-dimensional family of solutions to $(\epsi D_t+H)\Psi_t = 0$ 
     that propagates along the topological interface $\Gamma$ for times up to $\epsi^{-1/2}$. 
     The key ingredient is a local approximation of $H$ by Dirac operators with straight interfaces. 
     \item In \S\ref{sec:4} we investigate, under a geometric condition of $\kappa$, how the curvature of $\Gamma$ affects the propagation of  wavepackets.
\end{itemize}

\subsection*{Notations} 
\begin{itemize}
\item We use $\sigma_1, \sigma_2, \sigma_3$ for the standard Pauli matrices:
\begin{equation}
    \sigma_1 = \matrice{0 & 1 \\ 1 & 0}, \ \ \ \ \sigma_2 = \matrice{0 & -i \\ i & 0}, \ \ \ \ \sigma_3 = \matrice{1 & 0 \\ 0 & -1}.
\end{equation}
\item A smooth function $f$ on $\R^2$ belongs to $C^\infty_b(\R^2)$ if it is uniformly bounded, together with its derivatives at all order.
\item A function $f \in C^\infty_b(\R^2)$ belongs to $\mathcal{S}(\R^2)$ if  $x^\alpha \p_x^\beta f$ is uniformly bounded for any $\alpha, \beta$. We provide $\SSS(\R^2)$ with the family of seminorms $\big|x^\alpha \p_x^\beta f\big|_{L^\infty}$.
\item The operators $D_{x_j}$ and $D_t$ are defined by  $D_{x_j} = -i\partial_{x_j}$ and $D_t = -i \partial_t$.
\item We use the japanese bracket notation: $\lr{x} = \sqrt{1+|x|^2}$. 
\item We denote by $\ker_\VV(A)$ the kernel of a linear operator $A$ acting on a vector space $\VV$. 
\item If $v \in \R^2$, $v^\perp$ is the counterclockwise $\pi/2$-rotation of $v$. 
\item $\lr{u,v}_{L^2} = \int_{\R^2} \overline{u} v.$ 
\item For $f$ in a normed vector space $\mathcal{X}$, we write $f=\Or_{\mathcal{X}}(\epsi)$ if $|f|_{\mathcal{X}} \leq C \epsi$ for some constant $C>0$ independent of $\epsi$.
\item Given $\alpha \in \C^2$, $\alpha^\perp = -i \sigma_2 \alpha$ is the $\pi/2$-rotation of $\alpha$. 
\item $y_t$ is the solution to the ODE \eqref{eq:3r} with initial data $y_0 \in \Gamma$; $\theta_t$ is the angle between the $y$-axis and $\nabla \kappa(y_t)$; and $r_t = \big|\nabla \kappa(y_t)\big|$. See Figure \ref{fig:4}.
\end{itemize}

  \subsection*{Acknowledgments} This work started during the AIM workshop Mathematics of topological insulators. The authors gracefully thank the organizers: Daniel Freed, Gian Michele Graf, Rafe Mazzeo and Michael Weinstein. They also thank  Mitchell Luskin and Cl\'ement Tauber for interesting discussions. The authors acknowledge support form the NSF grants DMREF-1922165 (AW), DMS-2118608 (AD), DMS-1908736 (GB), EFMA-1641100 (GB) and DMS-2012286 (JL); the EPSRC grant EP/L016516/1 (SB); the U.S. Department  of  Energy  grant  DE-SC0019449 (JL); the Office of Naval Research grant N00014-17-1-2096 (GB); and the ARO MURI grant W911NF-14-0247 (AW).

\section{Edge states and dynamics for straight interfaces}\label{sec:2} 

We review here the simplest example of domain wall $\kappa$: we write
\begin{equation}
    \kappa(x) = \kappa_{\te,r}(x) = -r\sin(\te) x_1 + r\cos(\te) x_2 = r   \matrice{- \sin(\te) \\ \cos(\te) } \cdot  x
\end{equation}
with $\te \in \R$, $r > 0$. The interface $\kappa_{\te,r}^{-1}(0)=\R v_{\theta}$ is a straight line, directed by the vector $v_{\theta} = -[\cos(\te),\sin(\te)]^\top$ -- see Figure \ref{fig:3}. The Hamiltonian is then
\begin{equation}\label{eq:Hte}
    H_{\te,r} =   \matrice{
\kappa_{\te,r}(x) & \epsi D_{x_1} - i\epsi D_{x_2} 
\\
\epsi D_{x_1} +i \epsi D_{x_2} & - \kappa_{\te,r}(x) }.
\end{equation}
It admits edge states: solutions to  $(H_{\te,r} - \lambda) F_{\te,r} {\color{blue} = 0}$ that are localized and harmonic along $\R v_\te$. Here we review their explicit expression and their dynamical properties.

\begin{figure}[b]
\floatbox[{\capbeside\thisfloatsetup{capbesideposition={right,center},capbesidewidth=3in}}]{figure}[\FBwidth]
{\hspace{-1cm}\caption{
Currents propagate along $\Gamma$ at speed $v_{\theta}$ given by the counterclockwise rotation of $\nabla \kappa$.
}\label{fig:3}}
{\begin{tikzpicture}
   \node at (0,0) {\includegraphics[]{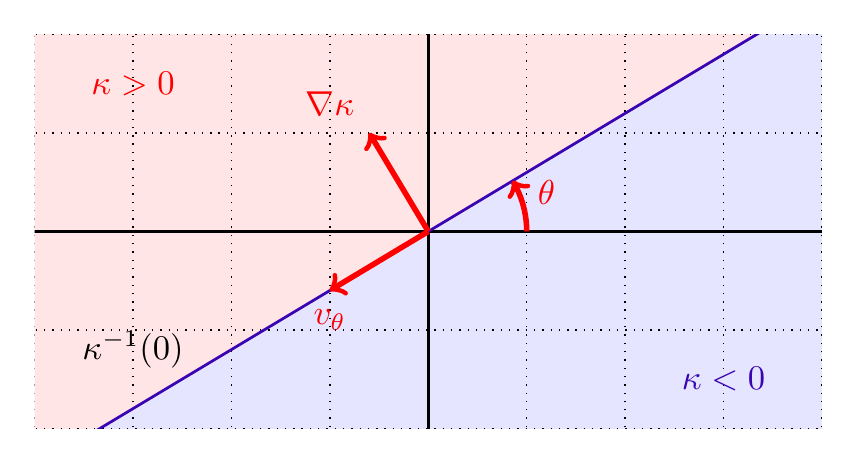}};
  \end{tikzpicture}}
\end{figure}

\subsection{Conjugation properties} We first show that the Hamiltonians $H_{\theta,r}$ and $H_{0,r}$ are conjugated by a change of frame and gauge. For this purpose, we introduce the operator
\begin{equation}\label{eq:3b}
   \UU_\te f(x) = U_\te f(R_\te x), \ \ \ \      R_\te =    \matrice{ \cos(\te) & \sin(\te)  \\ -\sin(\te) & \cos(\te)}; \ \ \ \ 
      U_\te = \matrice{ e^{-i\te/2} & 0 \\ 0 & e^{i\te/2} }.
\end{equation}

\begin{lemm} \label{lem:equivalence} The Hamiltonian \eqref{eq:Hte} is unitarily equivalent to the Hamiltonian $H_{0,r}$ with 
\begin{equation}
 \UU_\te^{-1} H_{\te,r} \, \UU_\te = H_{0,r}.
\end{equation}
\end{lemm}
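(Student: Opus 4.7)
The plan is to reduce the identity $\UU_\te^{-1} H_{\te,r} \UU_\te = H_{0,r}$ to two independent transformation rules — one for the scalar multiplier $\kappa_{\te,r}$ (absorbed by the rotation $R_\te$) and one for the chiral derivatives $\epsi D_{x_1} \pm i \epsi D_{x_2}$ (absorbed, up to a phase, by the diagonal unitary $U_\te$) — and then check that the two phases produced on the off-diagonal entries are exactly cancelled by the entries of $U_\te$.

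First I would record the effect of the spatial rotation on the mass term. A direct computation, using that $R_\te$ is orthogonal with $R_\te^{-1} = R_{-\te}$, gives
\begin{equation*}
\kappa_{\te,r}(R_\te^{-1} y) = r\bigl[-\sin\te\,(R_\te^{-1}y)_1 + \cos\te\,(R_\te^{-1}y)_2\bigr] = r y_2 = \kappa_{0,r}(y),
\end{equation*}
so $\kappa_{\te,r}(x) = \kappa_{0,r}(R_\te x)$. Thus, pulling $\kappa_{\te,r}(x)$ through the change of variable $y=R_\te x$ produces exactly the diagonal entries of $H_{0,r}$; the matrix $U_\te$ commutes with multiplication by a scalar, so it contributes nothing to this piece.

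Second, I would compute how the off-diagonal differential operators behave under $f \mapsto f \circ R_\te$. By the chain rule, $(D_{x_j} f\circ R_\te)(x) = \sum_k (R_\te)_{kj}\,(D_{y_k}f)(R_\te x)$, and the specific entries of $R_\te$ make the combinations $\cos\te \mp i\sin\te$ and $-\sin\te \mp i\cos\te$ factor, yielding
\begin{equation*}
\bigl( D_{x_1} \mp i D_{x_2}\bigr)(f\circ R_\te)(x) = e^{\mp i\te}\,\bigl[(D_{y_1} \mp i D_{y_2}) f\bigr](R_\te x).
\end{equation*}
These two identities are the heart of the lemma: conjugation by $R_\te$ multiplies the chiral derivatives by the scalar phases $e^{\mp i\te}$.

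Finally I would assemble the pieces. Applying $H_{\te,r}$ to $(\UU_\te f)(x) = \bigl(e^{-i\te/2} f_1(R_\te x),\ e^{i\te/2} f_2(R_\te x)\bigr)^\top$, the first component produces $e^{-i\te/2}\kappa_{0,r}(R_\te x) f_1(R_\te x) + e^{i\te/2} e^{-i\te}\epsi (D_{y_1}-iD_{y_2})f_2(R_\te x)$, and the prefactor of the second term collapses to $e^{-i\te/2}$, which matches $(U_\te H_{0,r} f)_1(R_\te x)$. The same bookkeeping on the second component yields the prefactor $e^{i\te/2}$, so $H_{\te,r}\UU_\te f = \UU_\te H_{0,r} f$ and the lemma follows. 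The only delicate point is the sign/phase bookkeeping in the two computations — the chiral phases $e^{\mp i\te}$ from the derivative transformation must exactly cancel the difference $e^{i\te/2}/e^{-i\te/2} = e^{i\te}$ of the $U_\te$ entries on the off-diagonal, which is precisely why $U_\te$ is chosen with half-angle phases $e^{\pm i\te/2}$.
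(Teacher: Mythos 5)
Your proof is correct and follows essentially the same route as the paper: factor $\UU_\te$ into the pullback by $R_\te$ and the gauge $U_\te$, use the rotation to turn $\kappa_{\te,r}$ into $\kappa_{0,r}$ while the chain rule gives the chiral derivatives a phase $e^{\mp i\te}$, and then observe that the half-angle phases in $U_\te$ cancel those off-diagonal factors. The paper organizes the same computation as an operator conjugation ($\RR_\te^{-1}H_{\te,r}\RR_\te = s_1\epsi D_{x_1}+s_2\epsi D_{x_2}+s_3 rx_2$ with $U_\te^{-1}s_jU_\te=\sigma_j$) rather than component-by-component, but the ingredients and the phase bookkeeping are identical.
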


\begin{proof} Let $\RR_\te$ be the pullback operator by $R_\te$: $\RR_\te f(x) = f(R_\te x)$. We note that $\kappa_{\te,r}(x) = r \cdot R_\te^\top e_2 \cdot x = r (R_\te x)_2$. Thus $\RR_\te^{-1} \kappa_{\te,r} \RR_\te = r x_2$. 
 We now use $\RR_\te^{-1} D_x \RR_\te = R_\te^\top D_x$ to compute partial derivatives involved in $H_{\te,r}$:
 \begin{equation}
 \begin{split}
    \RR_\te^{-1}( D_{x_1} +i D_{x_2}) \RR_\te= \matrice{1 \\ i} \cdot R_\te^\top D_x &= R_\te \matrice{1 \\ i} \cdot D_x = \matrice{e^{i\te} \\ i e^{i\te} } \cdot D_x = e^{i\te} ( D_{x_1} +i D_{x_2}).
    \end{split}
 \end{equation}
The adjoint identity is
\begin{equation}
    \RR_\te^{-1}( D_{x_1}-i D_{x_2}) \RR_\te = e^{-i\te} ( D_{x_1} -i D_{x_2}).
\end{equation} 
 Grouping these identities, we obtain:
 \begin{equation}
 \begin{split}\label{eq:1i}
    \RR_\te^{-1} H_{\te,r} \RR_\te 
    &= \matrice{ rx_2 & e^{-i\te} \epsi (D_{x_1} - iD_{x_2})  \\
e^{i\te} \epsi (D_{x_1} +i D_{x_2}) & - rx_2 } \\
&= s_1 \epsi D_{x_1} + s_2 \epsi D_{x_2} + s_3 r x_2,
\end{split}
 \end{equation}
 where, $s_1, s_2, s_3$ are $2 \times 2$ Hermitian matrices given by
\begin{equation}
\label{eq:derPauli}
    s_1 = \left[  \begin{matrix} 0 & e^{-i\te} \\ e^{i \te} & 0
 \end{matrix} \right], \ \ \ \ s_2 = \left[  \begin{matrix} 0 & -i e^{-i\te} \\ i e^{i \te} & 0  \end{matrix} \right], \ \ \ \ s_3 = \left[  \begin{matrix} 1 & 0  \\ 0 & -1  \end{matrix} \right] = \sigma_3.
\end{equation}
An explicit calculation shows that $U_\te^{-1} s_j  U_\te= \sigma_j$. We conclude that
\begin{equation}
\label{eq:equiva}
    \UU_\te^{-1} H_{\te,r} \UU_\te = \sigma_1 \epsi D_{x_1} + \sigma_2 \epsi D_{x_2} + \sigma_3 x_2 = H_{0,r}.
\end{equation}
 This completes the proof. \end{proof}

\begin{rem}\label{rem:1} The relation \eqref{eq:3b} allows us to calculate the conductivity of $H_{\te,r}$ in the direction of $v_\te$, see \eqref{eq:0l}: it is equal to $1$. Indeed, the conductivity of $H_{0,r}$ (counted positively in the direction of $e_2^\perp = -e_1$) is equal to $1$ \cite{B19b}.
 Therefore, using invariance of the trace under conjugation, and the fact that $f$ is a scalar function:
\begin{align}
    1 & = \operatorname{Tr}_{L^2} \Big( \big[H_{0,r}, f(-x_1)\big] g'(H_{0,r}) \Big)
    \\
\label{eq:0o}    & = \operatorname{Tr}_{L^2} \Big( \big[H_{\te,r}, \RR_\te f(-x \cdot e_1) \RR_\te^{-1}\big] g'(H_{\te,r}) \Big) = \operatorname{Tr}_{L^2} \Big( [H_{\te,r}, f(v_\te \cdot x) ] g'(H_{\te,r}) \Big).
\end{align}
\end{rem}
 
The Hamiltonian $H_{0,r}$ admits edge states: for any $\xi \in \R$, if  
\begin{equation}
    F_{0,r}(\xi,x) = \exp\left( \frac{i\xi x_1}{\epsi} -  \dfrac{r  x_2^2}{2\epsi} \right) \matrice{1 \\ -1}, 
\end{equation}
then $F_{0,r}(\xi,\cdot)$ is a plane wave in $x_1$, i.e. along the interface; 
decays transversely along the interface, i.e., in $x_2$; and satisfies the stationary Dirac equation $(H_{0,r} - \xi) F_{0,r}(\xi, \cdot) = 0$. From Lemma \ref{lem:equivalence} we deduce that $H_{\te,r}$ also admits edge states: 
\begin{equation}\label{eq:7k}
     F_{\te,r}(\xi, x) = \UU_\te F_{0,r}(\xi,x) =  \exp\left( \frac{i\xi (R_\te x)_1}{\epsi} - \dfrac{r (R_\te x)_2^2}{2 \epsi} \right) \matrice{
e^{-i\te/2} \\ -e^{i\te/2} }.
\end{equation}

\subsection{Dynamics of edge states.} We review here how edge states give rise to an infinite-dimensional family of ballistic waves for Dirac operators with linear domain walls.

\begin{prop}
\label{prop:mainresultlinear} For any $f \in \SSS(\R)$, the function 
\begin{equation}\label{eq:1c}
     \psi_t(x)=  \epsi^{-1/2} \cdot f \big( t+ (R_\te x)_1 \big) \cdot \exp\left( - \dfrac{r (R_\te x)_2^2}{2\epsi} \right) \matrice{e^{-i\te/2} \\ -e^{i\te/2}}
\end{equation}
solves the equation $(\epsi D_t + H_{\te,r}) \psi_t = 0$.
\end{prop}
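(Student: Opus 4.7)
The plan is a two-step reduction. First, I would apply Lemma \ref{lem:equivalence} to remove the angle dependence: setting $\phi_t := \UU_\te^{-1}\psi_t$, the identity $\UU_\te^{-1} H_{\te,r} \UU_\te = H_{0,r}$ turns the desired equation $(\epsi D_t + H_{\te,r})\psi_t = 0$ into the simpler equation $(\epsi D_t + H_{0,r})\phi_t = 0$. Since $\UU_\te$ is composition with $R_\te$ followed by multiplication by the diagonal matrix $U_\te$, and since $U_\te [1,-1]^\top = [e^{-i\te/2},-e^{i\te/2}]^\top$, the reduced candidate $\phi_t$ is precisely the axis-aligned template
\[
 \phi_t(x) = \epsi^{-1/2}\, f(t+x_1)\, \exp\!\left(-\frac{r x_2^2}{2\epsi}\right) \matrice{1 \\ -1}.
\]
This reduction absorbs all the $\te$-dependence into the conjugation, so it remains only to verify that $\phi_t$ satisfies the Dirac equation with the straight vertical interface.

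Second, I would verify $(\epsi D_t + H_{0,r})\phi_t = 0$ by direct computation, splitting $H_{0,r} = \sigma_1 \epsi D_{x_1} + \sigma_2 \epsi D_{x_2} + \sigma_3 r x_2$ and exploiting two algebraic identities on the vector $w = [1,-1]^\top$. First, $\sigma_1 w = -w$, so that $\epsi D_t + \sigma_1 \epsi D_{x_1}$ acts on $\phi_t$ as the scalar transport operator $\epsi D_t - \epsi D_{x_1}$, which annihilates any profile of the form $f(t+x_1)$. Second, a short calculation gives $\sigma_2 w = i[1,1]^\top$ and $\sigma_3 w = [1,1]^\top$, so that $i\sigma_2 w = -\sigma_3 w$; combined with the Gaussian identity $\epsi D_{x_2}\, e^{-rx_2^2/(2\epsi)} = i r x_2\, e^{-rx_2^2/(2\epsi)}$, this yields the pointwise cancellation $\sigma_2 \epsi D_{x_2}\phi_t + \sigma_3 r x_2 \phi_t = 0$. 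Adding the two vanishings gives the claim.

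I do not anticipate any substantive obstacle: the Gaussian width $\sqrt{\epsi/r}$ is calibrated precisely so that the transverse derivative $\sigma_2 \epsi D_{x_2}$ matches the linear mass $\sigma_3 r x_2$ on the eigenvector $w$, while the transport chirality in the $x_1$ direction is selected by the spinor structure $\sigma_1 w = -w$. The only care required is in tracking signs and factors of $i$ through the conjugation by $\UU_\te$ and through the two scalar identities, which is essentially bookkeeping.
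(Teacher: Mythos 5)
Your proof is correct, but it takes a genuinely different route from the paper's. The paper expresses $\psi_t$ as a Fourier superposition $\epsi^{-1/2}\int_\R e^{-it\xi/\epsi} g(\xi) F_{\te,r}(\xi,x)\,d\xi$ of the edge states $F_{\te,r}(\xi,\cdot)$, uses the eigenstate equation $(H_{\te,r}-\xi)F_{\te,r}(\xi,\cdot)=0$ to conclude that this superposition solves the time-dependent equation, and then evaluates the integral to recover the explicit formula \eqref{eq:1c}. You instead verify \eqref{eq:1c} directly: first using Lemma \ref{lem:equivalence} to absorb the angle $\te$ into the conjugation (and correctly computing that $\UU_\te^{-1}\psi_t$ is the axis-aligned Gaussian carried by $[1,-1]^\top$), then splitting $H_{0,r}=\sigma_1\epsi D_{x_1}+\sigma_2\epsi D_{x_2}+\sigma_3 r x_2$ and checking two independent cancellations on the spinor $w=[1,-1]^\top$: the chirality identity $\sigma_1 w=-w$ turns $\epsi D_t+\sigma_1\epsi D_{x_1}$ into the scalar transport $\epsi(D_t-D_{x_1})$ that annihilates $f(t+x_1)$, and $i\sigma_2 w=-\sigma_3 w$ combined with $\epsi D_{x_2}e^{-rx_2^2/(2\epsi)}=irx_2 e^{-rx_2^2/(2\epsi)}$ kills the transverse terms. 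Both arguments are valid; yours is more elementary and self-contained, while the paper's emphasizes the conceptual origin of \eqref{eq:1c} as a wavepacket of edge states $F_{\te,r}(\xi,\cdot)$, which is the point of view that the rest of \S\ref{sec:3} builds on. One small caveat: be sure to note that $\UU_\te$ is time-independent and hence commutes with $D_t$, so the conjugation does carry $(\epsi D_t+H_{\te,r})\psi_t=0$ to $(\epsi D_t+H_{0,r})\phi_t=0$ — you use this tacitly.
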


The functions \eqref{eq:1c} are the ballistic waves generated by edge states: they  propagate along the interface $\R v_\te$ and decay rapidly along $\R v_\te^\perp$. 
 Our scaling casts \eqref{eq:1c} as 
wavepackets:
\begin{equation}\label{eq:7l}
    \psi_t(x) =  \epsi^{-1/2} \cdot a \left(\dfrac{x - y_t}{\sqrt{\epsi}} \right), \ \ \ \ a(y) = e^{-\frac{r}{2} (R_\te y)_2^2 }f \big( \sqrt{\epsi} (R_\te y)_1 \big)\matrice{e^{-i\te/2} \\ -e^{i\te/2}}, \ \ \ \ y_t = t v_\te
\end{equation}
with $a$ having a full asymptotic expansion in powers of $\sqrt\epsi$.
This connection will be the basis of our analysis in the context of curved interfaces.

\begin{proof}[Proof of Proposition \ref{prop:mainresultlinear}] Let  $g \in \SSS(\R)$ such that 
\begin{equation}
    g(\xi) = \frac{1}{2\pi\epsi} \int_\R e^{\frac{i}{\epsi} t \xi} f(t) dt.
\end{equation}
We introduce 
\begin{equation}\label{eq:7f}
\psi_t(x)=\epsi^{-1/2} \int_\R e^{-\frac{i}{\epsi} t\xi} g(\xi) F_{\te,r}(\xi,x) d\xi.
\end{equation}
Since $(H_{\te,r} - \xi) F_{\te,r}(\xi,\cdot) = 0$, 
we deduce that
\begin{equation}
\begin{aligned}
\epsi D_t \psi_t(x) & =  - \epsi^{-1/2} \int_\R \xi e^{-\frac{i}{\epsi} t\xi} g(\xi) F_{\te,r}(\xi,x) d\xi \\
& = 
- \epsi^{-1/2} \int_\R e^{-\frac{i}{\epsi}  t\xi} g(\xi)H_{\te,r} F_{\te,r}(\xi,x) d\xi = - H_{\te,r}\psi_t(x).
\end{aligned}
\end{equation}
This proves that \eqref{eq:7f} is a solution to $(\epsi D_t + H_{\te,r}) \psi_t = 0$. Plugging the formula \eqref{eq:7k} for $F_{\te,r}$ in \eqref{eq:7f}, we obtain
\begin{align}
\psi_t(x) & = \epsi^{-1/2} \int_\R e^{- \frac{i}{\epsi} (  t+(R_\te x)_1)\xi} g(\xi)  d\xi \cdot \exp\left( - \dfrac{r (R_\te x)_2^2}{2\epsi} \right) \matrice{
e^{-i\te/2} \\ -e^{i\te/2} }\\
&=  \epsi^{-1/2} f\big(t+(R_\te x)_1 \big) \exp\left( - \dfrac{r (R_\te x)_2^2}{2\epsi} \right) \matrice{
e^{-i\te/2} \\ -e^{i\te/2} },
\end{align}
by definition of $g$ as the inverse (semiclassical) Fourier transform of $f$. 
\end{proof}

\section{Dynamical analogues of edge states along curved interfaces}\label{sec:3} We now consider non-linear domain walls, opening the possibility for curved topological interfaces. We relax \eqref{eq:3r} to a global transversality condition: 
\begin{equation}\label{eq:7z}
\inf \big\{ \big| \nabla \kappa(y) \big| : \ \kappa(y) = 0\} > 0.
\end{equation}
We recall that all derivatives of $\kappa$ are uniformly bounded: $\kappa \in C^\infty_b(\R^2)$. 
We plan to produce a dynamical analogue of edge states: a solution to
\begin{equation}\label{eq:Dirac}
(\epsi D_t + H) \psi = 0, \ \ \ \ H = \matrice{ \kappa(x) & \epsi D_{x_1} - i\epsi D_{x_2}  \\ \epsi D_{x_1} + i\epsi D_{x_2} & -  \kappa(x) },
\end{equation}
that propagates for long time along the topological interface $\Gamma = \kappa^{-1}(0)$.

The equation \eqref{eq:7l} motivates the ansatz
\begin{equation}\label{eq:1d}
    \psi(t,x) =  \epsi^{-1/2} a\left(t, \dfrac{x - y_t}{\sqrt{\epsi}} \right), \ \ \ \ \text{where:}
\end{equation}
\begin{itemize}
\item $a \in \SSS(\R^2,\C^2)$ has a full expansion in powers of $\epsi^{1/2}$;
\item $y_0 \in \Gamma$ and $y_t \in \Gamma$ is the solution of the ODE
\begin{equation}
    \dot{y_t} = v(y_t), \ \ \ \ v(y) = \dfrac{\nabla \kappa(y)^\perp}{|\nabla \kappa(y)|}, \ \ \ \ w^\perp = \matrice{0 & -1 \\ 1 & 0} w. 
\end{equation}
\end{itemize}
The vector $v(y)$ is the local analogue to $v_\te$: at each point $y \in \Gamma$, it is the unit tangent vector to $\Gamma$ obtained by rotating counterclockwise $\nabla \kappa(y)$. Since $\kappa(y_0) = 0$, 
  $y_t \in \Gamma$ for any $t$:
\begin{equation}
    \frac{d \kappa(y_t)}{dt}  = \dot{y_t} \cdot \nabla\kappa(y_t) = v(y_t)  \cdot \nabla\kappa(y_t) = 0.
\end{equation}
Let $\te_t$ and $r_t$ be such that 
\begin{equation}\label{def:theta_t}
\nabla \kappa (y_t) = r_t \matrice{-\sin(\te_t) \\ \cos(\te_t)}, \ \ \ \ \text{so that}  \ \ v(y_t)=-\matrice{\cos(\te_t) \\ \sin(\te_t)},
\end{equation}
see Figure \ref{fig:4}. With these notations in place, we define $\KK_t : \SSS(\R) \rightarrow \SSS(\R^2,\C^2)$ by:
\begin{equation}\label{def:a_0}
\KK_t f(x) = r_t^{1/4} f \big((R_{\theta_t}x)_1\big) e^{-\frac{r_t} 2 (R_{\theta_t} x)^2_2}  \matrice{e^{-i\te_t/2} \\ -e^{i\te_t/2} }, \ \ \ \ f \in \SSS(\R). 
\end{equation}

\begin{theo}\label{thm:2} Let $\kappa \in C^\infty_b(\R^2)$ satisfying \eqref{eq:7z} and $y_t$, $\theta_t$ as above. Let $\psi_t$ be the solution to $(\epsi D_t + H) \psi_t = 0$ with 
\begin{equation} \label{eq:initial_data}
    \psi_0(x) = \dfrac{1}{\sqrt{\epsi}} \cdot \KK_0 f \left(\dfrac{ x - y_0}{\sqrt{\epsi}} \right), \ \ \ \ f \in \SSS(\R).
\end{equation}
Then uniformly for $\epsi \in (0,1]$ and $t > 0$:
\begin{equation} \label{eq:psi_with_error}
    \psi_t(x) = \dfrac{1}{\sqrt{\epsi}} \cdot \KK_t f \left(\dfrac{x - y_t}{\sqrt{\epsi}} \right) + \Or_{L^2}\big(\epsi^{1/2} \lr{t}\big).
\end{equation}
\end{theo}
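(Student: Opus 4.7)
My plan is a two-scale wavepacket ansatz followed by Duhamel. In the rescaled variable $z = (x - y_t)/\sqrt{\epsi}$ centered on the moving point $y_t$, I seek an approximate solution
\begin{equation*}
\psi_t^{\text{app}}(x) \;=\; \epsi^{-1/2}\big(A_0(t,z) \;+\; \sqrt{\epsi}\, A_1(t,z)\big), \qquad A_0, A_1 \in \SSS(\R^2,\C^2),
\end{equation*}
aiming for a residual of size $O_{L^2}(\epsi^{3/2})$. Substituting into $(\epsi D_t + H)\psi_t^{\text{app}}$, commuting derivatives through the rescaling, and Taylor expanding $\kappa(y_t + \sqrt{\epsi} z) = \sqrt{\epsi}\,\nabla\kappa(y_t)\cdot z + \tfrac{\epsi}{2} z^\top \nabla^2\kappa(y_t) z + O(\epsi^{3/2}|z|^3)$ (using $\kappa(y_t) = 0$), I collect terms in powers of $\sqrt{\epsi}$.

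The $O(\sqrt{\epsi})$ equation is $L_t A_0 = 0$ with
\begin{equation*}
L_t \;:=\; -\dot y_t\cdot D_z \;+\; \nabla\kappa(y_t)\cdot z\,\sigma_3 \;+\; \sigma_1 D_{z_1} + \sigma_2 D_{z_2}.
\end{equation*}
Applying the rotation $y = R_{\theta_t} z$ and the spinor conjugation $U_{\theta_t}$ from Lemma~\ref{lem:equivalence}, $L_t$ becomes the rescaled straight-edge operator $(I + \sigma_1)D_{y_1} + \sigma_2 D_{y_2} + r_t y_2\sigma_3$, whose kernel is computed (as in the proof of Proposition~\ref{prop:mainresultlinear}) to be $\ker L_t = \{\KK_t h : h \in \SSS(\R)\}$. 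Hence $A_0 = \KK_t f$ is an exact leading-order solution. The $O(\epsi)$ equation is
\begin{equation*}
L_t A_1 \;=\; -D_t A_0 - \tfrac{1}{2} z^\top \nabla^2\kappa(y_t) z\,\sigma_3 A_0 \;=:\; g_t.
\end{equation*}
Because $L_t$ is self-adjoint with a nontrivial edge-state kernel, this is solvable in $\SSS$ iff $g_t \in (\ker L_t)^\perp$, and verifying this solvability condition is the main obstacle of the proof. In rotated coordinates the condition reduces to requiring that $\int \big(\tilde g_t^+ - \tilde g_t^-\big)(y_1,y_2) \,e^{-r_t y_2^2/2}\,dy_2 = 0$ for every $y_1$. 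I expect the Hessian contribution to project trivially by $y_2$-parity once $\sigma_3$ is unwound on the edge spinor $[e^{-i\theta_t/2}, -e^{i\theta_t/2}]^\top$, while the time-derivative contribution $D_t\KK_t f$ encodes the adiabatic rotation of the moving frame $(\theta_t, r_t)$ along $\Gamma$. Its projection onto $\ker L_t$ should vanish as a consequence of the transport law $\dot y_t = \nabla\kappa(y_t)^\perp/|\nabla\kappa(y_t)|$; this is also where the Berry-type phase visible in Figure~\ref{fig:7} is encoded. Should a residual kernel contribution persist, it can be absorbed by promoting $f$ to a slowly-varying profile $f_t$ evolving by a scalar transport equation along $\Gamma$, whose solution is trivially $f_t = f$ for the initial data~\eqref{eq:initial_data}. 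Granted solvability, inverting $L_t$ on $(\ker L_t)^\perp$ via Fourier transform in $y_1$ and the spectral decomposition of the resulting $\xi$-fiber operators produces $A_1 \in \SSS(\R^2,\C^2)$, bounded uniformly in $t$.

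With $A_0, A_1$ in hand, the residual $r_t := (\epsi D_t + H)\psi_t^{\text{app}}$ collects only uncancelled $O(\epsi^{3/2})$ contributions (from $\epsi^{3/2} D_t A_1$, the cubic Taylor remainder $O(\epsi^{3/2}|z|^3)\sigma_3 \KK_t f$, and the Hessian acting on $\sqrt{\epsi}\, A_1$). The change-of-variable identity $\|\epsi^{-1/2}\tilde G(z(\cdot))\|_{L^2(dx)} = \|\tilde G\|_{L^2(dz)}$ makes the counting exact, so $\|r_t\|_{L^2(dx)} = O(\epsi^{3/2})$ uniformly in $t$. Since the initial data~\eqref{eq:initial_data} matches only the leading order, the initial mismatch $\psi_0 - \psi_0^{\text{app}}$ is $O_{L^2}(\sqrt{\epsi})$. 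Duhamel's formula, together with unitarity of $e^{-itH/\epsi}$ (since $H$ is self-adjoint), then yields
\begin{equation*}
\|\psi_t - \psi_t^{\text{app}}\|_{L^2} \;\le\; \|\psi_0 - \psi_0^{\text{app}}\|_{L^2} \;+\; \frac{1}{\epsi}\int_0^t \|r_s\|_{L^2}\,ds \;=\; O\big(\sqrt{\epsi}\,\lr{t}\big),
\end{equation*}
and absorbing the $\sqrt{\epsi}\,A_1$ correction into the error establishes~\eqref{eq:psi_with_error}.
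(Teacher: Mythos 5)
Your proposal follows the same route as the paper: a two-scale wavepacket ansatz in the rescaled variable $z=(x-y_t)/\sqrt{\epsi}$, a hierarchy of transport equations in powers of $\sqrt{\epsi}$, identification of $\ker L_t$ as $\KK_t\SSS(\R)$ by conjugation to the straight-edge operator, solving the first corrector equation, and closing via Duhamel. Two places in your sketch need repair. First, your stated reasons for solvability of $L_t A_1 = g_t$ are off: the Hessian term in $g_t$ is annihilated not by $y_2$-parity but by the spinor identity $\bigl\langle v,\sigma_3 v\bigr\rangle = 0$ for $v=[e^{-i\theta_t/2},-e^{i\theta_t/2}]^\top$, and the $D_t\KK_t f$ contribution does \emph{not} project to zero as a consequence of the transport law for $y_t$ --- its projection onto the kernel equals (a positive multiple of) $D_t f(t,x_1)$, so the solvability condition is the scalar transport equation $D_t f = 0$. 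Your ``should a residual kernel contribution persist'' fallback is exactly this, so the proof survives, but the expectation preceding it is wrong and the $r_t^{1/4}$ normalization in $\KK_t$ is calibrated precisely so that the transport operator comes out to $D_t$ alone; this is the content of the paper's Lemma~\ref{lem:1d}. Second, and more substantively, you assert but do not establish that $A_1$, $D_t A_1$, and the cubic Taylor remainder admit $L^2$ bounds uniform in $t\ge 0$, which is what turns $\tfrac1\epsi\int_0^t\|r_s\|_{L^2}\,ds$ into $O(\epsi^{1/2}t)$ and not something worse. This requires showing that $L_{\theta,r}^{-1}$ is bounded on $\ker(L_{\theta,r})^\perp\cap\SSS$ uniformly as $(\theta,r)$ ranges over $\R\times[c,C]$, and that $r_t\in[c,C]$ with $\dot r_t,\dot\theta_t$ uniformly bounded in $t$ --- all of which uses $\kappa\in C^\infty_b$ together with the transversality bound \eqref{eq:7z}. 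The paper devotes Lemma~\ref{lem:1b} and Steps~2--3 of its proof of Theorem~\ref{thm:2} to exactly this point; without it your ``bounded uniformly in $t$'' is an assertion rather than a conclusion.
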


Theorem \ref{thm:2} constructs a solution to $(\epsi D_t + H) \psi_t = 0$, propagating dispersion-free along $y_t$, for times $t \ll \epsi^{-1/2}$. Under geometric conditions on $\kappa$, we can extend this time of validity; see Theorem \ref{thm:3}. These two results focus on maximizing the lifespan of approximate solutions. We can instead focus on improving their accuracy: see Theorem \ref{thm:4} for solutions up to $O(\epsi^n)$ for every $n$, but fixed lifetime. 
 
When $r_t$ is not constant -- corresponding to \eqref{eq:7z} holding instead of \eqref{eq:3r} -- the state in \eqref{eq:psi_with_error} is coherent in a relaxed sense: there may be lateral spreading at scale $r_t$ (which remains bounded above and below by our assumptions on $\kappa$). See the expression \eqref{def:a_0} for $\KK_t f$ and Figure \ref{fig:1} for a numerical illustration.

\begin{figure}[t]
\floatbox[{\capbeside\thisfloatsetup{capbesideposition={right,center},capbesidewidth=3.2in}}]{figure}[\FBwidth]
{\hspace{-1cm}\caption{\label{fig:1}
A straight interface but a non-linear domain wall: $\kappa(x) = (1-0.9 \sin(x_1))x_2$. We have $y_t = -t e_1$ hence $r_t = 1 + 0.9 \sin(t)$. This quantity nearly degenerates for $t$ near $-\pi/2+\pi\mathbb{Z}$, inducing lateral spreading of the wavepacket for such times, but reconstruction in between.
}}
{\begin{tikzpicture}
   \node at (0,0) {\includegraphics[width=10cm]{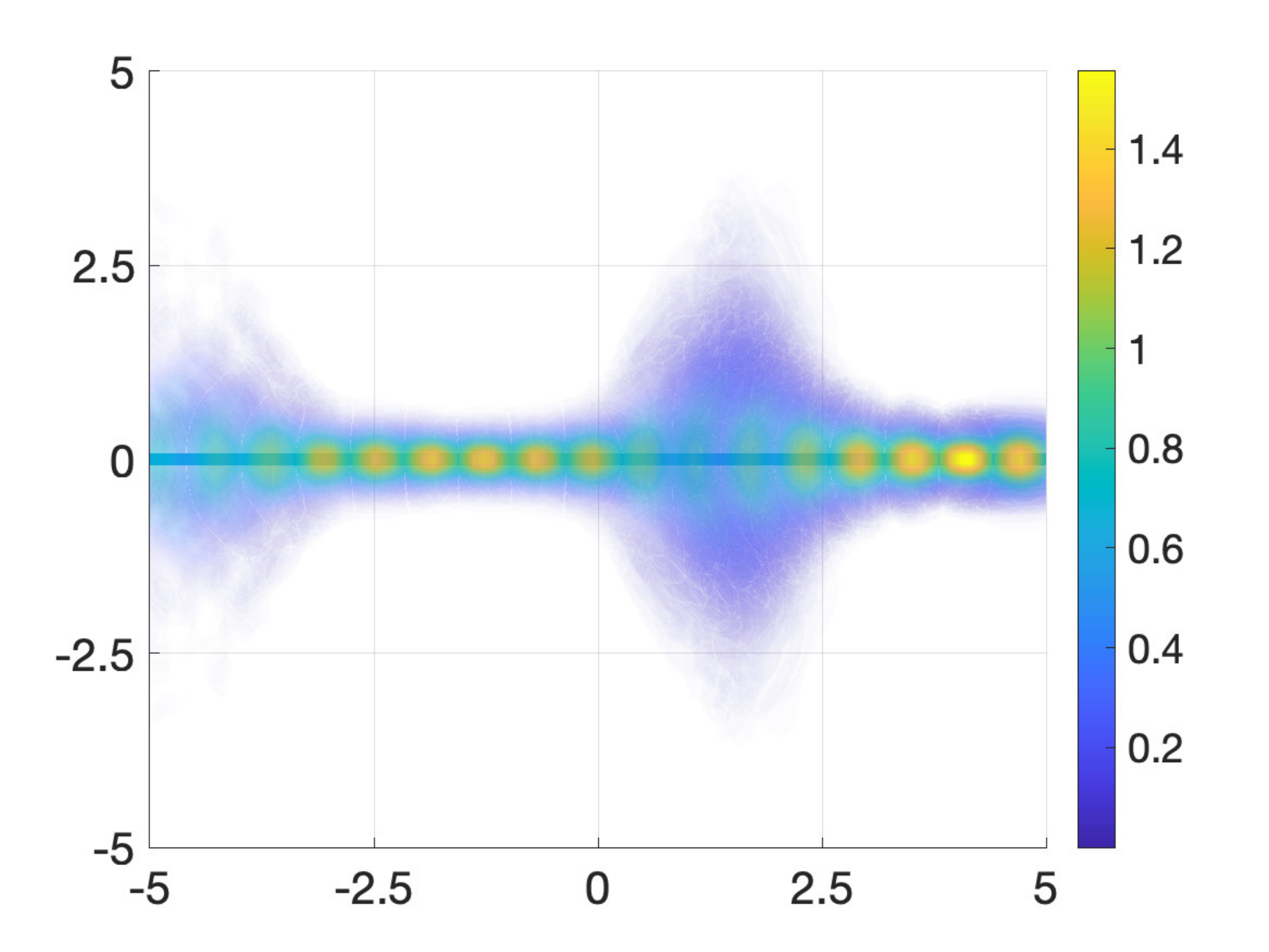}};
  \end{tikzpicture}}
\end{figure}

The initial data \eqref{eq:initial_data} is quite specific: the rescaled amplitude $\KK_0 f$ is in the range of $\KK_0$. To obtain a full picture of evolution of states initially microlocalized along $\CC$, we need to understand how orthogonal initial data propagate:
\begin{equation}
    \psi_0(x) = \dfrac{1}{\sqrt{\epsi}} \cdot \KK_0 f \left(\dfrac{ x - y_0}{\sqrt{\epsi}} \right)^\perp.
\end{equation}
This suggests a refinement of Conjecture \ref{conj:1}. Let $\Pi : \SSS(\R^2,\C^2) \rightarrow \SSS(\R^2,\C^2)$ be the orthogonal projection on the range of $\KK_0$. We observe that $\KK_0$ is an isomorphism to its range; therefore, for any $a \in \SSS(\R^2,\C^2)$, there exists a unique $f \in \SSS(\R)$ such that $\Pi a = \KK_0 f$.

\begin{conj}\label{conj:2} There exists $\beta < 3/4$ with the following. Let $a \in \SSS(\R^2,\C^2)$, $f \in \SSS(\R)$ such that $\Pi a = \KK_0 f$, and $\phi_t$ be the solution to $(\epsi D_t + H) \phi_t = 0$ with initial data
\begin{equation}
    \phi_0(x) = \dfrac{1}{\sqrt{\epsi}} \cdot a\left( \dfrac{x - y_0}{\sqrt{\epsi}} \right).
\end{equation}
Then uniformly in $\epsi \in (0,1]$, $t > 0$:
\begin{equation}\label{eq:0u}
  \phi_t(x) =  \dfrac{1}{\sqrt{\epsi}} \cdot  \KK_t f \left(\dfrac{x - y_t}{\sqrt{\epsi}} \right) + \Or_{L^2}\left(\epsi^{1/2} t\right)
 + \Or_{L^\infty}\left(\epsi^{-\beta} t^{-1/2}\right).  
 \end{equation}
\end{conj}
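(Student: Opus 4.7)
The plan is to decompose the initial data using the projector $\Pi$ and treat the two pieces separately. Write $a = \KK_0 f + a^\perp$ with $a^\perp = (I-\Pi)a$, and correspondingly $\phi_0 = \phi_0^\parallel + \phi_0^\perp$. By linearity of the Dirac equation, $\phi_t = \phi_t^\parallel + \phi_t^\perp$, where each solves $(\epsi D_t + H)\phi = 0$ with the appropriate initial data. Theorem \ref{thm:2} applies verbatim to $\phi_t^\parallel$, producing the main term $\epsi^{-1/2} \KK_t f\big((x-y_t)/\sqrt\epsi\big)$ plus the $\Or_{L^2}(\epsi^{1/2}\lr{t})$ remainder already visible in \eqref{eq:0u}. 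The substance of the conjecture is then to show that $\phi_t^\perp$ disperses as $\Or_{L^\infty}(\epsi^{-\beta} t^{-1/2})$ for some $\beta < 3/4$.

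For $\phi_t^\perp$, the idea is to reduce locally near $(y_0,0)\in\CC$ to the linear model $H_{\te_0, r_0}$ whose spectral theory is explicit. Rescaling $x = y_0 + \sqrt\epsi y$, $\tau = t/\sqrt\epsi$, and $\phi_t(x) = \epsi^{-1/2}\tilde\phi_\tau(y)$ converts the semiclassical equation into $(D_\tau + \widetilde H)\tilde\phi = 0$ with $\widetilde H = \sigma_1 D_{y_1} + \sigma_2 D_{y_2} + (\nabla\kappa(y_0)\cdot y)\sigma_3 + \sqrt\epsi V(y)\sigma_3$, where $V(y) = \Or(|y|^2)$ comes from the Taylor expansion $\kappa(y_0+\sqrt\epsi y) - \sqrt\epsi\nabla\kappa(y_0)\cdot y = \Or(\epsi |y|^2)$. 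The unperturbed part is a unit-$\epsi$ incarnation of $H_{\te_0,r_0}$, and a Duhamel estimate propagates the $\sqrt\epsi V$ perturbation on a window of rescaled time $\tau$ depending on the spatial spread of the wavepacket.

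For the linear model, conjugate via Lemma \ref{lem:equivalence} to $H_{0,r_0}$ with $\epsi = 1$, take the Fourier transform in $y_1$, and exploit the harmonic oscillator structure in $y_2$: the square of the fiber Hamiltonian $\sigma_1 \xi + \sigma_2 D_{y_2} + \sigma_3 r_0 y_2$ reads $\xi^2 + D_{y_2}^2 + r_0^2 y_2^2 + r_0\sigma_1$, so in the $\sigma_1$-eigenbasis the spectrum fibers as the edge band $\omega_0(\xi)=\xi$ and the bulk bands $\omega_n^\pm(\xi) = \pm\sqrt{\xi^2 + 2n r_0}$, $n \geq 1$. The hypothesis $\Pi a = \KK_0 f$ precisely kills the edge-state fiber in $a^\perp$, so $\tilde\phi$ is a superposition of bulk-band wavepackets. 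Each band has non-degenerate curvature $\p_\xi^2 \omega_n^\pm = \pm 2n r_0 (\xi^2+2n r_0)^{-3/2}$, so van der Corput's lemma gives a one-dimensional dispersive estimate $\|e^{-i\tau\omega_n^\pm(D_{y_1})}g\|_{L^\infty_{y_1}} \lesssim \tau^{-1/2}\|g\|_{L^1_{y_1}}$ whose constant is summable in $n$ against the rapid decay of $a^\perp$ in the harmonic oscillator basis. Unwinding the rescaling converts $\tau^{-1/2}$ to $\epsi^{1/4} t^{-1/2}$ and the prefactor $\epsi^{-1/2}$ from the rescaling yields a bound of order $\epsi^{-1/4} t^{-1/2}$ in $L^\infty$, consistent with some $\beta \leq 1/4$.

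The main obstacle will be controlling the linearization error over the long time scales required. Although the quadratic remainder $\sqrt\epsi V(y)$ is small pointwise, it is unbounded in $y$, and bulk wavepackets have $\Or(1)$ group velocity in the rescaled variables, so they spread to $|y|\sim \tau$ at time $\tau$. A crude Duhamel bound then blows up polynomially in $\tau$, and only controls the approximation on a window $\tau \lesssim \epsi^{-c}$ corresponding to $t \lesssim \epsi^{1/2-c}$. Extending coverage to the full regime $t \ll \epsi^{-1/2}$ will require either a sharper energy estimate exploiting the conservation laws of $H$, or a microlocal decomposition that combines the dispersive estimate near $\CC$ with the classical-to-quantum correspondence of \S\ref{sec:1.4} away from $\CC$ (where the two branches \eqref{eq:3q} evolve in opposite directions and their contributions cancel in appropriate norms). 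Threading these two regimes and optimizing the time-splitting is what is expected to fix the admissible exponent $\beta < 3/4$.
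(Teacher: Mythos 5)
This statement is a \emph{conjecture}, not a theorem: the paper offers no proof of \eqref{eq:0u}, only the numerical evidence of Figure \ref{fig:6} and the one-line remark in \S\ref{sec:1.4} that ``an investigation of the linear case suggests that the rate of dispersion is $\epsi^{-1/4}t^{-1/2}$.'' So there is no paper proof to compare against; what you have written is an original sketch of a possible attack. With that caveat, your sketch is a sensible and largely correct outline of what such a proof would likely involve, and you are honest about where it stops being a proof.

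On the technical side, your linear-model computation is right and reproduces the paper's hint: squaring the fiber operator $\sigma_1\xi+\sigma_2 D_{y_2}+r_0 y_2\sigma_3$ does give $\xi^2+D_{y_2}^2+r_0^2 y_2^2+r_0\sigma_1$, the bands are $\omega_0(\xi)=\xi$ and $\omega_n^\pm(\xi)=\pm\sqrt{\xi^2+2nr_0}$ ($n\geq 1$), and the second derivatives $\p_\xi^2\omega_n^\pm=\pm 2nr_0(\xi^2+2nr_0)^{-3/2}$ feed a stationary-phase / van der Corput bound of order $\tau^{-1/2}$ whose unwinding gives $\epsi^{-1/4}t^{-1/2}$, i.e.\ $\beta=1/4<3/4$. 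You also correctly identify that the crux is propagating this linear-model dispersion to the full operator $H$ over times $t\ll\epsi^{-1/2}$: the Taylor remainder is small pointwise but unbounded, and the bulk wavepackets have $O(1)$ rescaled group velocity, so a crude Duhamel bound does not reach the required timescale.

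One issue you should flag explicitly, because it is a genuine gap in the decomposition step rather than merely an unfinished estimate: your split $\phi_0=\phi_0^\parallel+\phi_0^\perp$ is based on the frozen projector $\Pi$ attached to $\KK_0$, but in a curved geometry the edge sector at time $t$ is spanned by $\KK_t$, not $\KK_0$ (the operators $\KK_t$ encode the time-varying $(\theta_t,r_t)$). The Dirac evolution will not preserve orthogonality to the moving kernel, so $\phi_t^\perp$ is not automatically a superposition of bulk-band packets at later times — there is a coupling term of order $\dot\theta_t$ (respectively $\dot r_t$) transferring mass back into the edge sector. Controlling this coupling is exactly what makes the conjectural $L^\infty$ remainder nontrivial and is part of why the admissible $\beta$ is only asserted to be $<3/4$ rather than pinned at $1/4$; you should at least articulate this obstruction alongside the linearization-error problem you already identified.
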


According to Conjecture \ref{conj:2},  any function localized (in a semiclassical sense) near $(y_0,0)$ splits in propagating and dispersive parts, with the analogue of an edge state emerging dynamically. See Figure \ref{fig:6} for a numerical confirmation.

\subsection{Structure of proof of Theorem \ref{thm:2}}

We will prove  Theorem \ref{thm:2} by establishing the following statements.
\begin{enumerate}
\item Approximate solutions of the Dirac equation solve a hierarchy of transport equations, see Lemma \ref{lem:1e}.
\item The leading-order transport operator has explicit kernel and a spectral gap away from its kernel, see \S\ref{sec:3.1}.
\item Solutions to the hierarchy of transport equations exist, see \S\ref{sec:3.2}-\S\ref{sec:5.3}.
\item Approximate and exact solutions to the Dirac equation are nearly equal, see \S\ref{sec:3.6}.
\end{enumerate}
We will use the notation  
\begin{equation}\label{def:W[]}
    W[a]_{y_t}(x) = \frac{1}{\sqrt{\epsi}} \cdot a\left( \dfrac{x - y_t}{\sqrt{\epsi}} \right)
\end{equation}
for $a \in \mathcal{S}(\R^2,\C^2)$ possibly depending on $t$ and $\epsi$.

We also introduce the operators $T_j$ acting on $\SSS(\R^2,\C^2)$, defined by:
\begin{align}\label{def:T0}
 T_0 & =  - \dot{y_t} \cdot D_x  +\left[  \begin{matrix}
 \nabla \kappa(y_t) x & D_{x_1} - iD_{x_2}
\\ D_{x_1} + iD_{x_2} & -\nabla \kappa(y_t) x
 \end{matrix} \right],\\
 \label{def:T1}
 T_1 & =D_t +  \left( \sum_{|\alpha|=2} \frac 1 {\alpha!} \partial^\alpha \kappa(y_t) x^\alpha \right) \sigma_3, \\
 \label{def:Tj}
 T_j& = \left( \sum_{|\alpha|=j+1} \frac 1 {\alpha!} \partial^\alpha \kappa(y_t) x^\alpha \right) \sigma_3,\;\;j\geq 2. 
\end{align}

\subsection{Formal approximate solutions via transport equations} \label{sec:hierarchy}

We start with the following lemma: solving the hierarchy of transport equations 
\begin{equation}\label{eq:3g}
  T_0 a_0 = 0, \ \ \ \ T_0 a_1 + T_1 a_0 = 0, \ \ \ \ \dots, \ \ \ \  \sum_{\ell=0}^j T_{j-\ell}  a_{\ell} = 0, \ \ \ j \in [0,m]
\end{equation}
produces approximate solutions to the Dirac equation.

\begin{lemm}\label{lem:1e} For any $m \in \N$, there exists $C > 0$ such that if $a_0$, $a_1$, \dots, $a_m \in \SSS(\R^2,\C^2)$ are solutions of \eqref{eq:3g} and $a^{(m)} = \sum_{\ell=0}^m \epsi^{\ell/2} a_\ell$, then for all $\epsi \in (0,1]$:
\begin{equation} \label{eq:2j}
   \left\| (\epsi D_t + H) W\big[a^{(m)}\big]_{y_t}  \right\|_{L^2}  \leq C \epsi^\frac{m+2}{2} \left( \| D_t a_m \|_{L^2} + \sum_{k = 0}^m  \big\| \lr{x}^{m+2} a_k \big\|_{L^2} \right).
\end{equation}
\end{lemm}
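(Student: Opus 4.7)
The plan is to compute $(\epsi D_t + H) W[a]_{y_t}$ for general $a \in \SSS(\R^2, \C^2)$ by pulling the semiclassical operator through the wavepacket scaling, then specialize to $a = a^{(m)}$ so that the transport hierarchy \eqref{eq:3g} cancels the top-order terms.

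First I would derive the building-block identities
\begin{equation*}
    \epsi D_{x_j} W[a]_{y_t} = \epsi^{1/2} W[D_{y_j} a]_{y_t}, \qquad \epsi D_t W[a]_{y_t} = \epsi\, W[D_t a]_{y_t} - \epsi^{1/2} W[\dot y_t \cdot D_y a]_{y_t},
\end{equation*}
via the substitution $y = (x-y_t)/\sqrt{\epsi}$; the $-\dot y_t$ reflects the moving center. For the mass term, the key point is $\kappa(y_t) = 0$ for all $t$ (since $\dot y_t \perp \nabla\kappa(y_t)$ forces $\tfrac{d}{dt}\kappa(y_t) = 0$). Taylor's theorem with remainder then yields, with $q_k(y) = \sum_{|\alpha|=k} \tfrac{\partial^\alpha \kappa(y_t)}{\alpha!} y^\alpha$,
\begin{equation*}
    \kappa(y_t + \sqrt{\epsi}\, y) = \sum_{k=1}^{m+1} \epsi^{k/2} q_k(y) + \epsi^{(m+2)/2} r_{m+1}(\epsi, y), \qquad |r_{m+1}(\epsi, y)| \leq C \lr{y}^{m+2},
\end{equation*}
uniformly in $t$ and $\epsi$, thanks to $\kappa \in C^\infty_b(\R^2)$.

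Assembling the pieces and matching against the definitions \eqref{def:T0}--\eqref{def:Tj} -- the $\epsi^{1/2}$ coefficient is $T_0$ (combining $-\dot y_t \cdot D_y$, the off-diagonal Dirac piece, and $\sigma_3 q_1$), the $\epsi$ coefficient is $T_1$ (combining $D_t$ and $\sigma_3 q_2$), and $T_j = \sigma_3 q_{j+1}$ for $j \geq 2$ -- I expect the key identity
\begin{equation*}
    (\epsi D_t + H)\, W[a]_{y_t} = \sum_{j=0}^{m} \epsi^{(j+1)/2} W[T_j a]_{y_t} + \epsi^{(m+2)/2} W[\sigma_3 r_{m+1}(\epsi,\cdot)\, a]_{y_t}.
\end{equation*}
Inserting $a = \sum_\ell \epsi^{\ell/2} a_\ell$ and regrouping by $n = j+\ell$, the coefficient of $\epsi^{(n+1)/2}$ for $n \leq m$ is exactly $\sum_{\ell=0}^n T_{n-\ell} a_\ell = 0$ by hypothesis. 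Only terms with $n \in [m+1, 2m]$ survive, and the constraint $\ell \leq m$ forces $j = n - \ell \geq 1$, so $T_0$ is automatically excluded from the remainder.

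This exclusion is the crux of the estimate: $T_0$ is the only $T_j$ carrying bare spatial derivatives on $a_\ell$, which are not controlled by weighted $L^2$ norms of $a_\ell$. The remaining $T_j$ with $j \geq 2$ are polynomial multiplications $q_{j+1}(x)\sigma_3$ with coefficients uniformly bounded in $t$, hence $\|T_j a_\ell\|_{L^2} \leq C \|\lr{x}^{j+1} a_\ell\|_{L^2} \leq C\|\lr{x}^{m+2} a_\ell\|_{L^2}$. For $j = 1$, the constraint $j+\ell \geq m+1$ forces $\ell = m$, isolating the single $\|D_t a_m\|_{L^2}$ contribution. The Taylor remainder satisfies $\|W[\sigma_3 r_{m+1} a_\ell]_{y_t}\|_{L^2} \leq C \|\lr{x}^{m+2} a_\ell\|_{L^2}$ by the pointwise bound on $r_{m+1}$. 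Using the isometry $\|W[b]_{y_t}\|_{L^2} = \|b\|_{L^2}$ and $\epsi^{(n+1)/2} \leq \epsi^{(m+2)/2}$ for $n \geq m+1$ and $\epsi \leq 1$, these estimates collapse directly into \eqref{eq:2j}. The only real challenge is bookkeeping: tracking the $\epsi^{1/2}$ factors produced by each differentiation and by the Taylor expansion so that they align with the definitions of the $T_j$ and the transport hierarchy cancels the right terms.
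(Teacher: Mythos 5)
Your proof is correct and follows essentially the same route as the paper's: pull the operators through the wavepacket scaling to obtain the key identity $(\epsi D_t + H)W[a]_{y_t} = \sum_j \epsi^{(j+1)/2}W[T_j a]_{y_t} + \epsi^{(m+2)/2}W[R_m a]_{y_t}$, substitute $a = a^{(m)}$, regroup by $n = j + \ell$ so the hierarchy cancels all $n \leq m$, and observe that the constraint $j,\ell \leq m$ excludes $j=0$ from the surviving terms (so no bare derivatives remain) and isolates $D_t a_m$ as the sole $D_t$ contribution. One minor point your write-up glosses over: $T_1 = D_t + q_2\sigma_3$ also contributes a polynomial piece to the $\|\lr{x}^{m+2} a_m\|_{L^2}$ term, which your estimate implicitly absorbs but does not explicitly flag; the paper handles this by saying it "extracts $D_t$" from $T_1$ before bounding the rest as a polynomial multiplier.
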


\begin{proof}[Proof of Lemma \ref{lem:1e}] \textbf{1.} Fix $m \in \N$. We observe that for $a\in\mathcal S(\R^2,\C^2)$,
\begin{equation}\label{estimation1}
    \epsi \p_{x_j} W[a]_{y_t} = W[\sqrt{\epsi}\, \p_{x_j} a]_{y_t}, \;\;\epsi D_t W[a]_{y_t} = W[-\sqrt{\epsi} \,\dot{y_t} \cdot D_x a + \epsi D_t a]_{y_t}.
\end{equation}
We now write the Taylor--Lagrange identity with integral remainder (note that $\kappa(y_t) = 0$): 
\begin{align}
    \kappa(x) &    = \left( \sum_{|\alpha| = 1}^{m+1} \dfrac{1}{\alpha !} \p^\alpha \kappa(y_t) (x - y_t)^\alpha \right)  + r_m(x-y_t), \ \ \ \ \text{with} \\
   r_m(x) & = \dfrac{1}{(m+1)!} \sum_{|\alpha| = m+2} x^\alpha \int_0^1 (1-s)^{m+1} \p^\alpha \kappa\big( y_t + sx \big) ds.
   \end{align}
We deduce that
\begin{align}\label{estimation2}
    \kappa(x) W[a]_{y_t}(x)
    &= W\left[  \left( \sum_{|\alpha| = 1}^{m+1} \dfrac{\epsi^{|\alpha|/2}}{\alpha !} \p^\alpha \kappa(y_t) x^\alpha  + \epsi^{\frac{m+2}{2}} R_m(x) \right) a \right]_{y_t}(x), \ \ \ \ \text{with} \\
    R_m(x) & = \epsi^{-\frac{m+2}{2}} r_m(\epsi^{1/2} x) = \dfrac{1}{(m+1)!} \sum_{|\alpha| = m+2} x^\alpha \int_0^1 (1-s)^{m+1} \p^\alpha \kappa\big( y_t + s \epsi^{1/2} x \big) ds.
    \end{align}
    Since $\big|r_m(x)\big| \leq C |x|^m$, we obtain that $R_m(x) \leq C|x|^m$ for all $\epsi \in (0,1]$. From the relations \eqref{estimation1}-\eqref{estimation2} and the definition \eqref{def:Tj} of the operators $T_j$:
\begin{equation}
    (\epsi D_t + H) W[a]_{y_t} = W\left[ \left(\sum_{j = 0}^{m} \epsi^{\frac{j+1}{2}} T_j + \epsi^{\frac{m+2}{2}} R_m \right)a \right]_{y_t}. 
\end{equation}
In particular, using that $W[a]_{y_t}$ and $a$ have the same $L^2$-norm,
\begin{equation}\label{eq:2n}
    \big\| (\epsi D_t + H) W[a]_{y_t} \big\|_{L^2} = \left\| \left(\sum_{j = 0}^{m} \epsi^{\frac{j+1}{2}} T_j + \epsi^{\frac{m+2}{2}} R_m \right) a \right\|_{L^2}.
\end{equation}

\textbf{2.} Assume now that $a_j$ solves the equations \eqref{eq:3g}, and plug $a^{(m)} = \sum_{k=0}^m \epsi^{k/2} a_k$ for the amplitude in \eqref{eq:2n}. Then we obtain:
\begin{align}
    \big\| (\epsi D_t + H) W[a^{(m)}]_{y_t} \big\|_{L^2}
    & 
    = \left\| \sum_{\substack{j,k = 0\\ j+k \geq m+1} }^m \epsi^{\frac{j+k+1}{2}} T_j a_k + \sum_{k=0}^m \epsi^{\frac{m+2+k}{2}} R_m a_k \right\|_{L^2} 
    \\
    & 
    \leq \sum_{\substack{j,k = 1\\ j+k \geq m+1} }^m \epsi^{\frac{j+k+1}{2}} \big\| T_j a_k \big\|_{L^2} + \sum_{k=0}^m \epsi^{\frac{m+2+k}{2}} \big\| R_m a_k \big\|_{L^2}.
\end{align}
In the second line we used the first sum starts at $j, k = 1$, since $j+k \geq m+1$ and $j, k \leq m$. 

We note that $T_1$ is the sum of $D_t$ and a polynomial of degree $2$. For $j \geq 2$, $T_j$ is a  polynomial of degree $j+1$; and $R_m$ is bounded by $C|x|^{m+2}$. All coefficients involved depend on derivatives of $\kappa$; in particular their values at $y_t$ are uniformly bounded in time. In particular, after extracting $D_t$, we can bound all multiplicative terms by $C\lr{x}^{m+2}$. We obtain that $ \big\| (\epsi D_t + H) W[a^{(m)}]_{y_t} \big\|_{L^2}$ is bounded, up to a multiplicative constant, by:
\begin{equation}
    \epsi^\frac{m+2}{2} \| D_t a_m \|_{L^2} + \sum_{\substack{j,k = 1\\ j+k \geq m+1} }^m \epsi^{\frac{j+k+1}{2}} \big\| \lr{x}^{m+2} a_k \big\|_{L^2} + \sum_{k=0}^m \epsi^{\frac{m+2+2k}{2}} \big\| \lr{x}^{m+2} a_k \big\|_{L^2}.
 \end{equation}
 Noting that $j+k + 1 \geq m+2$ in the first sum, we conclude that for any $t$,
\begin{equation}
   \big\| (\epsi D_t + H) W[a^{(m)}]_{y_t} \big\|_{L^2} \leq C \epsi^\frac{m+2}{2} \left( \| D_t a_m \|_{L^2} + \sum_{k = 0}^m  \big\| \lr{x}^{m+2} a_k \big\|_{L^2} \right).
\end{equation}
This completes the proof. \end{proof}

We will show in the following how to construct solutions $a_j$ to the hierarchy \eqref{eq:3g}, and then bound their derivatives and moments. Together with Lemma \ref{lem:1e} this will give a rigorous construction of approximate solutions to the Dirac equation.
 
\subsection{Spectral analysis of leading order transport operator} \label{sec:3.1} 
The dominant equation of the hierarchy \eqref{eq:3g} is $T_0 a_0=0$, where $T_0$ is defined in \eqref{def:T0}; the other equations are
$$T_0 a_j = -\sum_{\ell=0}^{j-1}  T_{j-\ell } a_\ell, \quad 1 \leq j \leq m. $$

Solving these equations amounts to (i) find $\ker(T_0)$; and (ii) establish a stability estimate (here, a spectral gap) for $T_0^{-1}$ away from $\ker(T_0)$. Below we write $T_0 = L_{\te_t,r_t}$, where
\begin{equation} \label{eq:Lte}
    L_{\te,r} = \matrice{ \cos(\te) \\  \sin(\te) }D_x + \matrice{
 r\kappa_{\te,r}(x) & D_{x_1} - iD_{x_2}
\\ D_{x_1} + iD_{x_2} & -r\kappa_{\te,r}(x)} = \matrice{ \cos(\te) \\  \sin(\te) }D_x + H_{r,\theta}. 
\end{equation}
We now focus on the analysis  of $L_{\theta,r}$ on $\mathcal S(\R^2,\C^2)$. We first compute its kernel (Lemma~\ref{prop:1}) and  prove it is one to one on the orthogonal complement (Lemma~\ref{lem:1b}). 

\begin{lemm}\label{prop:1} For every $r > 0$ and $\te \in \R$, the nullspace of $L_{\theta,r} : \SSS(\R^2,\C^2) \rightarrow \SSS(\R^2,\C^2)$ is  \begin{equation}  \label{eq:1j}
     \ker_{\SSS(\R^2)} (L_{\te,r}) = \left\{ 
     f\big((R_\te x)_1\big) e^{-\frac{r(R_\te x)_2^2}{2}} \matrice{e^{-i\te/2} \\ -e^{i\te/2} }, \ f\in \SSS(\R)
     \right\}.
 \end{equation}
\end{lemm}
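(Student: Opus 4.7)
\medskip

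\noindent\textbf{Proof plan.} The plan is to reduce the statement to the case $\te=0$ via the unitary conjugation from Lemma \ref{lem:equivalence}, and then solve the resulting linear Dirac-type system explicitly using a partial Fourier transform.

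First I would show that the same change of frame and gauge that diagonalizes $H_{\te,r}$ also diagonalizes the extra transport term. Writing $\UU_\te = U_\te \RR_\te$, the matrix $U_\te$ is constant in $x$, so it commutes with the scalar differential operator $\matrice{\cos\te \\ \sin\te} \cdot D_x$. The identity $\RR_\te^{-1} D_x \RR_\te = R_\te^\top D_x$ from the proof of Lemma \ref{lem:equivalence}, combined with $R_\te \matrice{\cos\te \\ \sin\te} = \matrice{1 \\ 0}$, yields
\begin{equation}
    \UU_\te^{-1} \matrice{\cos\te \\ \sin\te} \cdot D_x \, \UU_\te = D_{x_1}.
\end{equation}
Together with Lemma \ref{lem:equivalence} this gives $\UU_\te^{-1} L_{\te,r} \UU_\te = D_{x_1} + H_{0,r} =: L_{0,r}$. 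Since $\UU_\te$ is a bijection of $\SSS(\R^2,\C^2)$, it suffices to compute $\ker_{\SSS(\R^2)}(L_{0,r})$ and then conjugate back.

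Next I would explicitly solve $L_{0,r} a = 0$ on $\SSS(\R^2,\C^2)$. Writing $a = (u,v)^\top$ and using $H_{0,r} = \sigma_1 D_{x_1} + \sigma_2 D_{x_2} + r x_2 \sigma_3$ gives the coupled system
\begin{align}
 (D_{x_1} + r x_2) u + (D_{x_1} - i D_{x_2}) v &= 0, \\
 (D_{x_1} + i D_{x_2}) u + (D_{x_1} - r x_2) v &= 0.
\end{align}
Taking a semiclassical-free Fourier transform in $x_1$ to produce a parameter $\xi_1$ and subtracting the two equations yields the decoupled equation $(rx_2 - iD_{x_2})(\hat u + \hat v) = 0$, whose only $\SSS$-solution in $x_2$ is $\hat u + \hat v = 0$ (the other solution, proportional to $e^{rx_2^2/2}$, is excluded by Schwartz decay). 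Substituting $\hat v = -\hat u$ into the first equation reduces it to $(rx_2 + iD_{x_2})\hat u = 0$, which forces $\hat u(\xi_1,x_2) = g(\xi_1) e^{-rx_2^2/2}$ for some $g \in \SSS(\R)$. Undoing the Fourier transform,
\begin{equation}
 \ker_{\SSS(\R^2)}(L_{0,r}) = \left\{ f(x_1) \, e^{-rx_2^2/2} \matrice{1 \\ -1} : f \in \SSS(\R) \right\}.
\end{equation}

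Finally, I would apply $\UU_\te$ to a generator: $\UU_\te \big[ f(x_1) e^{-rx_2^2/2} \matrice{1 \\ -1}\big]$ equals $f\big((R_\te x)_1\big) e^{-r(R_\te x)_2^2/2} \matrice{e^{-i\te/2} \\ -e^{i\te/2}}$ by definition \eqref{eq:3b} of $\UU_\te$, recovering the formula \eqref{eq:1j}. The only subtle step is the argument that the exponentially growing solutions of the reduced first-order ODEs in $x_2$ must vanish; everything else is direct computation once the right conjugation is in place.
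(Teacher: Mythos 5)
Your proof is correct and follows the same overall structure as the paper's: reduce to $\te=0$ via the conjugation $\UU_\te$, then Fourier transform in $x_1$ to reduce to an ODE in $x_2$. The point of divergence is how the reduced ODE system is solved. The paper normalizes a second time (using the dilation $S_r$) to work with $L_{0,1}$, and then argues by a Wronskian/dimension count: the space of decaying solutions of the $2\times 2$ first-order ODE $L_{0,1}(\xi)v=0$ is at most one-dimensional (since the trace of the associated coefficient matrix vanishes, the Wronskian is constant, and decay forces it to vanish), so exhibiting a single Schwartz solution $e^{-x_2^2/2}(1,-1)^\top$ pins down the whole kernel. You instead work directly with $L_{0,r}$ and decouple the system explicitly by adding and subtracting the two rows, which forces $\hat u+\hat v=0$ (the growing mode $e^{rx_2^2/2}$ being excluded by Schwartz decay) and then $(rx_2+\p_{x_2})\hat u=0$, giving $\hat u=g(\xi)e^{-rx_2^2/2}$. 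Your decoupling is the more constructive route, and it is in fact the same similarity transformation by $\matrice{1 & -1 \\ 1 & 1}$ that the paper introduces only later in Lemma \ref{lem:1b} to define $\tL_{0,1}$; using it here lets you avoid the Wronskian detour and avoid the extra $r$-normalization. Both arguments are of comparable length, and both correctly handle the only subtle point, namely discarding the exponentially growing branch of the first-order ODEs. Your verification of $\UU_\te^{-1}\matrice{\cos\te\\\sin\te}\cdot D_x\,\UU_\te = D_{x_1}$ via $\RR_\te^{-1}D_x\RR_\te=R_\te^\top D_x$ and $R_\te\matrice{\cos\te\\\sin\te}=e_1$ matches the paper's computation, and applying $\UU_\te$ at the end reproduces \eqref{eq:1j} exactly.
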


\begin{proof} As in \eqref{eq:equiva}, $\UU_\te^{-1} L_{\te,r} \UU_\te = L_{0,r}$, with $\UU_\te = \RR_\te U_\te$. Indeed, $U_\te^{-1}  \RR_\te^{-1} H_\te \RR_\te U_\te = H_0$ and 
\begin{equation}
    \UU_\te^{-1}  \matrice{ \cos(\te) \\ \sin(\te)} \cdot D_x \UU_\te = R_\te^\top  e_1 \cdot R_\te^\top D_x = D_{x_1}. 
\end{equation}
Moreover, if $S_r f(x) = f(\sqrt{r}x)$, then we have 
\begin{equation}\label{eq:3p}
    S_r^{-1} H_{0,r} S_r = \sqrt{r} H_{0,1}.
\end{equation}
Hence, $H_{\te,r}$ and $H_{0,1}$ are conjugated (up to multiplication by $\sqrt{r}$). The identity \eqref{eq:Lte} {implies}
that the same holds for $L_{\te,r}$ and $L_{0,1}$:
\begin{equation}\label{eq:2k}
    S_r^{-1} \UU_\te^{-1} L_{\te,r} \UU_\te S_r = \sqrt{r} L_{0,1}.
\end{equation}

Thus, to find the kernel of $L_{\te,r}$, it suffices to find that of $L_{0,1}$. We have 
\begin{equation}
    L_{0,1} = \matrice{D_{x_1} + x_2 & D_{x_1} - iD_{x_2} \\ D_{x_1} + iD_{x_2} &  D_{x_1} - x_2} = \matrice{1 & 1 \\ 1 &  1} D_{x_1} + \matrice{ x_2 &  - iD_{x_2} \\  iD_{x_2} &   - x_2}.
\end{equation}
We claim that
\begin{equation}\label{eq:1m}
    \ker_{\SSS(\R^2)}(L_{0,1}) = \left\{ f(x_1) e^{-\frac{x_2^2}{2}} \matrice{1 \\ -1}: \ f \in \SSS(\R)\right\}.
\end{equation}
The right inclusion follows from a computation. To prove the left inclusion, we pick $u$ such that $L_{0,1} u = 0$. We take the Fourier transform in $x_1$: this gives $L_{0,1}(\xi) \widehat{u}=0$, where
\begin{equation}
    L_{0,1}(\xi) = \xi \matrice{1 & 1 \\ 1 &  1}  + \matrice{ x_2 &  - iD_{x_2} \\  iD_{x_2} &   - x_2}.
\end{equation}
We fix $\xi$. The operator $L_{0,1}(\xi)$ is a linear differential operator; hence the space of decaying solutions to $L_{0,1}(\xi) v = 0$ is at most one-dimensional. Indeed, if $v_1, v_2$ are such functions, then their Wronskian is constant; and they decay. Thus their Wronskian vanishes; this implies that $v_1, v_2$ are linearly dependent. We then observe that
\begin{equation}
    L_{0,1}(\xi) e^{-\frac{x_2^2}{2}} \matrice{1 \\ -1} = 0.
\end{equation}
This shows that the kernel of $L_{0,1}(\xi)$ is one-dimensional. Superposing over $\xi$ yields \eqref{eq:1m}. Applying the equivalence between $L_{0,1}$ and $L_{\te,r}$, we conclude that the kernel of $L_{\te,r}$ is precisely made of functions
 \begin{equation}
   S_r \RR_\te U_\te  \left(f(x_1) e^{-\frac{x_2^2}{2}} \matrice{1 \\ -1}\right) = f\big(\sqrt{r}(R_\te x)_1\big) e^{-\frac{r(R_{\te}x_2)^2}{2}} \matrice{e^{-i\te/2} \\ -e^{i\te/2} }, \ \ \ \ f\in \SSS(\R).
 \end{equation}
This corresponds to \eqref{eq:1j}, where we rescaled $f$ by $\sqrt{r}$ (this preserves the Schwartz class). 
\end{proof}

\medskip

We define the space 
\begin{equation}
    \SSS_{\te,r}(\R^2) = \left\{ u \in \SSS(\R^2,\C^2) : \ u \in \ker_{\SSS(\R^2,\C^2)}(L_{\te,r})^\perp \right\},
\end{equation}
with orthogonality computed with respect to the $L^2$-scalar product. We provide $\SSS_{\te,r}(\R^2)$ with the seminorms inherited from $\SSS(\R^2,\C^2)$. 

\begin{lemm}\label{lem:1b} For every $\te \in \R$ and $r> 0$, the operator $L_{\te,r}$ acting on $\SSS_{\te,r}(\R^2)$ is one to one, with inverse $L_{\te,r}^{-1}$ bounded on $\SSS_{\te,r}(\R^2)$. 
\end{lemm}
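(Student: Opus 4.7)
By \eqref{eq:2k}, the linear isomorphism $\UU_\te S_r$ of $\SSS(\R^2,\C^2)$ preserves $L^2$-orthogonality and maps $\SSS_{\te,r}(\R^2)$ isomorphically onto $\SSS_{0,1}(\R^2)$, so the problem reduces to the case $(\te,r)=(0,1)$. I would then take the Fourier transform in $x_1$ to obtain the $\xi$-parametrized family
\[ L_{0,1}(\xi) = (I+\sigma_1)\xi + \sigma_2 D_{x_2} + \sigma_3 x_2 \]
on $\SSS(\R_{x_2},\C^2)$. In the $\sigma_1$-eigenbasis $|\pm\rangle := \tfrac{1}{\sqrt 2}(1,\pm 1)^\top$ this takes the form $\matrice{2\xi & A \\ A^* & 0}$ with $A = x_2+\p_{x_2}$ and $A^* = x_2-\p_{x_2}$; the subspace $\SSS_{0,1}(\R^2)$ transforms into those $\widehat{u}$ with $\widehat{u}_-(\xi,\cdot)\perp h_0$ in $L^2(dx_2)$ for all $\xi$, where $h_0 = \pi^{-1/4} e^{-x_2^2/2}$ is the Hermite ground state.

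Expanding in the Hermite basis $\{h_n\}_{n\ge 0}$ and using $A h_n = \sqrt{2n}\,h_{n-1}$, $A^* h_n = \sqrt{2(n+1)}\,h_{n+1}$, I would show that $L_{0,1}(\xi)$ is block-diagonal for
\[ L^2(\R_{x_2},\C^2) \;=\; \C\,|-\rangle h_0 \;\oplus\; \bigoplus_{n\ge 1} V_n, \qquad V_n := \mathrm{span}\{|+\rangle h_{n-1},\,|-\rangle h_n\}. \]
The first summand is precisely $\ker L_{0,1}(\xi)$ (recovering Lemma \ref{prop:1}); on $V_n$ the restriction is the constant matrix $M_n(\xi) = \matrice{2\xi & \sqrt{2n} \\ \sqrt{2n} & 0}$ with $\det M_n(\xi) = -2n \neq 0$, and $M_n(\xi)^{-1} = \matrice{0 & (2n)^{-1/2} \\ (2n)^{-1/2} & -\xi/n}$. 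Block\-wise inversion defines a unique candidate $u$ in $\ker L_{0,1}^\perp$: its Fourier--Hermite coefficients read $u_{n-1}^+(\xi) = (2n)^{-1/2} c_n^-(\xi)$ and $u_n^-(\xi) = (2n)^{-1/2} c_{n-1}^+(\xi) - (\xi/n)\,c_n^-(\xi)$ for $n \ge 1$, with $u_0^-\equiv 0$ forced by the orthogonality condition.

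To conclude I would invoke the standard Fourier--Hermite characterization of the Schwartz class: a doubly-indexed sequence $(c_n^\pm(\xi))$ represents an element of $\SSS(\R^2,\C^2)$ iff $\sup_{n,\xi}(1+n+|\xi|)^N\,|\p_\xi^k c_n^\pm(\xi)| < \infty$ for every $N,k$. The explicit formulas for $u_n^\pm$ above propagate this estimate from $f$ to $u$, with each Schwartz seminorm of $u$ bounded by a seminorm of $f$. Since every step of the reduction preserves the $\SSS$-topology, this gives boundedness of $L_{\te,r}^{-1}:\SSS_{\te,r}(\R^2)\to\SSS_{\te,r}(\R^2)$; injectivity on $\SSS_{\te,r}(\R^2)$ is automatic since this space is defined as $\ker L_{\te,r}^\perp$.

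The main technical obstacle is the factor $-\xi/n$ in $M_n(\xi)^{-1}$. A purely $L^2$-based argument cannot succeed: the smaller eigenvalue of $M_n(\xi)$ equals $\sqrt{\xi^2+2n}-|\xi|$, which tends to $0$ as $|\xi|\to\infty$ for each fixed $n$, so $L_{0,1}$ has no spectral gap on $\ker^\perp$ and its inverse is unbounded on $L^2\cap\ker^\perp$. The proof nevertheless goes through in Schwartz space because multiplication by $\xi$ in the Fourier side is the action of $D_{x_1}$, which preserves $\SSS$, while the factor $1/n$ supplies matching decay in the Hermite index; together they absorb the linear-in-$\xi$ growth into the Schwartz seminorms.
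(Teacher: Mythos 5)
Your proposal follows essentially the same route as the paper's proof: after reducing to $L_{0,1}$ via \eqref{eq:2k}, both conjugate into the normal form $\matrice{0 & \aaa^*\\ \aaa & 2D_{x_1}}$ (your $\matrice{2\xi & A\\ A^* & 0}$ after Fourier transform, with the two components swapped), Fourier--Hermite block-diagonalize, and invert the explicit $2\times 2$ blocks $M_n(\xi)$; the only procedural difference is that the paper invokes the closed graph theorem to deduce continuity of the inverse Fr\'echet isomorphism, while you propagate the Schwartz seminorms directly through the explicit formula for $M_n(\xi)^{-1}$. Your closing observation that there is no $L^2$ spectral gap --- the smaller singular value $\sqrt{\xi^2+2n}-|\xi|$ of the real symmetric $M_n(\xi)$ tends to $0$ as $|\xi|\to\infty$ at fixed $n$, so $L_{0,1}^{-1}$ is unbounded on $L^2\cap\ker(L_{0,1})^\perp$ --- is correct and worth flagging, since the preamble to \S\ref{sec:3.1} speaks loosely of establishing ``a spectral gap'' for $T_0^{-1}$; what actually makes the lemma work is the Fr\'echet absorption mechanism you describe, where $D_{x_1}$ preserves $\SSS$ and the factor $1/n$ supplies the decay that compensates the linear growth in $\xi$.
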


\begin{proof} \textbf{1.} We recall that $L_{\te,r}$ and $\sqrt{r}L_{0,1}$ are conjugated by operators bounded on $\SSS(\R^2,\C^2)$, see \eqref{eq:2k}. Thus, it suffices to prove the lemma for $L_{0,1}$ only.

We introduce the annihilation and creation operators $\aaa$ and $\aaa^*$, as well as its associated quantum harmonic oscillator $\hh = \aaa^* \aaa$ and quantum states $\vp_n$:
\begin{align}
    \aaa = x_2 + \p_{x_2}, \ \ \ \ \aaa^* = x_2-\p_{x_2}, \ \ \ \ \mathfrak{h} = -\p_{x_2}^2 + x_2^2 -1,
    \\
    \vp_0(x_2) = \frac{1}{\pi^{1/4}}e^{-\frac{x_2^2}{2}}, \ \ \ \ \vp_n(x_2) = \dfrac{(\aaa^*)^n}{2^{n/2}\sqrt{n!}} \vp_0(x_2). 
\end{align}
The quantum states $\vp_n$ form a complete orthonormal basis of eigenvectors of $\mathfrak{h}$: for every $n$, $\|\vp_n\|_{L^2} = 1$ and $\mathfrak{h} \vp_n = 2n \vp_n$. Moreover they satisfy the creation and annihilation relations: $\aaa \vp_0 = 0$ and for $n\in\N$,
 \begin{equation}
 \label{eq:creationandann}
    \aaa^* \vp_n =  \sqrt{2n+2} \vp_{n+1}, \ \ \ \ \aaa \vp_{n+1} = \sqrt{2n+2} \vp_n.
\end{equation}
Introduce
\begin{equation}\label{eq:7j}
  \tL_{0,1} = \matrice{1 & -1 \\ 1 &  1} L_{0,1} \matrice{1 & -1 \\ 1 &  1}^{-1} =  \matrice{0 & \aaa^* \\ \aaa & 2D_{x_1}},
\end{equation}
and the associated space $\tSSS_{0,1}(\R^2)$ -- defined similarly as $\SSS_{0,1}(\R^2)$:
\begin{align}
    \tSSS_{0,1}(\R^2) & = \left\{ u \in \SSS(\R^2) : \ u \in \ker_{\SSS(\R^2)}(\tL_{0,1})^\perp \right\} 
    \\
    & =  \left\{ u \in \SSS(\R^2) : \  \forall x_1 \in \R, \  \int_{\R^2} u_1(x) \vp_0(x_2) dx_2 = 0 \right\}. \label{eq:8c}
\end{align}
The lemma boils down to prove that $\tL_{0,1}$ is invertible on $\tSSS_{0,1}(\R^2)$. 

\newcommand{\WW}{\mathcal{W}}
\textbf{2.} Let $\WW$ be the Fr\'echet space of functions $w \in C^\infty(\R \times \N, \C^2)$ such that $w_1(\cdot,0) = 0$, equipped with the seminorms 
\begin{equation}
    N_{\alpha,\beta,\gamma}(w) = \sup_{n, \xi} \left| \lr{n}^{2\alpha}  \lr{\xi}^\beta \p_{\xi}^\gamma w(\xi,n) \right|, \ \ \ \ \alpha, \beta, \gamma \in \N.
\end{equation}
We define $S : \tSSS_{0,1}(\R^2) \rightarrow \WW$ by
\begin{equation} \label{eq:S}
    Su(\xi,n) = \int_{\R^2} e^{-i\xi x_1} \matrice{u_1(x) \vp_{n+1}(x_2) \\ u_2(x) \vp_n(x_2)} dx, \ \ \ \ u \in \tSSS_{0,1}\big(\R^2\big), \ n \in \N, \ \xi \in \R. 
    \end{equation}

We first observe that $S : \tSSS_{0,1}(\R^2) \rightarrow \WW$ is continuous. Indeed, if $u \in \tSSS_{0,1}(\R^2)$ and $\alpha, \beta, \gamma \in \N$, we have
\begin{equation}\label{eq:2e}
    \lr{2n}^{2\alpha}  \lr{\xi}^\beta D_\xi^\gamma Su(\xi,n)= S v(\xi,n), \ \ \ \ v(x) = \lr {\hh}^{2\alpha} \lr{D_{x_1}}^\beta (-x_1)^\gamma u(x).
\end{equation}
Moreover, $v\in \SSS(\R^2)$ when $u \in \SSS(\R^2)$. The Cauchy--Schwarz inequality yields
\begin{equation}
 N_{\alpha,\beta,\gamma}(Su) = \sup_{n,\xi} \big| Sv(\xi,n) \big|\leq \sup_n \int_{\R^2} \left| \matrice{v_1(x) \vp_{n+1}(x_2) \\ v_2(x) \vp_n(x_2)} \right| dx \leq 2\int_\R \left(\int_\R \big| v(x) \big|^2 dx_2\right)^{1/2} dx_1,
\end{equation}
where we used  $\|\vp_n\|_{L^2} = 1$.
The RHS is controlled by Schwartz semi-norms of $v=\lr{\hh}^{2\alpha} \lr{x_1}^\beta D_{x_1}^\gamma u$, thus of $u$. Hence $u \equiv 0$ and $S$ is continuous. 

Moreover, $S$ is invertible. The range of $S$ is $\WW$: if $w \in \WW$ then we have $Su = w$ with
\begin{equation}
     u(x) = \frac{1}{2\pi} \int_\R  e^{i\xi x_1} \sum_{n=0}^\infty \matrice{ \vp_{n+1}(x_2) w_1(\xi,n) \\ \vp_n(x_2) w_2(\xi,n)} d\xi,
\end{equation}
using the Fourier inversion formula and orthogonality relations for the $\vp_n$. We now show that $S$ is one-to-one. If $u \in \tSSS_{0,1}(\R^2)$ is such that $Su \equiv 0$ then
\begin{equation}\label{eq:8d}
    \forall x_1 \in \R, \ n \in \N, \ \ \ \ \int_\R \matrice{u_1(x) \vp_{n+1}(x_2) \\ u_2(x) \vp_n(x_2)} dx_2 = 0
\end{equation}
from the Fourier inversion formula. Since $\vp_n$ forms an orthonormal basis of $L^2(\R)$, \eqref{eq:8d} implies that $u_2 \equiv 0$ and $u_1(x) = c(x_1) \vp_0(x_2)$. From $u \in \tSSS_{0,1}(\R^2)$ and  \eqref{eq:8c}, $u_1 \equiv 0$. Hence $S$ is invertible.

\textbf{3.} Because of the closed graph theorem, invertible continuous operators between Fr\'echet spaces have continuous inverses. Hence the inverse of $S$ is continuous from $\WW$ to $\tSSS_{0,1}(\R^2)$.  Hence, to prove the lemma it suffices to show that $S \tL_{0,1} S^{-1} : \WW \rightarrow \WW$ is continuously invertible. But $S \tL_{0,1} S^{-1}$ is actually a simple multiplication operator: using that $D_{x_1}$ corresponds to $\xi$ in Fourier space and $\aaa, \aaa^*$ are shift operators -- see \eqref{eq:creationandann} --  in Hermite space, we have:
\begin{equation}\label{eq:1y}
 S \tL_{0,1} S^{-1} w(\xi,n) =   \matrice{0 & \sqrt{2n+2} \\ \sqrt{2n+2} & 2\xi} w(\xi,n).
\end{equation}
This is a continuous operator on $\WW$; and \eqref{eq:1y} yields a formula for $\tL_{0,1}^{-1}$:
\begin{equation}\label{eq:4p}
  \tL_{0,1}^{-1} 
 =  S^{-1} \dfrac{1}{2n+2} \matrice{2\xi & -\sqrt{2n+2} \\ -\sqrt{2n+2} & 0} S.
\end{equation}
This completes the proof.
\end{proof}

\subsection{Solving the dominant equation.}\label{sec:3.2} We now focus on solving the hierarchy of equations~\eqref{eq:3g}, starting with the first two:
\begin{equation}
    T_0 a_0 = 0, \ \ \ \ T_0a_1 + T_1 a_0 = 0.
\end{equation}
Below we abuse notation: we allow functions in $\SSS(\R)$ or $\SSS(\R^2,\C^2)$ to also depend smoothly on time, and we consider the operator $\KK_t$ from \eqref{def:a_0} on functions depending on $t$. For instance, we write \eqref{eq:3n} as
\begin{equation}\label{eq:3n}
    a_0(t,x) = \KK_t f_0(t,x)  = r_t^{1/4} f_0\big(t,(R_{\te_t} x)_1\big) e^{-\frac{r_t (R_{\te_t}x)_2^2}{2}} \matrice{e^{-i\te_t/2} \\ -e^{i\te_t/2} }
\end{equation}

Since $T_0 = L_{\te_t,r_t}$, Lemma \ref{prop:1} implies that for any $f_0 \in \SSS(\R)$ (potentially depending on $t$), \eqref{eq:3n}
solves the equation $T_0 a_0 = 0$. 


\subsection{Solving the subleading equation.}\label{sec:5.3} The subleading equation in the hierarchy \eqref{eq:3g} is $T_0 a_1 + T_1 a_0 = 0$ where $T_0=L_{\te_t,r_t}$ and 
\begin{equation}\label{eq:4v}
    T_1 =  D_t + \sum_{|\alpha| = 2} \dfrac{\p^\alpha \kappa(y_t)}{\alpha!}  x^\alpha \sigma_3.
\end{equation}
Given $a_0$ satisfying \eqref{eq:3n}, we regard $T_0 a_1 + T_1 a_0 = 0$ as an equation with unknown $a_1 \in \SSS(\R^2,\C^2)$. According to Lemma \ref{lem:1b}, a solution exists if for any $t \in \R$, $T_1 a_0(t,\cdot) \in \SSS_{\te_t,r_t}(\R^2)$. We now look for $f_0$ such that this holds.

\medskip 

We note that $T_1 a_0 \in \SSS_{\te_t,r_t}(\R^2)$ if and only if for every $t \in \mathbb R$ and $g\in \SSS(\mathbb R)$:
\begin{equation}\label{eq:3l}
    \int_{\R^2} g\big((R_{\te_t} x)_1\big) e^{-\frac{r_t (R_{\te_t}x)_2^2}{2}} \matrice{e^{i\te_t/2} \\ -e^{-i\te_t/2}} \cdot T_1 a_0(t,x) dx = 0.
\end{equation}
We make the substitution $x \mapsto R_{\te_t}^\top x$ and pick functions $g$ approaching delta distributions to obtain that \eqref{eq:3l} is equivalent to:
\begin{equation}\label{eq:1l}
\forall t, x_1 \in \R, \ \ \ \ 
    \int_{\R}   e^{-\frac{r_t x_2^2}{2}} \matrice{e^{i\te_t/2} \\ -e^{-i\te_t/2}} \cdot (T_1 a_0)\big(t,R_{\te_t}^\top x\big) dx_2 = 0.
\end{equation}

\begin{lemm}\label{lem:1d} If $f(t,\cdot) \in \SSS(\R)$ depends smoothly on $t$, then
\begin{equation}\label{eq:4h}
    \int_{\R}   e^{-\frac{r_t x_2^2}{2}} \matrice{e^{i\te_t/2} \\ -e^{-i\te_t/2}} \cdot (T_1 \KK_t f)\big(t,R_\te^\top x\big) dx_2 = 2 \sqrt{\dfrac{\pi}{r_t}} D_t f(t,x_1).
\end{equation}
\end{lemm}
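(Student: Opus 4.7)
The identity is a direct computation whose structure is dictated by algebraic properties of the projection. I would organize it in three stages, each of which eliminates one class of terms.

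First I would reduce $T_1$, after projection, to its $D_t$-part. The other piece of $T_1$, namely $Q_t(x)\sigma_3$ with $Q_t(x)=\sum_{|\alpha|=2}\partial^\alpha\kappa(y_t)x^\alpha/\alpha!$, is killed pointwise: the vector $\sigma_3 v_t = [e^{-i\te_t/2}, e^{i\te_t/2}]^\top$ (with $v_t = [e^{-i\te_t/2}, -e^{i\te_t/2}]^\top$) is Euclidean-orthogonal to the test vector $w_t = [e^{i\te_t/2}, -e^{-i\te_t/2}]^\top$. Writing $\KK_t f = \phi(t,x) v_t$ with $\phi$ scalar, a short calculation shows $D_t v_t$ is proportional to the same bad vector $[e^{-i\te_t/2}, e^{i\te_t/2}]^\top$, hence also vanishes against $w_t$. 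Combined with $w_t \cdot v_t = 2$, these reductions give
\[
 w_t \cdot (T_1 \KK_t f)(t, R_{\te_t}^\top x) = 2 (D_t \phi)(t, R_{\te_t}^\top x).
\]

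Next, I would compute $D_t \phi$ by the chain rule, tracking the $t$-dependence of the prefactor $r_t^{1/4}$, of $f(t,\cdot)$, of the rotated coordinates $u = (R_{\te_t} y)_1$ and $v = (R_{\te_t} y)_2$ (which satisfy $\partial_t u = \dot\te_t v$, $\partial_t v = -\dot\te_t u$), and of $r_t$ inside the exponent $e^{-r_t v^2/2}$. After setting $y = R_{\te_t}^\top x$ one has $u = x_1$, $v = x_2$. Multiplying by the weight $e^{-r_t x_2^2/2}$ and integrating in $x_2$ then relies on two clean simplifications: the $\dot\te_t$-terms (one proportional to $x_2 \partial_{x_1} f$, one to $x_1 x_2 f$) are odd in $x_2$ against the even Gaussian $e^{-r_t x_2^2}$, so they vanish; and the two $\dot r_t$-contributions, one from $\partial_t r_t^{1/4}$ and one from $\partial_t e^{-r_t v^2/2}$, combine via the moments $\int e^{-r_t x_2^2}\, dx_2 = \sqrt{\pi/r_t}$ and $\int x_2^2 e^{-r_t x_2^2}\, dx_2 = (2r_t)^{-1} \sqrt{\pi/r_t}$ into an exact cancellation. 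Only the direct $D_t f$ contribution survives, producing the stated right-hand side.

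\textbf{Main obstacle.} The principal hurdle is simply careful bookkeeping: there are several competing sources of $t$-dependence (scalar prefactor, direct $\partial_t$ of $f$, chain rule through $u$ and $v$, the Gaussian exponent, and the spinor $v_t$), and it is easy to misattribute signs or double-count. The conceptually illuminating step, however, is the $\dot r_t$-cancellation in the last stage: without it a spurious $\dot r_t f$ term would spoil the lemma, and the cancellation happens precisely because of the exponent $1/4$ in the $r_t^{1/4}$ normalization of $\KK_t$ --- which justifies this otherwise mysterious choice.
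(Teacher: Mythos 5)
Your proposal takes essentially the same route as the paper. Both proofs proceed by (i) using the pointwise orthogonality $w_t \cdot \sigma_3 v_t = 0$ and $w_t \cdot \partial_t v_t = 0$ (the paper's identities~\eqref{eq:1o}) together with $w_t \cdot v_t = 2$ to reduce $w_t \cdot T_1 \KK_t f$ to $2 D_t$ of the scalar profile; (ii) expanding $\partial_t$ by the chain rule through $r_t^{1/4}$, the rotated coordinates (via $\dot R_{\te_t} R_{\te_t}^\top x = \dot\te_t[x_2,-x_1]^\top$), and the Gaussian exponent; and (iii) observing that the $\dot\te_t$-terms are odd in $x_2$ and vanish under the even Gaussian weight, while the two $\dot r_t$-contributions cancel via the moment identity $\int e^{-r_t x_2^2}dx_2 = 2r_t\int x_2^2 e^{-r_t x_2^2}dx_2$. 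Your emphasis that this cancellation is exactly what the exponent $1/4$ in the normalization of $\KK_t$ is designed to produce is a nice observation that the paper leaves implicit in its closing identity $\big(\partial_t - \tfrac{\dot r_t}{4 r_t}\big) r_t^{1/4} = r_t^{1/4}\partial_t$.
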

 
\begin{proof} We note the identities
\begin{equation}\label{eq:1o}
  \lr{ \matrice{e^{-i\te_t/2} \\ -e^{i\te_t/2}},  \sigma_3 \matrice{e^{-i\te_t/2} \\ -e^{i\te_t/2}} } = 0, \ \ \ \ \lr{ \matrice{e^{-i\te_t/2} \\ -e^{i\te_t/2}},   \matrice{-\dot{\te_t} e^{-i\te_t/2} \\ -\dot{\te_t} e^{i\te_t/2}} } = 0.
\end{equation}
Therefore, using the expressions \eqref{eq:4v} for $T_1$ and \eqref{def:a_0} for $\KK_t$,  we have:
\begin{multline}
    \matrice{e^{i\te_t/2} \\ -e^{-i\te_t/2}} \cdot T_1 \KK_tf(t,x) = 2D_t \left( r_t^{1/4} f\big(t,(R_{\te_t} x)_1\big) e^{-\frac{r_t (R_{\te_t}x)_2^2}{2}} \right)
 \\ =  \dfrac{2}{i} e^{-\frac{r_t (R_{\te_t}x)_2^2}{2}} \left( \dfrac{\p}{\p t}   + (\dot{R_{\te_t}} x)_1 \dfrac{\p}{\p x_1}  - \dfrac{\dot{r_t}(R_{\te_t}x)_2^2}{2}  - r_t  (R_{\te_t}x)_2 (\dot{R_{\te_t}}x)_2 \right) r_t^{1/4} f\big(t,(R_{\te_t} x)_1\big).  
\end{multline}
We deduce that 
\begin{multline}\label{eq:3m}
    \matrice{e^{i\te_t/2} \\ -e^{-i\te_t/2}} \cdot T_1 \KK_t f(t,R_{\te_t}^\top x)  
    \\
    =
    -2i e^{-\frac{r_t x_2^2}{2}} \left( \dfrac{\p}{\p t}  + (\dot{R_{\te_t}} R_{\te_t}^\top x)_1 \dfrac{\p}{\p x_1}  - \dfrac{\dot{r_t}x_2^2}{2} - r_t  x_2 (\dot{R_{\te_t}} R_{\te_t}^\top x)_2 \right) r_t^{1/4}   f\big(t,x_1\big).
\end{multline}

We remark that 
\begin{equation}\label{eq:7u}
\dot{R_{\te_t}} \cdot R_\te^\top  x =  \dot{\te_t} \matrice{0 & 1 \\ -1 & 0} x = \dot{\te_t} \matrice{x_2 \\ -x_1}.
\end{equation}
We deduce that \eqref{eq:3m} becomes:  
\begin{align}
    \matrice{e^{i\te_t/2} \\ -e^{-i\te_t/2}} T_1 \KK_t f(t,R_{\te_t}^\top x)
   =
    -2i  e^{-\frac{r_t x_2^2}{2}} \left( \dfrac{\p}{\p t}   + \dot{\te_t} x_2  \dfrac{\p}{\p x_1}  - \dfrac{\dot{r_t}x_2^2}{2} + r_t \dot{\te_t}  x_2 x_1 \right)  r_t^{1/4} f\big(t,x_1\big).
\end{align}
We plug this identity in \eqref{eq:4h} to obtain:
\begin{equation}\label{eq:4y}
   -2i \int_{\R}   e^{-r_t x_2^2} \left( \dfrac{\p}{\p t}  + \dot{\te_t} x_2  \dfrac{\p}{\p x_1}  - \dfrac{\dot{r_t}x_2^2}{2} + r_t \dot{\te_t}  x_2 x_1 \right) dx_2 \cdot r_t^{1/4}  f\big(t,x_1\big).
\end{equation}
We now perform the integrals over $x_2$. The function $x_2 e^{-r_t x_2^2}$ has vanishing integral; moreover an integration by parts shows that
\begin{equation}
   \sqrt{\dfrac{\pi}{r_t}} =   \int_{\R}   e^{-r_t x_2^2} dx_2 = 2r_t \cdot \int_{\R}x_2^2  e^{-r_t x_2^2} d x_2.
\end{equation}
Hence \eqref{eq:4y} reduces to:
\begin{equation}\label{eq:4z}
    -2i \sqrt{\dfrac{\pi}{r_t}}\left( \dfrac{\p}{\p t} - \dfrac{\dot{r_t}}{4r_t} \right) r_t^{1/4} f\big(t,x_1\big). 
\end{equation}

We finally observe that in the sense of differential operators,
\begin{equation}
    \left( \dfrac{\p}{\p t} - \dfrac{\dot{r_t}}{4r_t} \right) r_t^{1/4} = \dfrac{\p}{\p t}.
\end{equation}
Using this identity in \eqref{eq:4z} completes the proof. 
\end{proof}

From \eqref{eq:1l} and Lemma \ref{lem:1d}, we obtain the transport equation for $f_0$: $D_t f_0 = 0$. Hence,~$f_0$ depends on $x_1$ only, and we write $f_0(t,x_1) = f_0(x_1)$. Therefore, if
\begin{equation}\label{eq:3k}
a_0(t,x) = r_t^{1/4} f_0\big((R_{\te_t} x)_1\big) e^{-\frac{r_t (R_{\te_t}x)_2^2}{2}} \matrice{e^{-i\te_t/2} \\ -e^{i\te_t/2} } = \KK_t f_0(x)
\end{equation} 
for some $f_0 \in \SSS(\R)$, then $T_1 a_0(t,\cdot) \in \SSS_{\te_t,r_t}(\R^2)$ for every $t \in \R$; hence the equation $T_0 b_1 + T_1 a_0 = 0$ has a unique solution $b_1$ such that $b_1(t,\cdot) \in \SSS_{\te_t,r_t}(\R^2)$ for every $t \in \R$. We obtain the general solution to $T_0 a_1 + T_1 a_0 = 0$ by adding an element of $\ker(L_{\te_t,r_t})$: $a_1 = b_1 + \KK_t  f_1$:
\begin{equation}\label{eq:1r}
    a_1(t,x) = b_1(t,x) + r_t^{1/4} f_1\big(t,(R_{\te_t}x)_1\big) e^{-\frac{r_t (R_{\te_t} x)^2_2}{2}} \matrice{e^{-i\te_t/2} \\ -e^{i\te_t/2} }, \ \ \ \ f_1(t,\cdot) \in \SSS(\R).
\end{equation}

\subsection{Proof of Theorem \ref{thm:2}}\label{sec:3.6}

We are now in a position to prove Theorem \ref{thm:2}. We start with a classical result based on Duhamel's formula.

\begin{lemm}\label{lem:1a} Let $\psi_t \in \SSS(\R^2)$ be a solution to $(\epsi D_t + H) \psi_t = 0$. Then for any $v_t \in \SSS(\R^2)$,
\begin{equation}
    \big\| v_t - \psi_t \big\|_{L^2} \leq \| v_0 -\psi_0 \|_{L^2} + \dfrac{1}{\epsi} \int_0^t \big\| (\epsi D_s + H)v_s \big\|_{L^2} ds.
\end{equation}
\end{lemm}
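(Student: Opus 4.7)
The plan is to cast the statement as a Gr\"onwall/Duhamel estimate for the difference $w_t := v_t - \psi_t$, exploiting the fact that $H$ generates a unitary group on $L^2(\R^2,\C^2)$. First I would observe that $H$, as defined in \eqref{eq:Dirac}, is a first-order symmetric differential operator with bounded potential $\kappa \in C_b^\infty(\R^2)$, and is essentially self-adjoint on $\SSS(\R^2,\C^2)$; hence $U(t) := e^{-itH/\epsi}$ is a strongly continuous group of unitary operators on $L^2$. In particular $\|U(t)\phi\|_{L^2} = \|\phi\|_{L^2}$ for every $\phi \in L^2$ and every $t \in \R$.

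Next I would compute the equation satisfied by $w_t$. Since $(\epsi D_t + H)\psi_t = 0$ by assumption, linearity yields $(\epsi D_t + H) w_t = (\epsi D_s + H)v_s\big|_{s=t} =: F_t$. Rewriting $\epsi D_t = -i\epsi \p_t$, this becomes the inhomogeneous Schr\"odinger-type equation
\begin{equation}
 \p_t w_t + \tfrac{i}{\epsi} H w_t = \tfrac{i}{\epsi} F_t.
\end{equation}
Because $v_t, \psi_t \in \SSS(\R^2,\C^2)$ depend smoothly on $t$, both $w_t$ and $F_t$ belong to $\SSS$, so Duhamel's formula applies rigorously and gives
\begin{equation}
 w_t = U(t) w_0 + \tfrac{i}{\epsi} \int_0^t U(t-s) F_s \, ds.
\end{equation}

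Finally I would take the $L^2$-norm of this identity and invoke unitarity of $U(t-s)$ together with Minkowski's inequality, producing
\begin{equation}
 \|w_t\|_{L^2} \leq \|w_0\|_{L^2} + \tfrac{1}{\epsi}\int_0^t \|F_s\|_{L^2}\, ds,
\end{equation}
which is exactly the claim after substituting back $w_t = v_t - \psi_t$ and $F_s = (\epsi D_s + H) v_s$. I do not anticipate a real obstacle here: the only subtle point is ensuring the validity of Duhamel's formula and of the unitary group, but both follow from the essential self-adjointness of $H$ on $\SSS$, which is standard for bounded perturbations of the free Dirac operator. If one wished to avoid invoking self-adjointness directly, the same conclusion can be obtained by differentiating $t \mapsto \|w_t\|_{L^2}^2$ and using the symmetry of $H$ to get $\tfrac{d}{dt}\|w_t\|_{L^2}^2 = 2\epsi^{-1}\operatorname{Im}\lr{w_t, F_t}_{L^2}$, followed by an elementary integration.
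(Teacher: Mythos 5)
Your proof is correct and follows essentially the same route as the paper: set $w_t = v_t - \psi_t$, apply Duhamel's formula with the unitary propagator $e^{-itH/\epsi}$, and conclude by unitarity. The only difference is that you spell out the self-adjointness of $H$ and offer the energy-method alternative, both of which the paper leaves implicit.
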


\begin{proof} Let $w_t = v_t - \psi_t$ and $r_t = (\epsi D_t + H)v_t$. Then,  $(\epsi D_t + H)w_t = r_t$. 
By Duhamel's formula, 
\begin{equation}
  v_t - \psi_t =  w_t = e^{-itH/\epsi} w_0 + \dfrac{1}{\epsi} \int_0^t e^{-i(t-s) H / \epsi} r_s ds = e^{-itH/\epsi} (v_0-\psi_0) + \dfrac{1}{\epsi} \int_0^t e^{-i(t-s) H / \epsi} r_s ds.
\end{equation}
We bound both sides in $L^2$, using that $e^{-itH}$ is unitary:
\begin{equation}
    \| v_t -\psi_t \|_{L^2} \leq  \| v_0 -\psi_0 \|_{L^2} + \dfrac{1}{\epsi} \int_0^t \big\| (\epsi D_s + H)v_s \big\|_{L^2} ds.
\end{equation}
This completes the proof.
\end{proof} 

\begin{proof}[Proof of Theorem \ref{thm:2}] \textbf{1.} Let $f_0 \in \mathcal{S}(\field{R})$. Let $a_0$ as in \eqref{eq:3n}, $b_1$ is as in \eqref{eq:1r} and $a^{(1)} = a_0+\epsi^{1/2} a_1$. We apply Lemma \ref{lem:1e} with $m = 1$:
\begin{equation}\label{eq:2o}
   \big\| (\epsi D_t + H) W[a^{(1)}]_{y_t} \big\|_{L^2} \leq C \epsi^{3/2} \left( \| D_t b_1 \|_{L^2} + \big\| \lr{x}^3 a_0 \big\|_{L^2} + \big\| \lr{x}^3 b_1 \big\|_{L^2} \right).
\end{equation}

\textbf{2.} We now bound the right-hand-side of \eqref{eq:2o}, starting with $\lr{x}^3 a_0$ in $L^2$. We write $a_0 = \KK_{\te_t,r_t} f_0$, 
where
\begin{equation}
    \KK_{\te,r} = r^{1/4} f\big( (R_\te x)_1 \big) e^{-\frac{r(R_\te x)_2^2}{2}} \matrice{e^{-i\te/2} \\ - e^{i\te/2}}.
\end{equation}
We note that we have the identity $\KK_{\te,r} = \DD_r \,\UU_\te \,\KK_{0,1}$, where $\DD_r$ is a partial dilation operator and $\UU_\te$ was introduced in  \eqref{eq:3b}:
\begin{equation}\label{eq:2q}
\DD_r g(x) = r^{1/4} g\left(x_1, \sqrt{r}x_2\right), \ \ \  \ \UU_\te g(x) = \matrice{e^{-i\te/2} & 0 \\ 0 & e^{i\te/2}} g(R_\te x).
\end{equation}
The operator $\KK_{0,1}$ is bounded from $\SSS(\R)$ to $\SSS(\R^2,\C^2)$; $\UU_\te$ is uniformly bounded from  $\SSS(\R^2)$  to $\SSS(\R^2,\C^2)$ for $\te \in \R$; and  $\DD_r$ is bounded uniformly on $\SSS(\R^2)$ for $r$ in compact subsets of $(0,\infty)$. Moreover, $r_t = \big| \nabla \kappa(y_t) \big|$ lives in a compact subset of $(0,\infty)$, because of  $\kappa \in C^\infty_b(\R^2)$ and \eqref{eq:7z}. We deduce that $a_0 \in \SSS(\R^2)$, with uniform-in-time bounds on its seminorms. In particular, $\big\| \lr{x}^3 a_0 \big\|_{L^2}$ is uniformly bounded. 

\medskip 

For later use, we observe that $\p_t a_0$ is also uniformly bounded in $\SSS(\R^2)$. Indeed, from \eqref{eq:2q}, we have 
\begin{equation}\label{eq:2r}
    \p_t a_0 =  \dot{r_t} \p_r \DD_{r_t} \, \UU_{\te_t}\, \KK_{0,1} f_0 + \dot{\te_t} \,\DD_{r_t} \,\p_\te \UU_{\te_t} \, \KK_{0,1} f_0.
\end{equation}
The operators $\p_\te  \UU_{\te_t}$ and $\p_r \DD_{r_t}$ are uniformly bounded on $\SSS(\R^2)$ -- the latter because $r_t$ lives in a compact subset of $(0,\infty)$. The quantities $\dot{r_t}$ and $\dot{\te_t}$ are uniformly bounded: 
\begin{equation}
    \left| \dot{r_t} \right| = \dfrac{\lr{\nabla \kappa(y_t), \nabla^2 \kappa(y_t)}}{2 \big|\nabla \kappa (y_t)\big|} \leq \big| \nabla \kappa(y_t) \big| \cdot \big| \nabla^2 \kappa(y_t) \big| \leq C;
\end{equation}
and likewise,
\begin{equation}
  \big|\dot{\te_t}\big| = \left| \dfrac{d}{dt} \dfrac{\nabla \kappa(y_t)}{r_t} \right| \leq \dfrac{1}{r_t} + \dfrac{|\nabla^2 \kappa(y_t)|}{r_t} \leq C.
\end{equation}

Therefore, we deduce from \eqref{eq:2r} that $\p_t a_0$ is uniformly bounded in $\SSS(\R^2)$.  

\textbf{3.} We now control in $L^2$ the terms $D_t b_1$ and $\lr{x}^3 b_1$ that appear in \eqref{eq:2o}. We use \eqref{eq:2k} to write $b_1$ as:
\begin{equation}\label{eq:2s}
    b_1(t,\cdot) = -L_{\te_t,r_t}^{-1} a_0 = -\sqrt{r_t} S_{r_t}^{-1} \UU_{\te_t}^{-1} L_{0,1}^{-1} \UU_{\te_t} S_{r_t} a_0.
\end{equation}
As in Step 2, all operators involved in \eqref{eq:2s} are uniformly bounded in $\SSS(\R^2)$, and we deduce that $b_1 \in \SSS(\R^2)$ uniformly in time. Also similarly to \eqref{eq:2r}, taking time derivatives produces quantities such as $\dot{r_t}$, $r_t^{-1/2}$, $\dot{\te_t}$ (all uniformly bounded); operators such as $\p_r \DD_{r_t}, \p_r \DD_{r_t^{-1}}$, $\p_\te \UU_{\te_t}$ and $\p_\te \UU_{-\te_t}$, all uniformly bounded on $\SSS(\R^2)$; and the function $\p_t a_0$ -- also bounded uniformly in $\SSS(\R^2)$. We deduce that $b_1, \p_t b_1$ are uniformly in $\SSS(\R^2)$. Hence, $\big\| \lr{x}^3 b_1 \big\|_{L^2}$ and $\big\| \p_t b_1 \big\|_{L^2}$ are uniformly bounded. 

\textbf{4.} Going back to \eqref{eq:2o}, we have for any $t$:
\begin{equation}\label{eq:2v}
   \left\| (\epsi D_t + H) W[a^{(1)}]_{y_t} \right\|_{L^2} \leq C \epsi^{3/2}.
\end{equation}
Let $\psi_t$ be the solution to $(\epsi D_t + H)\psi_t = 0$ with initial data $\psi_0 = a_0(0,\cdot)$; and $v_t = W[a^{(1)}]_{y_t}$. We note that $v_0 - \psi_0 = \epsi^{1/2} b_1(0,\cdot)$ and that $v_t$ satisfies the bound \eqref{eq:2v}. Thanks to Lemma \ref{lem:1a}, we get
\begin{equation}
    \| v_t - \psi_t \|_{L^2} \leq \epsi^{1/2} \big\| b_1(0,\cdot) \big\|_{L^2} + C \epsi^{1/2} t.
\end{equation}
Therefore,
\begin{equation}
    \psi_t = W[a^{(1)}]_{y_t} + \Or_{L^2}\big(\epsi^{1/2} \lr{t}\big) = W[a_0]_{y_t} + \Or_{L^2}\big(\epsi^{1/2} \lr{t}\big).
\end{equation}
This completes the proof. \end{proof}

\subsection{Subsequent equations}\label{sec:3.7} We now focus on deriving a version of Theorem \ref{thm:2} that favors accuracy  over lifetime. This requires to solve higher-order transport equations.

The base case is the result of \S\ref{sec:3.2}-\ref{sec:5.3}, summarized as follows:
\begin{quote}
\Hone For any $f_0\in\SSS(\R)$, there exists $b_1$ such that for any $f_1(t,\cdot) \in \SSS(\R)$ if $a_0=\KK_t f_0$ and $a_1=b_1+\KK_t f_1$,  then $a_0$ and $a_1$ solve \eqref{eq:3g} with $m = 1$, i.e.
\begin{equation}
    \sum_{\ell = 0}^j T_{j-\ell} a_j = 0, \quad 0 \leq j \leq 1.
\end{equation}
\end{quote} 
To construct $a_0$ and $a_1$, we had to enforce a condition on $f_0$. Likewise, to construct $a_m$ we will enforce a condition on $f_{m-1}$.

Our inductive assumption is, for $m \geq 1$:
\begin{quote}
\Hm For any $f_0 \in \SSS(\R)$, there exist $b_1, f_1, \dots, b_{m-1}, f_{m-1}, b_{m} \in \SSS(\R)$ depending smoothly on $t$, such that for any $f_{m} \in \SSS(\R)$, if $a_0 = \KK_t f_0$ and $a_\ell = b_\ell + \KK_t f_\ell$ then 
\begin{equation}
    \sum_{\ell = 0}^j T_{j-\ell} a_j = 0, \quad 0 \leq j \leq m.
\end{equation}
\end{quote}

We proved \Hone in \S\ref{sec:5.3}.  We now assume that \Hmm holds and we prove \Hm for $m \geq 2$. Because of Lemma \ref{lem:1e}, this boils down to constructing $a_{m} = b_{m} + \KK_t f_{m}$ such that:
\begin{equation}\label{eq:2i}
    T_0 (b_{m} + \KK_t f_{m}) + T_1 a_{m-1} + \dots + T_{m} a_0 = 0, \ \ \ \ \text{where:}
\end{equation}
\begin{itemize}
    \item The operators $T_k$ are defined in \eqref{def:Tj};
    \item The amplitudes $a_0, \dots, a_{m-2}$ are fully specified by \Hmm \hspace{-1.2mm};
    \item The amplitude $a_{m-1} = b_{m-1} + \KK_t  f_{m-1}$, with $b_{m-1}$ given by \Hmm and $f_{m-1} \in \SSS(\R)$ remains be selected.
\end{itemize}

Since the operator $\KK_t $ parametrizes the kernel of $T_0$, \eqref{eq:2i} is equivalent to
 \begin{equation}\label{eq:4b}
     T_0 b_{m} =  \beta_{m-1} - T_1 \KK_t f_{m-1}, \ \ \ \  \beta_{m-1} = - T_1 b_{m-1} - T_2 a_{m-2} - \dots - T_{m} a_0. 
 \end{equation}
Note that \Hmm fully prescribes $\beta_{m-1}$. 
 
As in \S\ref{sec:5.3}, to solve \eqref{eq:4b}, it suffices that for any $t$, $\big(\beta_{m-1} - T_1 \KK_t f_{m-1}\big)(t,\cdot)$ is in the kernel of $T_0$. This is equivalent to
\begin{equation}
    \forall t, x_1 \in \R, \ \ \ \ 
    \int_{\R}   e^{-\frac{r_t x_2^2}{2}} \matrice{e^{i\te_t/2} \\ -e^{-i\te_t/2}} \cdot \big( \beta_{m-1} - T_1 \KK_t f_{m-1} \big)\big(t,R_{\te_t}^\top x\big) dx_2 = 0.
\end{equation}
Thanks to Lemma \ref{lem:1d}, this is equivalent to:
\begin{equation}\label{transport:fm}
D_t f_{m-1}(t,x_1)
=   \dfrac{1}{2}  \sqrt{ \dfrac{r_t}{\pi} }  \int_{\R}   e^{-\frac{r_t x_2^2}{2}} \matrice{e^{i\te_t/2} \\ -e^{-i\te_t/2}} \cdot \beta_{m-1}\big(t,R_{\te_t}^\top x\big) dx_2,
\end{equation}
and hence -- setting $f_{m-1}(0,x_1) = 0$:
\begin{equation}\label{eq:1q}
    f_{m-1}(t,x_1) =  \int_0^t \int_{\R} \dfrac{1}{2}  \sqrt{ \dfrac{r_s}{\pi} }     e^{-\frac{r_s x_2^2}{2}} \matrice{e^{i\te_s/2} \\ -e^{-i\te_s/2}} \cdot \beta_{m-1}\big(s,R_{\te_s}^\top x\big) dx_2 ds.
\end{equation}
When $f_{m-1}$ is given by this formula, the equation \eqref{eq:4b}  admits a solution $b_m(t,\cdot) \in \SSS(\R^2,\C^2)$. This completes the proof of \Hm \hspace{-1.2mm}. The following result summarizes our findings:

\begin{theo}\label{thm:4} Fix $T > 0$ and $n \in \N$. If $a_j \in \SSS(\R^2)$ are constructed as above, then $(\epsi D_t + H) \phi_t = 0$ 
has a solution of the form
\begin{equation} \label{eq:psi_with_correctors}
    \phi_t(x) = \dfrac{1}{\sqrt{\epsi}} \cdot  \KK_t f \left(\dfrac{x - y_t}{\sqrt{\epsi}} \right) + \sum_{j=1}^n \epsi^{\frac{j-1}{2} }  a_j\left(t, \dfrac{x - y_t}{\sqrt{\epsi}} \right)  + \Or_{L^2}\big(\epsi^{\frac{n+1}{2}} \big),
\end{equation}
uniformly for $\epsi \in (0,1]$ and $t$ in $[0,T]$.
\end{theo}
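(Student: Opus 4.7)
The plan is to iterate the construction of \S\ref{sec:3.7}. Granting the inductive hypothesis \Hm for all $m \in \N$, I would build, starting from $a_0 = \KK_t f_0$ with $f_0 = f$, amplitudes $a_0, a_1, \dots, a_N$ (for $N$ large enough, specifically $N \geq n+1$) together with auxiliary functions $b_1, f_1, \dots, b_N$ depending smoothly on $t$ so that the hierarchy \eqref{eq:3g} is satisfied through order $N$. At each stage the solvability condition for $T_0 b_m = \beta_{m-1} - T_1 \KK_t f_{m-1}$ forces $f_{m-1}$ to obey the transport equation \eqref{transport:fm}; I would fix initial data $f_k(0,\cdot) = 0$ for $k \geq 1$ and set the final free parameter $f_N$ to zero to complete the construction.

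The main technical obstacle is to upgrade the pointwise-in-$t$ Schwartz regularity of the $a_\ell$ into uniform-in-time bounds on $[0,T]$: every $a_\ell$, every $D_t a_\ell$, and all weighted norms $\|\lr{x}^p a_\ell\|_{L^2}$ must be controlled by constants independent of $t \in [0,T]$. I would carry this out by induction on $\ell$, extending Steps~2--3 of the proof of Theorem~\ref{thm:2}. The key ingredients are: (i) $\kappa \in C^\infty_b(\R^2)$ together with the transversality condition \eqref{eq:7z} give uniform bounds on $r_t$, $\te_t$, $\dot r_t$, $\dot \te_t$ and on all $\p^\alpha \kappa(y_t)$; (ii) the factorization \eqref{eq:2k} combined with Lemma~\ref{lem:1b} shows that $L_{\te_t,r_t}^{-1}$ acts continuously on $\SSS_{\te_t,r_t}(\R^2)$ with every seminorm bounded uniformly in $t$; (iii) the time integral \eqref{eq:1q} preserves the Schwartz class on the bounded interval $[0,T]$. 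These are exactly the ingredients already used to control $a_0$ and $b_1$, now applied iteratively. I expect no new conceptual difficulty, only bookkeeping.

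With the amplitudes constructed and their seminorms under control, set $v_t := W[a^{(N)}]_{y_t}$ where $a^{(N)} = \sum_{\ell=0}^{N} \epsi^{\ell/2} a_\ell$. Lemma~\ref{lem:1e} yields
\begin{equation*}
    \big\| (\epsi D_t + H) v_t \big\|_{L^2} \leq C(T,n,f)\, \epsi^{(N+2)/2}, \qquad t \in [0,T].
\end{equation*}
Letting $\phi_t$ denote the exact solution to $(\epsi D_t + H) \phi_t = 0$ with initial data $\phi_0 = v_0$, Lemma~\ref{lem:1a} gives $\|\phi_t - v_t\|_{L^2} \leq C(T,n,f)\, \epsi^{N/2}$ on $[0,T]$. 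Expanding $v_t$ in powers of $\sqrt{\epsi}$ matches the displayed sum up to index $n$, while the trailing terms $\epsi^{\ell/2} W[a_\ell]_{y_t}$ with $\ell > n$ have $L^2$-norms of order $\epsi^{\ell/2}$; choosing $N$ large enough absorbs these, together with the Duhamel error, into the remainder $\Or_{L^2}(\epsi^{(n+1)/2})$. The argument reduces to a direct iteration of the Theorem~\ref{thm:2} scheme, with the uniform-in-time Schwartz estimate being the only genuinely new piece of analysis.
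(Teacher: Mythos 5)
Your proposal is correct and follows essentially the same strategy as the paper: construct the amplitudes $a_0,\dots,a_{n+1}$ (the paper fixes $N=n+1$ with $f_{n+1}=0$, which is exactly enough, whereas you allow $N\geq n+1$), verify uniform-in-$t$ Schwartz bounds on the compact interval $[0,T]$, apply Lemma~\ref{lem:1e} to the truncated ansatz $v_t = W[a^{(N)}]_{y_t}$, and close with the Duhamel estimate of Lemma~\ref{lem:1a}, absorbing the trailing corrector(s) into the $\Or_{L^2}(\epsi^{(n+1)/2})$ remainder. The only cosmetic difference is that you track the exponents $\epsi^{(N+2)/2}\to\epsi^{N/2}$ through Duhamel explicitly (which is cleaner than the paper's display, which appears to carry a typo), but the argument is otherwise identical.
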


According to Theorem \ref{thm:4}, after adequately correcting the initial data \eqref{eq:initial_data} we obtain approximate solutions concentrated near $y_t$ at arbitrary accuracy in $\epsi$.
 Correcting the initial data is necessary: otherwise the subleading amplitude (which is of order $\epsi^{1/2}$) likely contains a dispersive part, hence cannot remain fully concentrated near $y_t$.

\begin{rem}[Timescale of validity of error estimates] Including higher order correctors as in  \eqref{eq:psi_with_correctors} does not extend the timescale of validity $\epsi^{-1/2}$ of the approximation solution. Indeed, the $n$-th corrector is of order $\epsi^{\frac{n+1}{2}} t^n$ -- the term $t^n$ corresponds to $n$ recursive integrations in \eqref{eq:1q}. After applying Lemma \ref{lem:1a}, this yields that the constant implicitly involved in the remainder $\Or_{L^2}\big(\epsi^{\frac{n+1}{2}} \big)$ of \eqref{eq:psi_with_correctors} grows like $T^{n+1}$: it is small only for $T \ll \epsi^{-1/2}$. 
\end{rem}

\begin{proof}[Proof of Theorem \ref{thm:4}] Fix $n \in \N$, $T > 0$ and $f_0 \in \mathcal{S}(\field{R})$. We pick $a_j$ solving \eqref{eq:3g} for $0 \leq j \leq n+1$ (constructed above) with $f_{n+1} = 0$, and we define
\begin{equation}\label{eq:v0}
    a^{(n)} = \sum_{j = 0}^{n+1} \epsi^{j/2} a_j, \ \ \ \ 
    v_t(x) = W\big[a^{(n)}\big]_{y_t}(x) = \frac{1}{\sqrt{\epsi}} \sum_{j = 0}^{n+1} \epsi^{j/2} a_j\left( t , \frac{x - y_t}{\sqrt{\epsi}} \right). 
\end{equation}
By construction, the functions $a_j$ are smooth in $t$ and Schwartz in $x$. In particular, they satisfy uniform Schwartz-class bounds for $t$ in compact intervals. Hence, thanks to Lemma \ref{lem:1e}, we have uniformly in $t \in [0,T]$:
\begin{equation}
   \left\| (\epsi D_t + H) v_t \right\|_{L^2} \leq C \epsi^{\frac{n+1}{2}}.
\end{equation}
Let $\phi_t$ be the solution to $(\epsi D_t + H) \phi_t = 0$ with $\phi_0 = v_0$ -- see \eqref{eq:v0} with $t=0$. 
Thanks to Lemma \ref{lem:1a}:
\begin{equation}
    \left\| v_t - \phi_t \right\|_{L^2} \leq C \epsi^{\frac{n+1}{2}}.
\end{equation}
In other words, $v_t = \phi_t + \Or_{L^2}\big( \epsi^{\frac{n+1}{2}}\big)$.
\end{proof}

\section{The effect of curvature}\label{sec:4}

It is natural to wonder which quantities affect the lifetime of our quantum state. For instance, when $\kappa$ is linear, the interface is straight and the edge states have infinite lifetime. If $\kappa$ is asymptotically linear, the interface is asymptotically straight and we expect an extended time of validity. In contrast, numerical simulations indicate that circular interfaces come with gradual dispersion: see Figure \ref{fig:7}.

This suggests that an integrated curvature limits the lifespan. Curvature however cannot be the only limiting factor: as Figure \ref{fig:1} shows, even straight interfaces can generate dispersion. To isolate the effects of curvature, we consider in this section domain walls $\kappa$ that satisfy a geometric condition:
\begin{equation}\label{eq:7s}
  y \in \kappa^{-1}(0) \ \ \ \Rightarrow \ \ \ \big| \nabla \kappa(y) \big| = 1, \ \ \ \  \nabla^2\kappa(y) \cdot \nabla \kappa(y) = 0.
\end{equation}
Example of $\kappa$ satisfying \eqref{eq:7s} include:
\begin{itemize}
    \item $\kappa(x) = \omega \cdot x$ with $|\omega| = 1$, for a straight interface;
    \item $\kappa(x) = \sqrt{x_1^2+x_2^2}-1$, for a circle.
\end{itemize}

The condition \eqref{eq:7s} is not geometrically restrictive: given $\Gamma$, we can always find $\kappa$ with $\Gamma = \kappa^{-1}(0)$, satisfying \eqref{eq:7s} -- see \S\ref{sec:4.2}. This condition excludes scenarios such as those giving rise to Figure \ref{fig:1}.  
Under \eqref{eq:7s}, $\dot{\te_t}$ is the curvature of $\Gamma$ at $y_t$; and in a suitable frame, the Hessian of $\kappa$ along $\Gamma$ depends only on $\dot{\te_t}$:
\begin{equation}\label{eq:7x}
   \lr{R_{\te_t}^\top x, \nabla^2 \kappa(y_t) R_{\te_t}^\top x} = \dot{\te_t} x_1^2.
\end{equation}

\begin{theo}\label{thm:3} Under \eqref{eq:7s}, the solution \eqref{eq:3s} to $(\eps D_t+H)\Psi_t=0$ of Theorem \ref{thm:1} satisfies, uniformly in $t > 0$ and $\epsi \in (0,1]$:
\begin{equation}\label{eq:0q}
\hspace{-3mm}
\Psi_t(x) = \dfrac{1}{\sqrt{\epsi}} \cdot \exp\left( - \dfrac{(x-y_t)^2}{2\epsi}\right) \matrice{ -e^{i\te_t/2} \\ e^{-i\te_t/2} } + \Or_{L^2}\big(\epsi^{1/2} + \epsi t(1+ \Theta_t)\big), \ \ \ \ \ \Theta_t = \int_0^t \dot{\te}_s^2 ds.
\end{equation} 
\end{theo}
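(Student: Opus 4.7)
The strategy refines the $\mathcal{O}_{L^2}(\epsi^{1/2}\lr{t})$ estimate of Theorem \ref{thm:2} by pushing the WKB expansion one order further and extracting extra smallness from the geometric assumption \eqref{eq:7s}. Under \eqref{eq:7s}, $r_t \equiv 1$ so $T_0 = L_{\theta_t, 1}$, and \eqref{eq:7x} forces the Hessian part of $T_1$ to be the purely tangential $(\dot\theta_t/2)(R_{\theta_t}x)_1^2\,\sigma_3$. As a consequence the leading source factorises: a direct computation gives $T_1 a_0 = \dot\theta_t\,G_t$ with $G_t$ uniformly bounded in $\SSS(\R^2,\C^2)$ and (by parity in the normal variable) orthogonal to $\ker(T_0)$. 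Hence Lemma \ref{lem:1b} yields $b_1 := -T_0^{-1}(T_1 a_0) = \dot\theta_t\,\tilde b_1$ with $\tilde b_1$ uniformly bounded in $\SSS(\R^2,\C^2)\cap \ker(T_0)^\perp$.

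The key algebraic observation is the vanishing of two potential sources of non-curvature growth in the transport equation for $f_1$. First, $T_2 a_0 = Q(x)\sigma_3\KK_t f_0$ has spinor component $\sigma_3 e_{\theta_t} = e_{\theta_t}^\perp$ orthogonal to the kernel direction $e_{\theta_t}$, so the projection $\Pi_0$ onto $\ker(T_0)$ implicit in Lemma \ref{lem:1d} satisfies $\Pi_0(T_2 a_0)=0$. Second, $D_t b_1 = -i\ddot\theta_t\,\tilde b_1 + \dot\theta_t\,D_t\tilde b_1$, whose $\ddot\theta_t$-summand is annihilated by $\Pi_0$ because $\tilde b_1\in\ker(T_0)^\perp$, while $D_t\tilde b_1$ only depends on $t$ through $\theta_t$ and is therefore proportional to $\dot\theta_t$. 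The solvability equation from Lemma \ref{lem:1d} at order $\epsi^{3/2}$ thus reduces to $D_t f_1(t,x_1) = \dot\theta_t^{\,2}\,h(x_1)$ for a fixed $h\in\SSS(\R)$ depending only on $f_0$ and derivatives of $\kappa$. Setting $f_1(0,\cdot)=0$ and integrating yields the factorised profile $f_1(t,\cdot) = \Theta_t\,h$, in particular $\|f_1(t,\cdot)\|_{\SSS(\R)}\leq C\Theta_t$.

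Set $a_1 = b_1 + \KK_t f_1$, $a_2 = -T_0^{-1}(T_1 a_1 + T_2 a_0)\in\SSS$, and $v_t := W[a_0+\sqrt\epsi a_1 + \epsi a_2]_{y_t}$. Moment and time-derivative bounds on the $a_j$, in which every remaining $\ddot\theta$- or $\dddot\theta$-occurrence is again trapped in $\ker(T_0)^\perp$ and thereby traded for bounded or $\dot\theta^2$-factors, propagate through Lemma \ref{lem:1e} with $m=2$ to give $\|(\epsi D_t+H)v_t\|_{L^2}\leq C\epsi^2(1+\dot\theta_t^2+\Theta_t)$. Since $f_1(0,\cdot)=0$, $b_1(0)=\dot\theta_0\,\tilde b_1(0)$, and $a_2(0)$ are all uniformly bounded, $\|v_0-\Psi_0\|_{L^2}\leq C\sqrt\epsi$, and Lemma \ref{lem:1a} yields
\[
\|v_t - \Psi_t\|_{L^2} \leq C\sqrt\epsi + C\epsi\!\int_0^t (1+\dot\theta_s^2+\Theta_s)\,ds \leq C\sqrt\epsi + C\epsi\,t(1+\Theta_t),
\]
using $\int_0^t \Theta_s\,ds \leq t\Theta_t$. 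A triangle inequality against $W[a_0]_{y_t}$ together with the factorised structure $f_1=\Theta_t\,h$ — which ensures the correction $\sqrt\epsi\KK_t f_1$ sits inside the admissible error class — produces \eqref{eq:0q}.

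The main obstacle is establishing the double cancellation underpinning $D_t f_1 = \dot\theta_t^{\,2}\,h$: both $\Pi_0(T_2 a_0)=0$ via the $\sigma_3$-spinor flip and $\Pi_0(\ddot\theta_t\tilde b_1)=0$ via $\tilde b_1\in\ker(T_0)^\perp$. Without either, $f_1$ would pick up an $\mathcal O(1)$ or $\mathcal O(|\ddot\theta_t|)$ contribution, grow linearly in $t$, and propagate through Duhamel as an $\epsi^{1/2}t$ error — exactly the lifetime cap of Theorem \ref{thm:2}. A secondary technical point is the careful comparison of $v_t$ with the bare Gaussian $W[a_0]_{y_t}$: one must exploit the explicit form $f_1=\Theta_t h$ to route the otherwise borderline $\sqrt\epsi\Theta_t$ contribution through the $\epsi\,t(1+\Theta_t)$ term on the regime where the estimate is non-trivial.
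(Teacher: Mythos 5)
Your proof follows the same strategy as the paper's: push the WKB expansion to order $m=2$, exploit the geometric condition \eqref{eq:7s} to show $b_1$ is proportional to $\dot\theta_t$ and $f_1$ to $\Theta_t$, then feed the resulting $\mathcal{O}(1+\Theta_t)$ seminorm bounds through Lemmas \ref{lem:1e} and \ref{lem:1a}. The paper (Lemma \ref{lem:1h}) obtains the key factorizations by an explicit computation in rotated coordinates using the creation/annihilation operators, whereas you reach the same cancellations through structural arguments (kernel orthogonality, dependence on $t$ only through $\theta_t$); both are fine. One small misattribution: the orthogonality of $T_1 a_0$ to $\ker(T_0)$ is \emph{not} a parity statement in the normal variable $(R_{\theta_t}x)_2$ -- in fact $T_1 a_0 = \tfrac{\dot\theta_t}{2}\bigl((R_{\theta_t}x)_1^2-1\bigr)e^{-x^2/2}\,[e^{-i\theta_t/2},\,e^{i\theta_t/2}]^\top$ is even in that variable. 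The orthogonality comes from the $\sigma_3$-spinor flip (both $D_t a_0$ and the Hessian term flip $[e^{-i\theta_t/2},-e^{i\theta_t/2}]^\top$ to its orthogonal complement), exactly the mechanism you correctly invoked for $T_2 a_0$. This does not affect the validity of the conclusion. Finally, you are right to flag the last triangle-inequality step: the paper stops at $\|\Psi_t - W[a^{(2)}]_{y_t}\|_{L^2} \le C\epsi^{1/2}+C\epsi t(1+\Theta_t)$ without commenting on the $\epsi^{1/2}\Theta_t$ contribution from dropping the first corrector, and your observation that this contribution lies in (or is subsumed by) the admissible error class only in the regime where the estimate is informative is the honest way to read the statement.
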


When $\Gamma$ is asymptotically straight (i.e. it has $L^2$-curvature), the remainder in \eqref{eq:0q} remains small for $t \ll \epsi^{-1}$: our quantum state is longer-lived. In contrast, if $\Gamma$ is a closed loop then $\Theta_t$ grows linearly and our state is only close to the exact solution for $\epsi t^2 \ll 1$, that is $t \ll \epsi^{-1/2}$: there is no improvement over Theorem \ref{thm:1}. Thus, such states -- which are not globally topological -- have a  shorter lifetime. 

\begin{figure}[b]
\floatbox[{\capbeside\thisfloatsetup{capbesideposition={right,center},capbesidewidth=3.7in}}]{figure}[\FBwidth]
{\hspace{-1cm}\caption{\label{fig:8} Snapshots of the numerically computed solution to $(\epsi D_t +H)\Psi_t =0$, with $\epsi = 0.1$, $\Psi_0$ the  Gaussian \eqref{eq:3s} and a domain wall $\kappa$, illustrated in the figure with an appropriate off-set, satisfying \eqref{eq:7s}, with $\Gamma = \{x_2 = -\sqrt{x_1^2 + \mu^2}\}$, $\mu \in \{0,1/2,1\}$. We observe  a growing amplitude loss as the corner gets sharper. 
}} 
{\begin{tikzpicture}
   \node at (0,0) {\includegraphics[width = 8cm]{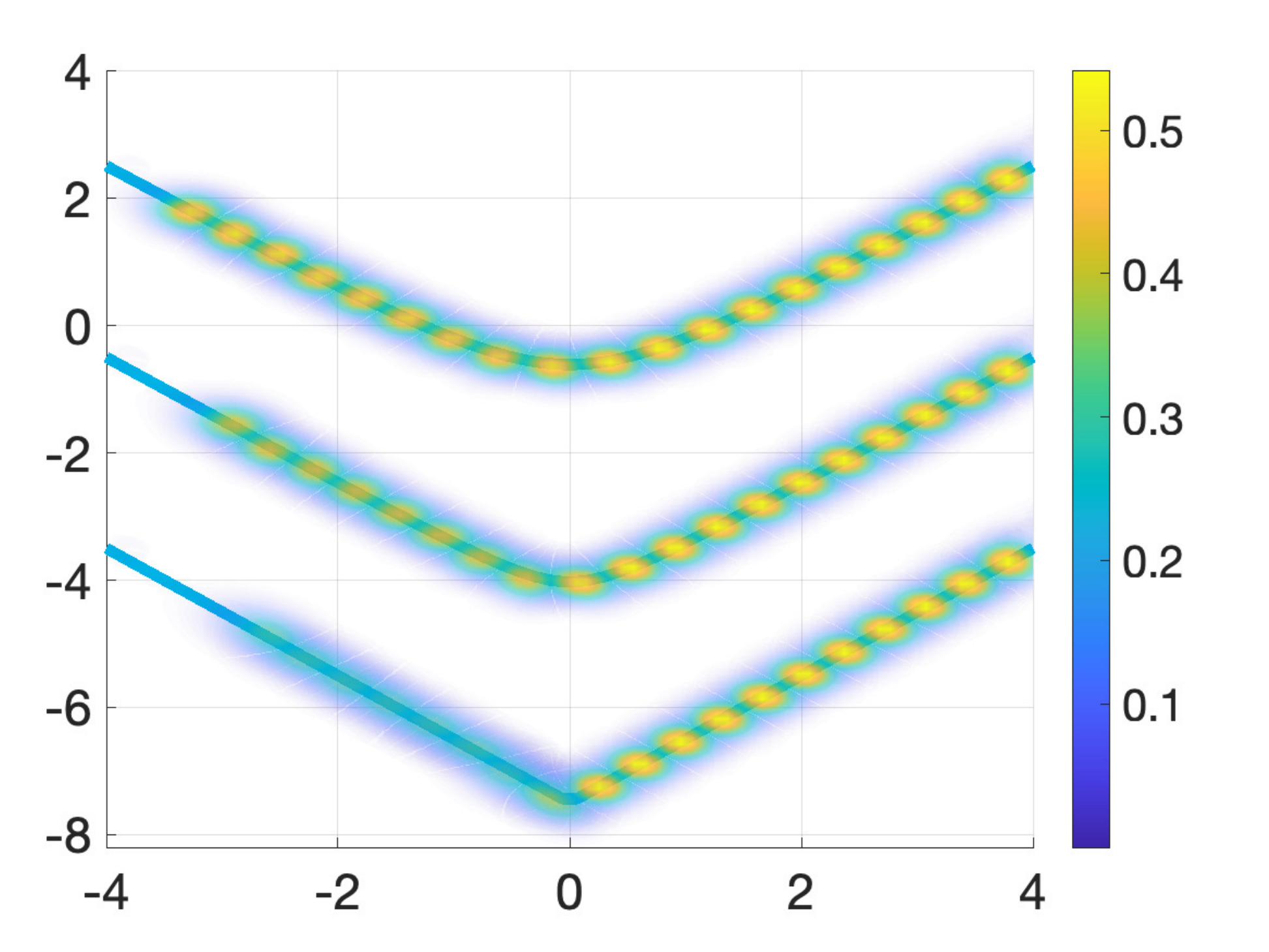}};
 \node[red] at (2,2.3) {$\mu=1$};
 \node[red] at (2,) {$\mu=1/2$};
 \node[red] at (2,-.3) {$\mu=0$};
  \end{tikzpicture}}
\end{figure}

Theorem \ref{thm:3} highlights effective limitations of dynamical edge states: they do not survive in strongly curved environments; see Figure \ref{fig:8}. This means that our results rely on $\kappa$ being sufficiently regular.  Other limitations include cross-type or knot-type interfaces, for which $\kappa$ degenerates quadratically; see Figure \ref{fig:my_crossing}. Such scenarios form interesting open problems.

\begin{figure}[t]
{\begin{tikzpicture}
   \node at (-4,0) {\includegraphics[width = 7cm]{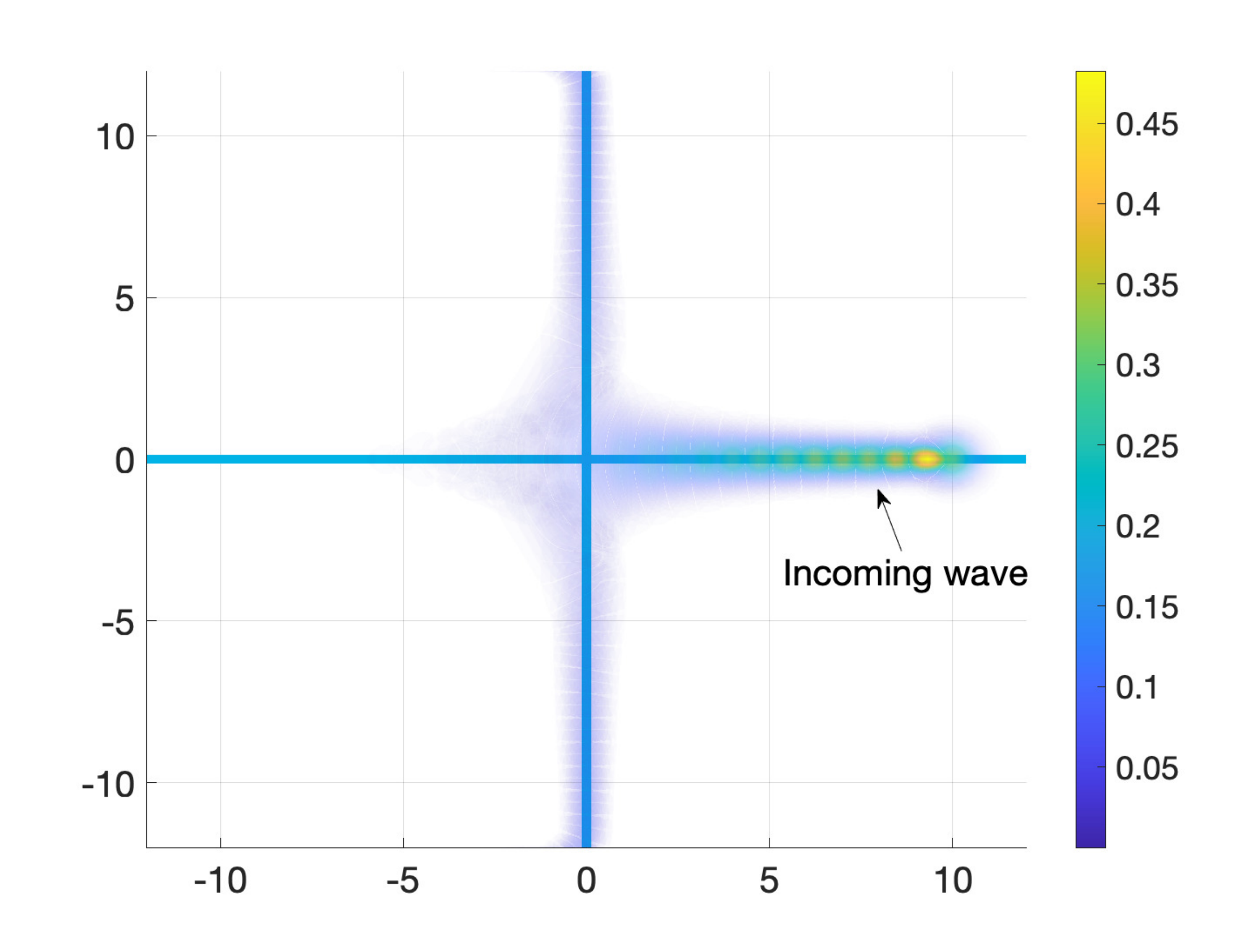}};
 \node at (4,0) {\includegraphics[width = 7cm]{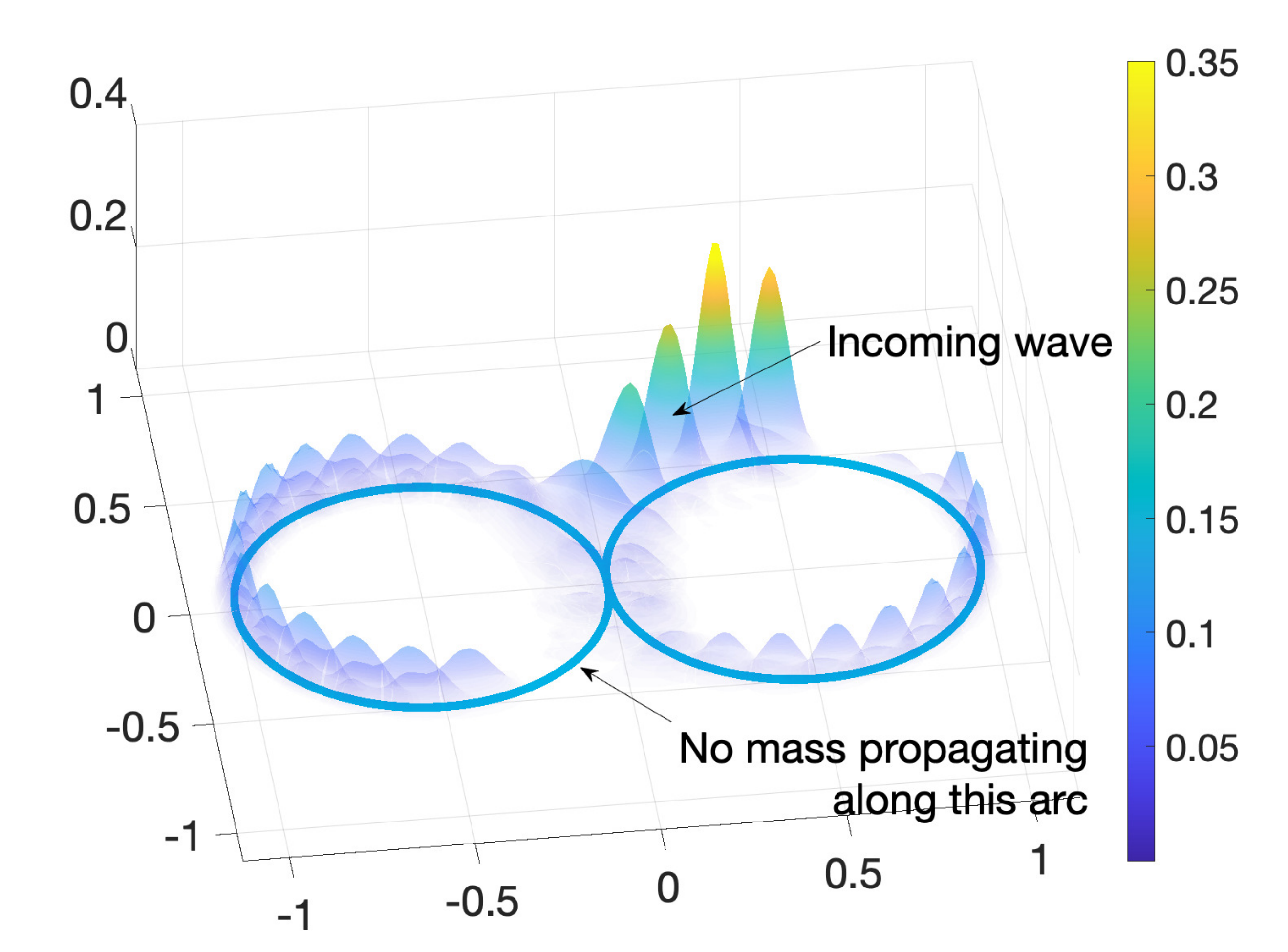}};
  \end{tikzpicture}}
  \caption{\label{fig:my_crossing}Left: interface $\kappa(x)=x_1x_2$; right: an interface consisting of two rings parametrized by $\vert x + e_1 \vert \vert x - e_1 \vert =1$ both with $\epsi = 2\cdot 10^{-2}$. While the direction of propagation can be heuristically predicted using the bulk-edge correspondence, establishing a rigorous theory remains an open problem.} 
\end{figure}

\subsection{Proof of Theorem \ref{thm:3}} The proof of Theorem \ref{thm:3} relies on the
precise calculation of the corrector $a_1=b_1+\KK_t f_1$  involved in \S\ref{sec:3.2}-\ref{sec:5.3}.

 \begin{lemm}\label{lem:1h} In the setup of Theorem \ref{thm:3}, the subleading amplitude $a_1 = b_1 + \KK_t f_1$ 
satisfies
\begin{equation}\label{eq:0j}
    b_1(t,x) =   \dfrac{1-x_1^2 }{2}  x_2 e^{-\frac{x^2}{2}}  \matrice{e^{-i\te_t/2} \\ - e^{i\te_t/2}} \dot{\te_t},
    \ \ \ 
    f_1(t,x_1) = 
    \frac{ 2x_1-x_1^3}{2} e^{-\frac{x_1^2}{2}  } \Theta_t 
\end{equation} 
\end{lemm}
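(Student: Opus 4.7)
The plan is a direct computation, made tractable by the simplifications \eqref{eq:7s} provides. Under this hypothesis $r_t \equiv 1$, so $\dot r_t = 0$, and since the initial data of Theorem \ref{thm:1} corresponds via \eqref{def:a_0} to $f_0(x_1) = e^{-x_1^2/2}$ (which satisfies $D_t f_0 = 0$ by \S\ref{sec:5.3}), the leading amplitude takes the rotation-invariant form
\begin{equation*}
a_0(t,x) = e^{-|x|^2/2}\zeta_t, \qquad \zeta_t := \matrice{e^{-i\te_t/2} \\ -e^{i\te_t/2}}.
\end{equation*}
A direct calculation gives $D_t \zeta_t = -\tfrac{\dot\te_t}{2}\sigma_3 \zeta_t$; combined with \eqref{eq:7x}, I obtain, in the rotated coordinates $y = R_{\te_t}x$,
\begin{equation*}
T_1 a_0 = \frac{\dot\te_t}{2}(y_1^2 - 1)e^{-|y|^2/2}\sigma_3\zeta_t.
\end{equation*}

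To produce $b_1$, I would invert $T_0 = L_{\te_t,1}$ applied to $-T_1 a_0$ using the conjugation chain $L_{\te_t,1} = \UU_{\te_t}L_{0,1}\UU_{\te_t}^{-1}$ of \eqref{eq:2k} followed by the Hermite factorization $\tL_{0,1} = M L_{0,1} M^{-1}$ of \eqref{eq:7j}. The transformed right-hand side lies entirely in the second coordinate, so $\tL_{0,1} v = w$ reduces to $\aaa^\ast v_2 = 0$ (forcing $v_2 \equiv 0$, the only Schwartz solution) together with $\aaa v_1 = \dot\te_t (y_1^2 - 1) e^{-|y|^2/2}$. The latter is an explicit first-order ODE in $y_2$ solved in closed form by $v_1 = \dot\te_t (y_1^2 - 1) y_2\, e^{-|y|^2/2}$, which lies in the orthogonal complement of $\ker_\SSS(\tL_{0,1})$ by $y_2$-parity. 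Undoing the conjugations and expressing the result in the adapted (tangent/normal) frame yields the claimed formula for $b_1$.

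For $f_1$, I would apply the solvability condition \eqref{transport:fm} with $\beta_1 = -T_1 b_1 - T_2 a_0$. The crucial algebraic observation is that $\overline{\zeta_t}$ is orthogonal to $\sigma_3 \zeta_t$: this annihilates the entire $T_2 a_0$ contribution (which lies along $\sigma_3 \zeta_t$, so all third-order derivatives of $\kappa$ drop out of the equation) as well as the Hessian part of $T_1 b_1$. Only $D_t b_1$ survives. Writing $b_1 = h\,\zeta_t$ with $\tilde h(t,y) := h(t, R_{\te_t}^\top y) = \tfrac{1}{2}(1-y_1^2) y_2\, \dot\te_t\, e^{-|y|^2/2}$ and applying \eqref{eq:7u} for $\dot R_{\te_t} R_{\te_t}^\top$, the chain rule gives
\begin{equation*}
\partial_t\big[h(t, R_{\te_t}^\top y)\big] = \partial_t \tilde h + \dot\te_t (y_2 \partial_{y_1} - y_1 \partial_{y_2})\tilde h.
\end{equation*}
Integrating against the transverse Gaussian $e^{-y_2^2/2}$ kills the $\ddot\te_t$ contribution by odd parity in $y_2$, while the remaining $\dot\te_t^2$ moments collapse using $\int e^{-y_2^2}dy_2 = \sqrt\pi$ and $\int y_2^2 e^{-y_2^2}dy_2 = \sqrt\pi/2$ to reduce the transport equation to $\partial_t f_1 = \tfrac{1}{2}\dot\te_t^2\, y_1(2 - y_1^2)\, e^{-y_1^2/2}$. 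Integrating in $t$ from $f_1(0,\cdot) = 0$ yields $f_1(t, y_1) = \tfrac{1}{2}(2y_1 - y_1^3) e^{-y_1^2/2}\,\Theta_t$, as claimed.

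The delicate step is managing the two sources of $t$-dependence in $b_1$ -- through $\dot\te_t$ in the prefactor and through the moving frame $R_{\te_t}$ -- so that the Gaussian moments conspire into the clean cubic $2 y_1 - y_1^3$. By contrast, the disappearance of $T_2 a_0$ from the $f_1$-equation, removing all third-derivative corrections of $\kappa$, is conceptually clean: it is a structural consequence of the $\sigma_3$-alignment of the subleading operators and the orthogonality $\overline{\zeta_t} \perp \sigma_3 \zeta_t$.
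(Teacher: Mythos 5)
Your argument is correct and follows the same route as the paper's proof: compute $T_1a_0$ via the identity $D_t\zeta_t=-\tfrac{\dot\te_t}{2}\sigma_3\zeta_t$, invert $T_0$ through the $\UU_{\te_t}$-conjugation and the Hermite factorization \eqref{eq:7j} (where $\aaa^\ast v_2=0$ forces $v_2=0$), then derive the $f_1$-transport equation from the orthogonality $\overline{\zeta_t}\perp\sigma_3\zeta_t$, using odd $x_2$-parity to kill the $\ddot\te_t$ term and collapsing the Gaussian moments. One sign slip to fix: after passing from $T_1a_0$ to $-T_1a_0$ the transformed right-hand side is $\dot\te_t(1-y_1^2)e^{-|y|^2/2}$, not $\dot\te_t(y_1^2-1)e^{-|y|^2/2}$, so $\aaa v_1 = \dot\te_t(1-y_1^2)e^{-|y|^2/2}$ and $v_1 = \dot\te_t(1-y_1^2)y_2e^{-|y|^2/2}$; you silently switch back to the correct sign when you define $\tilde h = v_1/2$, so your $\tilde h$, the subsequent $f_1$-computation, and the final answer are all correct, but the stated formula for $v_1$ (and the intermediate $\aaa v_1$ equation) are off by a sign.
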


\begin{proof}[Proof of Lemma \ref{lem:1h}] The proof relies on the hierarchy of transport equations studied in \S\ref{sec:3.2}. We use the notations introduced there, keeping in mind that $r_t=1$ here.

We first compute $b_1$. From the initial condition \eqref{eq:0r},
\begin{equation}
    a_0(0,x) = e^{-\frac{x^2}{2}} \matrice{e^{-i\te_0/2} \\ -e^{i\te_0/2}}.
\end{equation}
Hence $f_0(x_1)= e^{-x_1^2/2}$. Moreover $b_1$ is the unique solution in $\ker(T_0)^\perp$ to $T_0 b_1 + T_1 a_0 = 0$. With 
$q_t(x) = \lr{x, \nabla^2\kappa(y_t) x}$,
this equation reads
\begin{equation}\label{eq:7r}
    T_0 b_1 = - e^{-\frac{x^2}{2}} \left( D_t + \dfrac{q_t(x)}{2} \sigma_3 \right)  \matrice{e^{-i\te_t/2} \\ - e^{i\te_t/2}} = \dfrac{1}{2}e^{-\frac{x^2}{2}} \left( \dot{\te_t} - q_t(x) \right) \matrice{e^{-i\te_t/2} \\ e^{i\te_t/2}},
\end{equation}
where we used the identities \eqref{eq:1o}. 
To find $b_1$, we use the operators $U_{\te_t}$ and $\RR_{\te_t}$ introduced in \eqref{eq:3b} and we look for $b_1$ of the form 
\[ b_1 =  \RR_{\te_t} U_{\te_t} \matrice{c_1 \\ c_2}.\]
We take advantage of the relation $T_0=L_{\theta_t,1} = \RR_{\te_t} U_{\te_t} L_{0,1} U_{\te_t}^{-1} \RR_{\te_t}^{-1}$ (see \eqref{eq:Lte} and the beginning of the proof of Lemma~\ref{prop:1}) and 
apply the operator $U_{\te_t}^{-1} \RR_{\te_t}^{-1}$ to the equation \eqref{eq:7r}.  We deduce that $c_1$ and $c_2$ must solve:
\begin{equation}
    L_{0,1} \matrice{c_1 \\ c_2} =   \dfrac{1}{2}e^{-\frac{x^2}{2}} \left( \dot{\te_t} - q_t\left(R_{\te_t}^\top x\right) \right) U_{\te_t}^{-1} \matrice{e^{-i\te_t/2} \\ e^{i\te_t/2}} =  \dfrac{1}{2}e^{-\frac{x^2}{2}} \left( \dot{\te_t} - q_t\left(R_{\te_t}^\top x\right) \right) \matrice{1 \\ 1}. 
\end{equation}
We now use the operator $\tL_{0,1}$ of \eqref{eq:7j} and get:
\begin{equation}
     \matrice{0 & \aaa^* \\ \aaa & 2D_{x_1}} \matrice{c_1-c_2 \\ c_1+c_2} =  e^{-\frac{x^2}{2}} \left( \dot{\te_t} - q_t\left(R_{\te_t}^\top x\right) \right) \matrice{0 \\ 1}.
\end{equation} 
From $\aaa^* (c_1+c_2) = 0$, we obtain  $c_1 = -c_2$ because $\aaa^*$ has trivial kernel.  
Thus, 
\begin{equation}\label{eq:7w}
    b_1(t,x) = c_1(t,R_{\te_t} x) U_{\te_t} \matrice{1 \\ -1} = c_1(t,R_{\te_t} x)  \matrice{e^{-i\te_t/2} \\ - e^{i\te_t/2}},  \ \ \ \ \aaa c_1(t,x) =  \dfrac{1}{2} e^{-\frac{x^2}{2}} \left( \dot{\te_t} - q_t\left(R_{\te_t}^\top x  \right)\right).
\end{equation}

We now use \eqref{eq:7x}: $q_t\left(R_{\te_t}^\top x\right) = \dot{\te_t} x_1^2$. Hence  $c_1$ satisfies the equation 
\begin{equation}\label{eq:0g}
    \aaa c_1(t,x) = \dfrac{1 - x_1^2}{2} e^{-\frac{x^2}{2}}  \dot{\te_t}.
\end{equation}
From the condition $b_1 \in \ker(T_0)^\perp$ we deduce that $c_1(t,x_1,\cdot) \perp e^{-x^2_2/2}$ for every $(t,x_1)$. Therefore, $c_1$ is explicitly given by:
\begin{equation}\label{eq:0h}c_1(t,x)=   \dfrac{1-x_1^2}{2} \,x_2\,e^{-\frac{x^2}{2}} \dot{\te_t}.
\end{equation}
This yields the identity \eqref{eq:0j} for $b_1$. 

\medskip 

We now focus on $f_1$. It solves the transport equation~\eqref{transport:fm}:
\[    D_t f_1(t,x_1) = \dfrac{1}{2\sqrt{\pi}} \int_\R e^{-\frac{x_2^2}{2}} \matrice{e^{i\te_t/2} \\ - e^{-i\te_t/2}} \cdot \beta_1\left(t,R_{\te_t}^\top x\right) dx_2,
\]
where by \eqref{eq:4b} $\beta_1=-T_1b_1-T_2a_0$.
In view of~\eqref{def:Tj}, $T_2$ is carried by $\sigma_3$ and  we deduce from  \eqref{eq:1o} that
\begin{equation}
    -\matrice{e^{i\te_t/2} \\ - e^{-i\te_t/2}} \cdot \beta_1(t,x) = 2D_t  \big( c_1(t,R_{\te_t} x) \big) 
    =  2 \left( D_t + \dot{R_{\te_t}} x \cdot D_x \right) c_1(t,R_{\te_t} x). 
\end{equation}
Using \eqref{eq:7u}, we obtain:
\begin{equation}
   - \matrice{e^{i\te_t/2} \\ - e^{-i\te_t/2}} \cdot \beta_1(t,R_{\te_t}^\top x) = 2\left(D_t +  \dot{R_{\te_t}} R_{\te_t}^\top x \cdot D_x \right) c_1(t,x) =  2\left(D_t + \dot{\te_t} \matrice{x_2 \\ -x_1} \cdot D_x \right) c_1(t,x),
\end{equation}
hence the transport equation for $f_1$:
\begin{equation}\label{eq:2c}
    D_t f_1(t,x_1) 
     =   - \dfrac{1}{\sqrt{\pi}} \int_\R e^{-\frac{x_2^2}{2}} \left( D_tc_1(t,x) + \dot{\te_t} (x_2 D_{x_1} - x_1 D_{x_2}) c_1(t,x) \right) dx_2.
\end{equation}

Thanks to the explicit formula \eqref{eq:0h} for $c_1$, we have:
\begin{equation}
  \int_\R e^{-\frac{x_2^2}{2}} D_tc_1(t,x) dx_2 = \dfrac{1-x_1^2}{2} e^{-\frac{x_1^2}{2}} \int_\R x_2 e^{-\frac{x_2^2}{2}} dx_2 \cdot D_t \dot{\te_t}  = 0.
\end{equation}
We deduce from integrating \eqref{eq:2c} and using the condition $f_1(0,x_1) = 0$   that
\begin{align}
   f_1(t,x_1) & = - \dfrac{1}{\sqrt{\pi}} \int_0^t \dot{\te_s} \int_\R e^{-\frac{x_2^2}{2}} (x_2 \p_{x_1} - x_1 \p_{x_2}) c_1(s,x) \ dx_2 \ ds
   \\
\label{eq:7v}   & = - \dfrac{1}{\sqrt{\pi}} \int_0^t \dot{\te_s} \int_\R x_2 e^{-\frac{x_2^2}{2}} (\p_{x_1} - x_1 ) c_1(s,x) \ dx_2 \ ds
\end{align}
where we have performed  an integration by parts in $x_2$. 
We now compute the integrals that appear in \eqref{eq:7v} using \eqref{eq:0h}. The integral on the LHS corresponds to integrating an odd function, hence produces $0$. 
Regarding the one on the RHS, we observe 
\begin{equation}
    (\p_{x_1} - x_1 ) c_1(t,x)=  x_2 (x_1^3-2x_1) e^{-\frac{x^2}{2}} \,\dot \theta_t.
\end{equation}
Therefore, the RHS of \eqref{eq:7v} becomes:
\begin{align}
  \dfrac{1}{\sqrt{\pi}} \int_0^t \dot{\te_s} \int_\R x_2 e^{-\frac{x_2^2}{2}} (\p_{x_1} - x_1 ) c_1(s,x) dx_2 ds
   & =  (x_1^3-2x_1) e^{-\frac{x_1^2}{2}} \int_0^t \dot{\te}_s^2 ds \cdot \dfrac{1}{\sqrt{\pi}}  \int_\R x_2^2 e^{-x_2^2} dx_2 
   \\
 \label{eq:0d}  & = \dfrac{2x_1-x_1^3}{2} e^{-\frac{x_1^2}{2}}  \int_0^t \dot{\te}_s^2 ds.
\end{align}
Plugging \eqref{eq:0d} in \eqref{eq:7v}, we conclude that
\begin{equation}\label{eq:0f}
   f_1(t,x_1) =  \dfrac{ x_1^3 - 2x_1}{2} e^{-\frac{x_1^2}{2}}  \Theta_t , \ \ \ \text{where} \ \ \ \Theta_t = \int_0^t \dot{\te}_s^2 ds.
\end{equation}
This completes the proof of Lemma \ref{lem:1h}. 
\end{proof}

\begin{proof}[Proof of Theorem~\ref{thm:3}] We set $a^{(2)}=a_0+ \epsi^{1/2} a_1 + \epsi b_2$, with $a_0, a_1, b_2$ solutions of
\begin{equation}
    T_0 a_0 = 0, \ \ \ T_0 a_1 + T_1 a_0 = 0, \ \ \ T_0 b_2 + T_1 a_1 + T_2 a_0 = 0; 
\end{equation}
see \S\ref{sec:3.2}-\ref{sec:3.7} for their construction. Thanks to Lemma \ref{lem:1e}, we have:
\begin{equation}
    \left\| (\epsi D_t + H) W\big[a^{(2)}\big]_{y_t} \right\|_{L^2} \leq C \epsi^2 \left( \big\| D_t b_2 \big\|_{L^2} + \big\| \lr{x}^4 a_0 \big\|_{L^2} + \big\| \lr{x}^4 a_1 \big\|_{L^2} + \big\| \lr{x}^4 b_2 \big\|_{L^2} \right).
\end{equation}

From the explicit expression \eqref{eq:3k} for $a_0$, $\| \lr{x}^4 a_0 \|_{L^2}$ is uniformly bounded. From the explicit expression \eqref{eq:0j} for $a_1$, $\| \lr{x}^4 a_1 \|_{L^2}$ is bounded by $1+\Theta_t$. It remains to bound $\| \lr{x}^4 b_2 \|_{L^2}$ and $\| D_t b_2 \|_{L^2}$. By construction, recalling that $r_t = 1$:
\begin{equation}
    b_2(t,\cdot) = - L_{\te_t,1}^{-1} \big( T_1 a_1 + T_2 a_0 \big).
\end{equation}
The explicit expressions for $a_0$ and $a_1$ allow us to bound Schwartz-class seminorms of $T_1 a_1 + T_2 a_0$ by $1+\Theta_t$ (the term $\p_t\Theta_t = (\p_t \te_t)^2$ is uniformly bounded). Arguing as in \eqref{eq:2s}, 
we deduce that Schwartz-class seminorms of $b_2(t,\cdot)$ and $D_t b_2(t,\cdot)$ are bounded by $1+\Theta_t$. In particular:
\begin{equation}
    \big\| D_t b_2 \big\|_{L^2} + \big\| \lr{x}^4 b_2 \big\|_{L^2} \leq C(1+\Theta_t). 
\end{equation}
We deduce that
\begin{equation}
    \left\| (\epsi D_t + H) W\big[a^{(2)}\big]_{y_t} \right\|_{L^2} \leq C \epsi^2 \left(1 +\Theta_t \right).
\end{equation}

We note that at $t=0$, $\Psi_t$ and $W\big[a^{(2)}\big]_{y_t}$ coincide up to $\Or_{L^2}(\epsi^{1/2})$. Thus, applying Lemma \ref{lem:1a}, we conclude that
\begin{equation}
    \left\| \Psi_t - W\big[a^{(2)}\big]_{y_t} \right\|_{L^2} \leq C \epsi^{1/2} +C \epsi t \left(1 +\Theta_t \right).
\end{equation}
This completes the proof of Theorem \ref{thm:3}.
\end{proof}

\subsection{Geometric setup.} \label{sec:4.2}
We prove here the  geometric facts stated above. First, if $\Gamma$ is a nodal set, then we can find a function $\kappa$ satisfying \eqref{eq:7s} with $\kappa^{-1}(0) = \Gamma$. 

\begin{lemm}\label{lem:1f} If $\Gamma = \tkappa^{-1}(0)$ for a function $\tkappa \in C^\infty_b(\R^2)$  satisfying the transversality condition \eqref{eq:7z}, then we can find $\kappa \in C^\infty_b(\R^2)$ satisfying \eqref{eq:7s} such that
$\Gamma = \kappa^{-1}(0)$. 
\end{lemm}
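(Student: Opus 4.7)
The strategy is to construct $\kappa$ from the signed distance function to $\Gamma$, suitably truncated away from $\Gamma$. First, I would establish that $\Gamma$ enjoys bounded geometry: namely, that $\Gamma$ is a smooth $1$-submanifold of $\R^2$ with uniformly bounded curvature, and that there is a uniform $\delta_0>0$ such that the tubular neighborhood
\begin{equation}
U_{\delta_0} = \{x \in \R^2 : \operatorname{dist}(x,\Gamma) < \delta_0\}
\end{equation}
admits a smooth nearest-point projection onto $\Gamma$. This follows from $\tilde\kappa \in C^\infty_b(\R^2)$ together with the transversality condition \eqref{eq:7z}: the implicit function theorem applied uniformly (using $\inf_\Gamma |\nabla\tilde\kappa|>0$) gives local parametrizations of $\Gamma$ by arclength whose derivatives are controlled by norms of $\tilde\kappa$ in $C^\infty_b$.

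Next, I would define the signed distance $d(x) = \operatorname{sgn}\bigl(\tilde\kappa(x)\bigr)\cdot\operatorname{dist}(x,\Gamma)$ on $U_{\delta_0}$. Standard facts about signed distance functions in a bounded-geometry tubular neighborhood give that $d \in C^\infty(U_{\delta_0})$ with uniformly bounded derivatives, and
\begin{equation}
|\nabla d(x)| = 1, \qquad x \in U_{\delta_0}.
\end{equation}
Differentiating $|\nabla d|^2 \equiv 1$ yields $\nabla^2 d(x)\cdot \nabla d(x) = 0$ throughout $U_{\delta_0}$, and in particular along $\Gamma$. Thus $d$ already verifies the two pointwise conditions in \eqref{eq:7s} on $\Gamma$; the issue is only that $d$ is not globally defined, nor bounded.

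To globalize, I would pick a smooth odd function $\chi\colon \R\to \R$ with $\chi(t)=t$ for $|t|\le \delta_0/3$, $\chi(t)=\tfrac{2\delta_0}{3}\operatorname{sgn}(t)$ for $|t|\ge 2\delta_0/3$, and monotone in between. Then I would set
\begin{equation}
\kappa(x) = \begin{cases} \chi\bigl(d(x)\bigr), & x \in U_{\delta_0},\\[2pt] \tfrac{2\delta_0}{3}\operatorname{sgn}\bigl(\tilde\kappa(x)\bigr), & x \notin U_{\delta_0}. \end{cases}
\end{equation}
The two definitions agree and patch smoothly on the set $\{2\delta_0/3 \le |d(x)| < \delta_0\}$, where both sides equal $\pm\tfrac{2\delta_0}{3}$ with consistent signs. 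Since $\chi'(0)=1$ and $\chi''(0)=0$, on $\Gamma$ we have $\nabla\kappa=\nabla d$ and $\nabla^2\kappa = \nabla^2 d$, which yields $|\nabla\kappa|=1$ and $\nabla^2\kappa\cdot\nabla\kappa=0$, as required. Clearly $\kappa^{-1}(0)=\Gamma$.

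The main technical point — and the step I expect to require the most care — is checking that $\kappa\in C^\infty_b(\R^2)$. Inside $U_{\delta_0}$, derivatives of $\kappa=\chi\circ d$ are bounded by polynomials in derivatives of $d$ and $\chi$, both of which are uniformly bounded by the bounded-geometry argument above. Outside $U_{\delta_0}$, $\kappa$ is locally constant on each connected component of $\R^2\setminus U_{\delta_0}$, so all derivatives vanish. Smoothness across $\p U_{\delta_0}$ follows since $\chi\circ d$ is already constant there by construction. This establishes the required regularity and completes the proof.
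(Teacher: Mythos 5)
Your proof is correct, but it follows a genuinely different route from the paper's. The paper constructs $\kappa$ algebraically as a perturbation of $\tkappa$: it sets $\kappa = \tkappa(1 - \tfrac12 \tkappa \rho)$ for a carefully chosen $\rho \in C^\infty_b$, computes $\nabla\kappa$ and $\nabla^2\kappa$ on $\Gamma$ directly, picks $\rho$ to kill the single scalar obstruction $\lr{\nabla\kappa,\nabla^2\kappa\,\nabla\kappa}$, and then upgrades this to the full vector identity $\nabla^2\kappa\cdot\nabla\kappa = 0$ on $\Gamma$ by writing $|\nabla\kappa|^2 = 1 + \alpha\kappa$ and differentiating. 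Your approach instead takes $\kappa = \chi\circ d$ where $d$ is the signed distance to $\Gamma$, so that $|\nabla d| \equiv 1$ and $\nabla^2 d\cdot\nabla d \equiv 0$ hold on an entire tube around $\Gamma$; the verification of \eqref{eq:7s} is then immediate once you check $\chi'(0)=1$, $\chi''(0)=0$. What the paper's approach buys is self-containedness and explicitness: $\kappa$ is given by a closed formula involving $\tkappa$ and its derivatives, with no appeal to tubular-neighborhood machinery. What your approach buys is conceptual transparency and a slightly stronger conclusion (the eikonal identity and its differentiated form hold near $\Gamma$, not merely on it). The one place your write-up is thin is the bounded-geometry step: invoking the implicit function theorem uniformly gives control of curvature and local graph structure, but the existence of a uniform normal injectivity radius also requires ruling out distant sheets of $\Gamma$ coming extrinsically close. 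This does follow from the quantitative implicit function theorem — if $y_0\in\Gamma$ and $\p_{x_2}\tkappa(y_0)\geq c$ in a rotated frame, then $\tkappa$ is strictly monotone in $x_2$ on the ball $B\bigl(y_0, c/(2\|\nabla^2\tkappa\|_\infty)\bigr)$, so $\Gamma$ meets that ball in a single graph — but you should state this explicitly rather than fold it into "standard facts," since it is exactly the point where transversality and the $C^\infty_b$ bound are both essential.
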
 

\begin{proof}[Proof of Lemma \ref{lem:1f}] Without loss of generalities, we may assume that $|\nabla \tkappa(y)| = 1$ along $\Gamma$. We aim to construct $\rho \in C^\infty_b(\R^2)$ with $|\tkappa \rho|_\infty < 1$ such that if 
\begin{equation}\label{eq:7t}
    \kappa = \tkappa - \rho \dfrac{\tkappa^2}{2}=\tkappa \left(1-\frac { \tkappa\rho}2\right)
\end{equation}
then $\kappa$ satisfies \eqref{eq:7s}. Under the condition $|\kappa \rho|_\infty < 1$, $\kappa^{-1}(0) = \tkappa^{-1}(0) =\Gamma$. Moreover,
\begin{equation}
    \nabla \kappa = \nabla\tkappa \left(1 -  \rho\tkappa\right) - \dfrac{\tkappa^2}{2} \nabla\rho ;
\end{equation}
hence if $y\in \Gamma$ then $\nabla \kappa(y) = \nabla\tkappa(y)$. Also
\begin{equation}
    \nabla^2 \kappa = \nabla^2\tkappa (1-\rho\tkappa)
    - \rho  \nabla \tkappa \nabla\tkappa^\top
  - \tkappa \nabla \rho \nabla \kappa^\top - \tkappa  \nabla \tkappa \nabla\rho^\top     -  \dfrac{\tkappa^2}{2} \nabla^2\rho. 
\end{equation}
So, if $y\in\Gamma$ then $\nabla^2 \kappa (y) = \nabla^2\tkappa(y) -\rho(y)  \nabla \tkappa(y) \nabla\tkappa(y)^\top$. We deduce that for $y\in\Gamma$, 
\begin{align}
    \lr{\nabla \kappa(y), \nabla^2\kappa(y) \nabla \kappa(y) } & = \lr{\nabla \tkappa(y), \nabla^2\tkappa (y) \nabla \tkappa(y) } -\rho(y) \lr{\nabla \tkappa(y), \nabla \tkappa(y) \nabla\tkappa(y)^\top \nabla \tkappa(y) } 
    \\
    & = \lr{\nabla \tkappa(y), \nabla^2 \tkappa(y) \nabla \tkappa(y) } - \rho(y).
\end{align}
We now pick $\trho \in C^\infty(\R^2)$, such that  $\trho(y) = \lr{\nabla \tkappa(y), \nabla^2 \tkappa(y) \nabla \tkappa(y) }$ for $y \in \Gamma$. Then, with
\begin{equation}
   \rho(y) = \dfrac{\trho(y)}{1 + \trho(y)^2 \tkappa(y)^2}
\end{equation}
we still have $\rho(y) = \lr{\nabla \tkappa(y), \nabla^2 \tkappa(y) \nabla \tkappa(y) }$ for $y \in \Gamma$; $\rho \in C^\infty_b(\R^2)$; and finally,
\begin{equation}
    |\rho\tkappa| = \dfrac{|\trho \tkappa|}{1 + \trho^2 \tkappa^2} \leq \dfrac{1}{2}. 
\end{equation}
The function $\kappa$ given by \eqref{eq:7t} now satisfies the requirements of the lemma. Indeed, by construction we have for $y \in \Gamma$:
\begin{equation}\label{eq:0v}
\big|\nabla \kappa(y)\big| = \big|\nabla\tkappa(y)\big| = 1, \ \ \ \ 
    \lr{\nabla \kappa(y), \nabla^2\kappa(y) \nabla \kappa(y) } = 0.
\end{equation}
We can then write $|\nabla \kappa|^2 = 1 + \alpha \kappa$ for some smooth function $\alpha$. Taking the gradient on both sides produces the identity:
\begin{equation}
    2 \nabla^2 \kappa \cdot \nabla \kappa = \alpha \nabla \kappa+ \kappa \nabla \alpha.
\end{equation}
In particular, pairing with $\nabla \kappa^\perp$ gives
\begin{equation}
    2 \left \langle \nabla \kappa^\perp, \nabla^2 \kappa \cdot \nabla  \kappa\right \rangle = \kappa \left \langle  \nabla \kappa^\perp, \nabla \alpha \right \rangle.
\end{equation}
Specializing at $y \in \Gamma$ produces 
\begin{equation}
    \left \langle \nabla \kappa(y)^\perp, \nabla^2 \kappa(y) \cdot \nabla  \kappa(y)\right \rangle = 0,
\end{equation}
which together with the second identity of \eqref{eq:0v} yields $\nabla^2\kappa(y) \nabla \kappa(y) = 0$ when $y \in \Gamma$. \end{proof}

We now prove the useful relation~\eqref{eq:7x}.

\begin{proof}[Proof of \eqref{eq:7x}] We recall that $R_{\te_t}^\top e_1 = -\dot{y_t}=-\nabla \kappa(y_t)^\perp$ and $R_{\te_t}^\top e_2 = -\dot{y_t}^\perp=\nabla \kappa(y_t)$. Therefore, proving \eqref{eq:7x} boils down to showing
\begin{equation}\label{eq:7q}
    \lr{\dot{y_t}, \nabla^2 \kappa(y_t) \dot{y_t}} = \dot{\te_t}, \ \ \ \ \lr{\dot{y_t}, \nabla^2 \kappa(y_t) \dot{y_t}^\perp} = 0, \ \ \ \ \lr{\dot{y_t}^\perp, \nabla^2 \kappa(y_t) \dot{y_t}^\perp} = 0. 
\end{equation}
The last two identities are direct consequences of $\nabla^2\kappa(y) \nabla \kappa(y) = 0$ for $y \in \kappa^{-1}(0)$.
For the first identity in \eqref{eq:7q}, we note that
\begin{equation}
 \systeme{
 \cos(\te_t) = -\lr{\dot{y_t}, e_1} = \lr{\nabla \kappa(y_t),e_2} \\  
 \sin(\te_t) = -\lr{\dot{y_t}, e_2} = -\lr{\nabla \kappa(y_t),e_1}}.
\end{equation}
Taking time-derivatives, and the identity $\sin(\te_t)e_2 + \cos(\te_t) e_1 = -\dot{y_t}$, we deduce that
\begin{equation}
\systeme{     \dot{\te_t} \sin(\te_t) = -\lr{\nabla^2 \kappa(y_t) \dot{y_t},e_2} \\     \dot{\te_t} \cos(\te_t) = -\lr{\nabla^2 \kappa(y_t) \dot{y_t},e_1}} \ \ \ \Rightarrow \ \ \ \dot{\te_t} = \lr{\nabla^2 \kappa(y_t) \dot{y_t},\dot{y_t}}. 
\end{equation}
 This completes the proof of \eqref{eq:7x}. \end{proof}

\bibliographystyle{amsxport}
\bibliography{edgemode}

\end{document}